\newif\ifllncs
\newif\ifsoda
\sodafalse
\newif\ifhasappendix
\hasappendixfalse

\ifllncs
\hasappendixtrue
\fi

\ifllncs
\let\accentvec\vec  
\documentclass 
{llncs}
\let\vec\accentvec 
\pagestyle{plain} 

\else
\documentclass[a4paper,11pt]{article}
\fi

\usepackage{etex,etoolbox}



\usepackage[T1]{fontenc}

\usepackage[ngerman,american]{babel} 

\usepackage{amsmath}
\usepackage{amsfonts}
\usepackage{amssymb}
\usepackage{graphicx}
\usepackage{color}
\usepackage{tikz}

\usepackage{xspace}


\ifllncs
\newcommand{\scalefactor}{0.75}
\else
\newcommand{\scalefactor}{1}
\fi

\ifllncs

\spnewtheorem*{Proof}{Proof}{\itshape}{\rmfamily}
\renewenvironment{proof}{\begin{Proof}}{\qed\end{Proof}}
\else
\usepackage{fullpage}
\usepackage{amsthm}
\theoremstyle{plain}
\newtheorem{theorem}{Theorem}
\newtheorem{lemma}[theorem]{Lemma}
\newtheorem{claim}[theorem]{Claim}
\newtheorem{corollary}[theorem]{Corollary}
\theoremstyle{definition}

\fi

\ifhasappendix
\usepackage{multibib}
\newcites{app}{References for the Appendix}
\fi

\def\appcite#1{\ifhasappendix\citeapp{#1}\else\cite{#1}\fi}

\makeatletter
\providecommand{\@fourthoffour}[4]{#4}  
\ifllncs
\providecommand{\@motsacces}[6]{#3} 
\else
\providecommand{\@motsacces}[4]{#4} 
\fi 

\def\fixstatement#1{%
  \AtEndEnvironment{#1}{%
 
       \xdef\pat@label{\expandafter\expandafter\expandafter
            \@motsacces\csname#1\endcsname~\space \@currentlabel}}}
\globtoksblk\prooftoks{1000}
\newcounter{proofcount}

\long\def\proofatend#1\endproofatend{%
\ifhasappendix
\marginpar{\scriptsize{\vspace{-1cm}{\color{gray}{Proof in Appendix}}}}
  \edef\next{\noexpand \begin{proof}[Proof of \pat@label]
  }%
  \toks\numexpr\prooftoks+\value{proofcount}\relax=\expandafter{\next#1
  \end{proof}
  }
  \stepcounter{proofcount}
\else
\begin{proof}
#1
\end{proof}
\fi
}
\long\def\differingappendixstatement#1#2{
\ifhasappendix
\edef\next{}%
  \toks\numexpr\prooftoks+\value{proofcount}\relax=\expandafter{\next#2}
  \stepcounter{proofcount}
 \else
 #1
\fi
}

\long\def\hereorinappendixstatement#1{%
\ifhasappendix%
\edef\next{}%
  \toks\numexpr\prooftoks+\value{proofcount}\relax=\expandafter{\next#1}%
  \stepcounter{proofcount}%
\else%
#1%
\fi%
}

\def\printproofs{%
  \count@=\z@
  \loop
    \the\toks\numexpr\prooftoks+\count@\relax
    \ifnum\count@<\value{proofcount}%
    \advance\count@\@ne
  \repeat}
\makeatother
\fixstatement{theorem}
\fixstatement{lemma}
\fixstatement{corollary}
\fixstatement{definition}

\usepackage[colorlinks]{hyperref}

\definecolor{lightblue}{rgb}{0.5,0.5,1.0}
\definecolor{darkred}{rgb}{0.5,0,0}
\definecolor{darkgreen}{rgb}{0,0.5,0}
\definecolor{darkblue}{rgb}{0,0,0.5}

\hypersetup{colorlinks,linkcolor=darkred,filecolor=darkgreen,urlcolor=darkred,citecolor=darkblue}

\usepackage{algorithm}
\usepackage{algorithmic}
\newcommand{\ENSUREGAP}{\vspace{0.2cm}}
\definecolor{gray}{gray}{0.3}

\makeatletter
\providecommand*{\toclevel@algorithm}{0}
\makeatother


\tikzstyle{vertex}=[circle,fill=white,draw,very thick, minimum size=11pt,inner sep=1pt]
\tikzstyle{grayvertex}=[circle,fill=gray,draw,very thick, minimum size=11pt,inner sep=1pt]



\newcommand{\CClassNP}{\textup{NP}\xspace}

\newcommand{\NPhard}{\CClassNP-hard\xspace}

\newcommand{\BigO}{\mathcal{O}}

\DeclareMathOperator{\wl}{wl}

\DeclareMathOperator{\isoalgo}{ISO}

\DeclareMathOperator{\tdw}{tdw}
\DeclareMathOperator{\stw}{stw}

\DeclareMathOperator{\ctdw}{rctdw} 
\DeclareMathOperator{\rtdw}{rtdw} 
\DeclareMathOperator{\cstw}{cstw} 
\DeclareMathOperator{\isored}{Red}

\DeclareMathOperator{\kcon}{\equiv}
\DeclareMathOperator{\Orc}{Orc}



\title{Reduction Techniques for Graph Isomorphism in the Context of Width Parameters}
\ifllncs
\author{Yota Otachi\inst{1} \and Pascal Schweitzer\inst{2} 
}
\institute{Japan Advanced Institute of Science and Technology\\ School of Information Science,  {\tt otachi@jaist.ac.jp} \\ \and
 RWTH Aachen University, \\{\tt schweitzer@informatik.rwth-aachen.de} 
}
\else
\author{Yota Otachi and Pascal Schweitzer\thanks{This work is partially supported by Grant-in-Aid for Young
Scientists (B) 25730003, Japan Society for the Promotion of Science.}\\
\\ Japan Advanced Institute of Science and Technology\\ School of Information Science\\ 
Asahidai 1-1, Nomi, Ishikawa 923-1292, Japan\\
{\tt otachi@jaist.ac.jp}\\ \\
 RWTH Aachen University\\
 Ahornstra\ss{}e 55, 52074 Aachen, Germany
 \\{\tt schweitzer@informatik.rwth-aachen.de} 
}
\fi

\begin{document}

\maketitle

\begin{abstract}
We study the parameterized complexity of the graph isomorphism problem when parameterized by width parameters related to tree decompositions. We apply the following technique to obtain fixed-parameter tractability for such parameters. We first compute an isomorphism invariant set of potential bags for a decomposition and then apply a restricted version of the Weisfeiler-Lehman algorithm to solve isomorphism. With this we show fixed-parameter tractability for several parameters and provide a unified explanation for various isomorphism results concerned with parameters related to tree decompositions.

As a possibly first step towards intractability results for parameterized graph isomorphism we develop an fpt Turing-reduction from  strong tree width to the a priori unrelated parameter maximum degree.
\end{abstract}

\section{Introduction}

The graph isomorphism problem is the algorithmic task to decide whether two given graphs are isomorphic, i.e., whether there exists a bijection from the vertices of one graph to the vertices of the other graph preserving adjacency and non-adjacency. The problem is situated in the complexity class \CClassNP. However, despite extensive research on this problem, the complexity remains unknown. It is neither known whether the problem is polynomial-time solvable nor whether it is \NPhard. 

In this paper, we are interested in the parameterized complexity of the isomorphism problem. For other aspects related to the isomorphism problem we refer the reader to other sources (e.g.,~\cite{KoeblerSchoeningToran:1993},~\cite{SchweitzerThesis},~\cite{Toda:2001:GIDP}).

In the parameterized context, for a graph parameter~$k$, such as the maximum degree of the input graphs, we ask for an algorithm that solves isomorphism of graphs with parameter at most~$k$. In this context, we are interested in the existence of algorithms with a running time of~$O(f(k) n^c)$ for some constant~$c\in \mathbb{N}$ in contrast to algorithms with a running time of~$O(n^{f(k)})$. Running times of the former type are called fpt time and the algorithms are said to be fixed-parameter tractable algorithms.

\emph{Related work.} There are various results that show that isomorphism is fixed-parameter tractable with respect to some parameter. Such results exist for the parameters color multiplicity~\cite{DBLP:conf/focs/FurstHL80} (also known for hypergraphs~\cite{DBLP:conf/fsttcs/ArvindDKT10}), eigenvalue multiplicity~\cite{Evdokimov}, rooted distance width~\cite{DBLP:journals/algorithmica/YamazakiBFT99}, feedback vertex set number~\cite{DBLP:conf/swat/KratschS10}, bounded permutation distance~\cite{DBLP:conf/esa/Schweitzer11}, tree-depth~\cite{DBLP:conf/iwpec/BoulandDK12} and connected path distance width~\cite{Otachi2012ISAAC}.
For chordal graphs, tractability results are known for the parameters clique number~\cite{doi:10.1137/0603025},~\cite{DBLP:conf/isaac/Nagoya01} and the size of simplicial components~\cite{DBLP:journals/ieicet/Toda06}. Yet, for many parameters, such as maximum degree, tree width and genus, it is not known whether there exist fixed-parameter tractable algorithms solving  isomorphism (see~\cite{DBLP:conf/swat/KratschS10}).
However, no non-tractability results are known. One of the obstacles to understanding the parameterized complexity of graph isomorphism is the uncertainty whether the standard reduction techniques, like showing W[1]-hardness, can be applied (see the discussions in~\cite{DBLP:conf/swat/KratschS10} and~\cite{DBLP:journals/algorithmica/YamazakiBFT99}).

\emph{Our results.}  We study the parameterized complexity of isomorphism with respect to various parameters related to strong tree decompositions.
We first develop a method to obtain fixed-parameter tractable algorithms for parameterized graph isomorphism problems.
The underlying technique of many results showing such results is to first find a restricted isomorphism invariant family of sets, potential bags, which capture a tree decompositions and to then use these to perform an isomorphism test that uses some form of dynamic programming.
It turns out that it is possible to prove that this technique is applicable in general. To prove this general statement, we develop a restricted version of the Weisfeiler-Lehman color refinement algorithm and prove that it successfully decides isomorphism whenever an invariant family of potential bags capturing tree decompositions is available for the input graphs. The algorithm neither computes a  decomposition nor is it required that a decomposition is given to the algorithm.

Using the technique, we show tractability of graph isomorphism for the parameters
root-connected tree distance width and connected strong tree width.  We also provide families of examples showing that neither of the two graph parameters mentioned in the theorem can be bounded by a function of the other. The two tractability results extend results in~\cite{fuhlbrueck},~\cite{Otachi2012ISAAC}, and~\cite{DBLP:journals/algorithmica/YamazakiBFT99} also concerned with restricted forms of strong tree decompositions, also answering a question from~\cite{DBLP:journals/algorithmica/YamazakiBFT99}.
Furthermore, with the technique, it is for example also possible to show that graph isomorphism parameterized by the maximum of the length of a longest geodesic cycle and strong tree width or by the maximum of the chordality and degree is fixed-parameter tractable. 

In general, our technique provides a unified explanation for the various results~\cite{DBLP:conf/iwpec/BoulandDK12,fuhlbrueck,doi:10.1137/0603025,DBLP:conf/swat/KratschS10,DBLP:conf/isaac/Nagoya01,Otachi2012ISAAC,DBLP:journals/ieicet/Toda06,DBLP:journals/algorithmica/YamazakiBFT99} all showing that certain restrictions on tree decompositions lead to efficient algorithms for the isomorphism problem. Indeed, all of these approaches can be interpreted as 
determining some restricted family of potential bags capturing a tree decomposition
and then performing some form of dynamic programming to check for isomorphism, that can also be performed by the restricted Weisfeiler-Lehman algorithm. In
each of the references above, 
the dynamic programming is a substantial part of the argumentation, which can  
now be replaced by the general theorem.

Finally, we show how the technique can be applied to obtain parameterized isomorphism algorithm by exploiting knowledge on the set of potential maximal cliques, of which we already know that it can always be computed in polynomial time in the number of potential maximal cliques.

Our technique also provides a proof of the fact that for graphs of bounded tree width a sufficiently high-dimensional Weisfeiler-Lehman algorithm can be used to determine isomorphism. This fact was first proven by Grohe and Mari{\~n}o using logic~\cite{DBLP:conf/icdt/GroheM99} (see also~\cite{DBLP:conf/lics/Grohe10}) and provides to date the fastest running time for isomorphism of bounded tree width graphs. Our proof provides a direct argument for this fact, which does not involve logic. We remark that in his book, Toda~\cite{Toda:2001:GIDP} also gives a dynamic programming algorithm matching the running time of the algorithm of Grohe and Mari{\~n}o.

In this paper, we also take a first step towards developing means for some form of intractability result. Specifically, for the isomorphism problem we construct an fpt Turing reduction from  strong tree width to the a priori unrelated parameter maximum degree. The existence of this reduction in particular implies that if graph isomorphism is fixed-parameter tractable when parameterized by degree then it is also fixed-parameter tractable when parameterized by strong tree width. 
However, a possibly better interpretation of this result is that isomorphism parameterized by degree is hard, being at least as intractable as isomorphism parameterized by strong tree width.

To obtain the reduction, we reduce the problem to biconnected components, a technique frequently used for isomorphism algorithms concerned with planar graphs (see~\cite{DBLP:conf/coco/DattaLNTW09}). However, we require an extended form of such a reduction allowing us to work with graphs equipped with an equivalence relation and equipped with a coloring of the linear orders of the equivalence classes.

\section{Preliminaries}
In this paper all graphs are finite, simple, undirected graphs. A \emph{biconnected component} (also called a block) is a maximal connected subgraph not containing a cut-vertex. In particular, the connected graph on 2 vertices is biconnected.

A \emph{strong tree decomposition} of a graph $G = (V,E)$
is a pair $(\{X_{i} \mid i \in I\}, T = (I,F))$ where~$\{X_{i} \mid i \in I\}$ is a partition of the vertex set~$V$ into so-called bags~$X_i$ and~$T=(I,F)$ is a tree such that the following holds: for all edges $\{u,v\} \in E$, either there is $i \in I$ with $u,v \in X_{i}$, or there are two adjacent tree vertices $i,i' \in I$ such that $u \in X_{i}$ and $v \in X_{i'}$.
A \emph{connected strong tree decomposition} is a strong tree decomposition for which each bag~$X_{i}$ induces a connected subgraph. The \emph{width} of a strong tree decomposition is the maximum size of a bag of the decomposition.

A strong tree decomposition $(\{X_{i} \mid i \in I\}, T = (I,F))$ with a distinguished root~$r \in I$
is a \emph{tree distance decomposition} if each $v \in X_{i}$ with $i \ne r$ has a neighbor $u \in X_{j}$
where $j$ is the parent of $i$ in $T$ rooted at $r$.
A tree distance decomposition with root $r$ is a \emph{root-connected tree distance decomposition} if $X_{r}$ induces a connected subgraph.

Here, we slightly diverge from the terminology used in~\cite{Otachi2012ISAAC} to highlight the fact that only the root set must induce a connected graph, and thereby avoid confusion with the term connected strong tree decomposition.

For a class of decompositions~$\mathcal{C}$, the $\mathcal{C}$ width of a graph~$G$ is the minimal width over all~$\mathcal{C}$ decompositions of~$G$. We thus obtain the 
graph parameters strong tree width, denoted~$\stw(G)$, connected strong tree width, denoted~$\cstw(G)$, tree distance width, denoted~$\tdw(G)$ and root-connected tree distance width, denoted $\ctdw(G)$.
The notion of strong tree width was introduced by Seese~\cite{Seese:1985:TGC:647892.739763} and is also known as \emph{tree-partition width}~\cite{DBLP:journals/dm/DingO96}. 
In the context of graph isomorphism, tree distance decompositions were first considered in~\cite{DBLP:journals/algorithmica/YamazakiBFT99}. 

For a graph~$G$, there may be several tree distance decompositions with the same root set~$S$.
However, there is a unique minimal decomposition (i.e., the partition into bags at least as fine as any other partition into bags obtained from a tree distance decomposition) with root set~$S$. Given~$S$, this minimal decomposition can be computed linear time. 

\begin{theorem}[{\cite[Theorem 2.1]{DBLP:journals/algorithmica/YamazakiBFT99}}]\label{thm:minimal:dist:decomp} Given a graph~$G$ and a set~$S$, one can compute in~$O(m)$ time the unique tree distance decomposition with root set~$S$.
\end{theorem}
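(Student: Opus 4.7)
The plan is first to pin down where each vertex must sit in any tree distance decomposition with root bag $S$, and then to compute the finest admissible partition layer by layer. First, I would show that in any such decomposition each vertex $v$ lies in a bag at tree-distance exactly $d_G(v,S)$ from $X_r$. Equality follows from two observations: the parent-neighbor condition, iterated from $v$'s bag to $X_r$, yields a $v$-to-$S$ path in $G$ of length equal to the tree-distance; conversely, any shortest $v$-to-$S$ path in $G$ projects (via bag membership of its vertices) to a walk in $T$ of the same length, since every edge lies within a bag or between adjacent bags. Hence each bag must be a subset of a BFS layer $L_d = \{u : d_G(u,S) = d\}$.

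Next I would characterize the merging constraints inside a layer. Two vertices $v_1, v_2 \in L_d$ with $\{v_1, v_2\} \in E$ must share a bag, since sibling bags in a tree-partition cannot be joined by an edge. Moreover, whenever two vertices $w_1, w_2 \in L_{d+1}$ lie in a common bag, all of their $L_d$-neighbors must lie in the unique parent bag, forcing these neighbors together. This suggests the following algorithm: run BFS from $S$ to obtain $L_0, L_1, \ldots, L_k$, and then process layers from deepest to shallowest; for each $d$, build an auxiliary graph $H_d$ on $L_d$ whose edges are (i) the edges of $G[L_d]$ and (ii) for each already-computed bag $B \subseteq L_{d+1}$ the edges of a spanning star on $N(B) \cap L_d$, and take the connected components of $H_d$ as the bags of layer $d$. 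The parent of a bag $B' \subseteq L_d$ is then the unique bag in $L_{d-1}$ containing $N(B') \cap L_{d-1}$, which is well-defined precisely because of how $H_{d-1}$ is built.

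Correctness amounts to verifying that the resulting partition together with the parent pointers is a valid tree distance decomposition (immediate from the construction), and that it is the finest one. For the latter I would argue by induction from the deepest layer upward: splitting a bag $B \subseteq L_d$ into $B_1 \cup B_2$ would require the absence of any $G[L_d]$-edge between $B_1$ and $B_2$ and of any bag in $L_{d+1}$ with neighbors in both, which is exactly the condition that $B_1$ and $B_2$ lie in different components of $H_d$; uniqueness then follows because every coarser admissible partition is obtained by merging some of these finest bags.

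For the $O(m)$ time bound I would note that the total size of all auxiliary graphs is $\sum_{d} (|E(G[L_d])| + |E(L_d, L_{d+1})|) = O(m)$, since each spanning star on $N(B) \cap L_d$ contributes only $O(|N(B) \cap L_d|)$ edges; computing the connected components of every $H_d$ by plain BFS/DFS therefore suffices and no union-find is needed. The main obstacle will be the minimality argument: one must show that the edge set given by (i) and (ii) captures exactly the forced merges and no more, which requires tracking how any hypothetical further refinement cascades up the tree and eventually violates the parent-neighbor axiom in some layer.
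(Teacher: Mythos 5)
The paper itself offers no proof of this statement: it is quoted directly from Yamazaki, Bodlaender, de~Fluiter and Thilikos (Theorem~2.1 of the cited reference), so there is no in-paper argument to compare yours against. Your sketch is essentially the standard argument behind that theorem: the depth-equals-BFS-distance observation, the two forced-merge rules (same-layer edges, and vertices sharing a bag one layer deeper forcing their upper neighbourhoods into one parent bag), and the bottom-up component computation produce exactly the bags $C \cap L_d$ where $C$ ranges over connected components of the subgraph induced by vertices at distance at least $d$ from $S$, and your accounting for the $O(m)$ bound (star edges charged to edges between consecutive layers, plain BFS/DFS per auxiliary graph) is correct.

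Two points need tightening. First, layer $0$ must not be partitioned into components of $H_0$: by definition the root bag is $S$ itself, so you should simply set $X_r = S$; and if BFS from $S$ does not reach every vertex you must report that no tree distance decomposition with root set $S$ exists, matching the paper's convention $\tdw_S(G)=\infty$. Second, your minimality argument as phrased only shows that the constructed decomposition cannot be \emph{refined}, which by itself does not exclude an incomparable admissible partition; what is needed is that in an \emph{arbitrary} tree distance decomposition with root set $S$, every constructed bag lies inside a single bag. This follows by induction from the deepest layer upward using precisely your constraints (i) and (ii) applied to that decomposition: if a constructed bag $B\subseteq L_{d+1}$ is contained in some bag of the other decomposition, then all of $N(B)\cap L_d$ lies in that bag's parent, so every $H_d$-edge forces a merge there as well. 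You already have all the ingredients for this; only the final assembly needs to be restated in this form.
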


 We denote the width of this decomposition by~$\tdw_S(G)$.
Note that if~$G$ is not connected, it may be the case that there is no tree distance decomposition with root set~$S$.
To facilitate our proofs and simplify algorithms, we define~$\tdw_S(G)$ to be infinite in this case.

For a graph~$G$ with distinct non-adjacent vertices~$s$ and~$t$ an~\emph{$s$-$t$-separator} is a set of vertices~$S$ such that~$s$ and~$t$ are in different components of~$G-S$. An $s$-$t$ separator is minimal if no proper subset of~$S$ is an~$s$-$t$-separator.

An \emph{fpt Turing reduction} (see \cite{Flum2006}) of a parameterized problem~$P_1$ with parameter~$k_1$ to a parameterized problem~$P_2$ with parameter~$k_2$ is a Turing reduction from~$P_1$ to~$P_2$ with fpt running time for which the parameter~$k_2$ of all oracle calls to the problem~$P_2$ is bounded by a computable function in terms of~$k_1$. In other words, a turing reduction is an fpt-algorithm solving the parameterized problem~$P_1$ with the help of an oracle that solves problem~$P_2$ such that there exists a computable function~$g$ such that for all oracle queries~$y\in P_2$ posed on an input~$x$ with parameter~$k_1$ it holds that the parameter~$k_2$ of~$y$ is at most~$g(k_1)$.

Suppose we assign to every graph~$G$ a subset of the vertices~$\mathcal{V}(G) \subseteq V(G)$. We say this assignment is \emph{isomorphism invariant} if for every isomorphism~$\pi \colon G_1 \rightarrow G_2$ we have~$\mathcal{V}(G_2) = \pi(\mathcal{V}(G_1))$. This definition extends to assignments of tuples or sets of vertex sets and also to colored graphs.

\section{Tree decompositions and the Weisfeiler-Lehman algorithm}\label{sec:weisfeiler:lehman:things}
\differingappendixstatement{}{\section{Proofs of Section~\ref{sec:weisfeiler:lehman:things}}}

In the graph isomorphism literature, for various graph classes, results are known showing that the Weisfeiler-Lehman algorithm yields polynomial time isomorphism algorithms (see \cite{DBLP:conf/lics/Grohe10}). In this section we describe a restricted version of the Weisfeiler-Lehman algorithm and show that it can be used to obtain fixed-parameter tractability results. Intuitively, the~$k$-dimensional Weisfeiler-Lehman algorithm repeatedly recolors~$k$-tuples of vertices by assigning them a color that depends on the multiset of previous colors of adjacent~$k$-tuples, where tuples are adjacent if they differ by at most one entry. Our restricted version of the algorithm performs this recoloring operation only on a restricted set of~$k$-tuples. For more information on the standard Weisfeiler-Lehman algorithm we refer the reader to existing literature (see~\cite{DBLP:conf/esa/BerkholzBG13} and~\cite{SchweitzerThesis} for more pointers).

For~$k\geq 2$ we now define the \emph{restricted~$k$-dimensional Weisfeiler-Lehman color refinement}. We say a family of sets~$\mathcal{V}$ has \emph{width~$k$} if the largest set in~$\mathcal{V}$ has size~$k$.
Let~$G$ be a graph and~$\mathcal{V}$ be a family of sets of vertices of~$G$ of width at most~$k'$. Let~$\mathcal{V}^+$ be the set of~$k$-tuples~$(v_1,\ldots,v_{k})$ (with entries not necessarily distinct) for which~$\{v_1,\ldots,v_{k'}\}$ is in~$\mathcal{V}$. For every $k$-tuple $(v_1,\ldots,v_k)$ in~$\mathcal{V}^+$ we define $\wl^k_0[\mathcal{V},G](v_1,\ldots,v_k)$ as the isomorphism type of the subgraph induced by the ordered tuple $(v_1,\ldots,v_k)$. 
If the graph is colored then the isomorphism type has to take the coloring into account. More precisely, the coloring~$\wl^k_0$ is a coloring that satisfies~$\wl^k_0[\mathcal{V},G](v_1,\ldots,v_k) = \wl^k_0[\mathcal{V},G](v'_1,\ldots,v'_k)$ if and only if we can map~$v_i$ to~$v'_i$ and obtain an isomorphism of the colored graphs induced by~$\{v_1,\ldots,v_k\}$ and~$\{v'_1,\ldots,v'_k\}$.
If~$(v_1,\ldots,v_{k}) \notin \mathcal{V}^+$ then we define $\wl^k_0[\mathcal{V},G](v_1,\ldots,v_k)$  to be the empty set~$\emptyset$.

Iteratively for $i\geq 0$, we define $\wl^k_{i+1}[\mathcal{V},G](v_1,\ldots,v_k)$ to be the empty set~$\emptyset$ if~$(v_1,\ldots,v_{k'}) \notin \mathcal{V}^+$ and to be~$\big(\wl^k_i[\mathcal{V},G](v_1,\ldots,v_k), {\mathcal{M}}^k_i\big)$ otherwise, where ${\mathcal{M}}^k_i$ is the multiset given by
\[ {\mathcal{M}}^k_i:=  \big\{\!\!\big\{(\wl^k_i[\mathcal{V},G](x,v_2,\ldots,v_k) , \wl^k_i[\mathcal{V},G](v_1,x,v_3,\ldots,v_k),
\ldots,\]\[ \wl^k_i[\mathcal{V},G](v_1,\ldots,v_{k-1},x)) \mid x\in V(G)     \big\}\!\!\big\}.\]
The process partitions the ordered~$k$-tuples into classes according to their color. Since in each iteration the color of the previous iteration is encoded in the new color,~$k$-tuples which are assigned different colors will continue to have different colors in all subsequent iterations. Therefore the refinement process stabilizes. We define $\wl^k_\infty[\mathcal{V},G](v_1,v_2,\ldots,v_k)$ as $\wl^k_i[\mathcal{V},G](v_1,v_2,\ldots,v_k)$ where $i$ is the least positive integer such that the induced partition in step $i$ is equivalent to the induced partition in step~$i+1$.
Abusing notation, we may drop the specifications~$[\mathcal{V},G]$ whenever they are apparent from the context.

\hereorinappendixstatement{
\ifhasappendix{The following lemma is required for the proof of Lemma~\ref{lem:cmp:time:of:res:wl}}\fi
\begin{lemma}\label{lem:props:of:wl}
Suppose for~$i\in \{1,2\}$ we are given a graph $G_i$ and a family of subsets of the vertices~${\mathcal{V}}_i$ of width~$k'
\leq k$. For the restricted~$k$-dimensional Weisfeiler-Lehman color refinement the following properties hold:
\begin{enumerate}
\item \label{first:item:of:lem:props:wl} If~$\wl^k_\infty[\mathcal{V}_1,G_1](v_1,\ldots, v_k) = \wl^k_\infty[\mathcal{V}_2,G_2](v'_1,\ldots,v'_{k})\neq \emptyset$ holds then also for all~$j_1,j_2\in \{1,\ldots,k\}$ it holds that 

\ifllncs
\[\begin{array}{lc} \wl^k_\infty[\mathcal{V}_1,G_1](v_1,\ldots,v_{j_1-1},v_{j_2},v_{j_1+1},\ldots, v_k) &= \\ \wl^k_\infty[\mathcal{V}_2,G_2](v'_1,\ldots,v'_{j_1-1},v'_{j_2},v'_{j_1+1},\ldots,v'_{k}).\end{array}\]
\else
\[ \wl^k_\infty[\mathcal{V}_1,G_1](v_1,\ldots,v_{j_1-1},v_{j_2},v_{j_1+1},\ldots, v_k) \!=\! \wl^k_\infty[\mathcal{V}_2,G_2](v'_1,\ldots,v'_{j_1-1},v'_{j_2},v'_{j_1+1},\ldots,v'_{k}).\]
\fi
\item \label{second:item:of:lem:props:wl} If~$\wl^k_\infty[\mathcal{V}_1,G_1](v_1,\ldots, v_k) = \wl^k_\infty[\mathcal{V}_2,G_2](v'_1,\ldots,v'_{k})\neq \emptyset$ then for all indices~$j_1,j_2 > k'$, if~$v_{j_1}$ and~$v_{j_2}$ are contained in the same connected component of~$G_1 - \{v_1,\ldots,v_{k'}\}$ then~$v'_{j_1}$ and~$v'_{j_2}$ are contained in the same connected component of~$G_2 - \{v'_1,\ldots,v'_{k'}\}$.
\end{enumerate}

\end{lemma}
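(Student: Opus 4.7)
The plan is to use the stability of the restricted $k$-dimensional Weisfeiler-Lehman coloring at its fixed point: whenever $\wl^k_\infty(\bar v) = \wl^k_\infty(\bar v')$, the corresponding refinement multisets $\mathcal{M}^k_\infty(\bar v)$ and $\mathcal{M}^k_\infty(\bar v')$ must be equal, because otherwise a further refinement step would split the color class and contradict stability. Both items will be derived by picking a specific contribution in these multisets and transferring information through the equality, using throughout that $\wl^k_\infty$ refines $\wl^k_0$, so in particular the ordered isomorphism type of the induced subgraph (and its $\mathcal{V}$-membership) is preserved under color equality.

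For Item~1, I would isolate the contribution of $x = v_{j_2}$ to $\mathcal{M}^k_\infty(\bar v)$. This contribution is a $k$-tuple of colors whose $j_1$-th entry is $\wl^k_\infty(\bar u)$, where $\bar u := (v_1,\ldots,v_{j_1-1},v_{j_2},v_{j_1+1},\ldots,v_k)$. By multiset equality some $y\in V(G_2)$ contributes an identical tuple on the right, so that $\wl^k_\infty(\bar v'\mid j_1\leftarrow y) = \wl^k_\infty(\bar u)$. Assume first $\wl^k_\infty(\bar u)\neq\emptyset$: then the common $\wl^k_0$ type records that positions $j_1$ and $j_2$ of $\bar u$ are occupied by the same vertex $v_{j_2}$, and the same equality must hold in $\bar v'\mid j_1\leftarrow y$, forcing $y=v'_{j_2}$; hence $\wl^k_\infty(\bar u) = \wl^k_\infty(\bar u')$. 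The case $\wl^k_\infty(\bar u)=\emptyset$ is handled by a symmetric contradiction: if $\wl^k_\infty(\bar u')\neq\emptyset$ the mirror argument applied to the right side (starting from the multiset entry $x' = v'_{j_2}$) produces $\wl^k_\infty(\bar u') = \wl^k_\infty(\bar u) = \emptyset$, which is impossible, so both colors are empty and the claimed equality still holds.

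For Item~2, I would induct on the length $l$ of a shortest path from $v_{j_1}$ to $v_{j_2}$ in $G_1 - \{v_1,\ldots,v_{k'}\}$. The base case $l\leq 1$ follows from $\wl^k_0(\bar v)=\wl^k_0(\bar v')$: adjacency (or equality) of the vertices at positions $j_1$ and $j_2$ together with their distinctness from the vertices at positions $1,\ldots,k'$ are recorded by the iso type and transfer to $\bar v'$. For the inductive step, fix a shortest path $v_{j_1}=u_0,u_1,\ldots,u_l=v_{j_2}$ and use multiset equality to find a single vertex $y\in V(G_2)$ whose contribution to $\mathcal{M}^k_\infty(\bar v')$ matches that of $x=u_1$ on the left. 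The $j_2$-th coordinate of this match yields $\wl^k_\infty(\bar v\mid j_2\leftarrow u_1) = \wl^k_\infty(\bar v'\mid j_2\leftarrow y)$, whose $\wl^k_0$ type encodes the edge $v_{j_1}\sim u_1$ and the fact that $u_1\notin\{v_1,\ldots,v_{k'}\}$; this transfers to give $v'_{j_1}\sim y$ with $y\notin\{v'_1,\ldots,v'_{k'}\}$. The $j_1$-th coordinate yields $\wl^k_\infty(\bar v\mid j_1\leftarrow u_1) = \wl^k_\infty(\bar v'\mid j_1\leftarrow y)$; both tuples still have non-empty color because $j_1>k'$ keeps the first $k'$ entries unchanged, and the induction hypothesis applied to them with the length-$(l-1)$ subpath $u_1,\ldots,u_l$ produces a path from $y$ to $v'_{j_2}$ in $G_2 - \{v'_1,\ldots,v'_{k'}\}$. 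Prepending the edge $v'_{j_1}\sim y$ yields the required path.

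The subtle point I anticipate in Item~1 is the empty case, where the iso type cannot directly pin down $y=v'_{j_2}$; the symmetric swap-and-contradict move is essential there. In Item~2, the key property one must leverage is that the refinement multiset furnishes a \emph{single} witness $y$ matching all $k$ coordinates of the profile of $u_1$ simultaneously, so that the adjacency $v'_{j_1}\sim y$ read off at position $j_2$ and the inductive path from $y$ to $v'_{j_2}$ obtained at position $j_1$ share the very same vertex $y$; without this single-witness feature, the edge and the inductive path would fail to concatenate into the desired path in $G_2 - \{v'_1,\ldots,v'_{k'}\}$.
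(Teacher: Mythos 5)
Your proposal is correct and follows essentially the same route as the paper's proof: Part~1 via the single-witness matching of the stable refinement multisets together with the fact that $\wl^k_0$ encodes the ordered isomorphism type (hence the equality pattern forcing $y=v'_{j_2}$), and Part~2 by induction on the path length using one common witness $y$ for all $k$ coordinates. The only differences are cosmetic: you run the induction in the direct form from the start vertex of the path while the paper argues the contrapositive from its end vertex, and you explicitly treat the empty-color case in Part~1, which the paper passes over silently.
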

\begin{proof}
(Part \ref{first:item:of:lem:props:wl}.) If~$\wl^k_\infty[\mathcal{V}_1,G_1](v_1,\ldots, v_k) = \wl^k_\infty[\mathcal{V}_2,G_2](v'_1,\ldots,v'_{k})\neq \emptyset$ then by definition of the stable refinement, 
there exists a vertex~$x\in V_2$ such that the equation 
\begin{center}
\arraycolsep=0.1pt
\ifllncs
$\begin{array}{llcrrllcrr}
\big(\wl^k_{\infty}[\mathcal{V}_1,G_1]& ( & v_{j_2} &,v_2,\ldots,v_k),&  \ldots, &\wl^k_{\infty}[\mathcal{V}_1,G_1]&(v_1,\ldots,v_{k-1}, & v_{j_2} & )\big)& =\\ 
\big(\wl^k_{\infty}[\mathcal{V}_2,G_2]&  ( & x &,v'_2,\ldots,v'_k),&  \ldots,& \wl^k_{\infty}[\mathcal{V}_2,G_2]&(v'_1,\ldots,v'_{k-1}, & x & )\big)
\end{array}$
\else
$\begin{array}{llcrllcrrllcrr}
\big(\wl^k_{\infty}[\mathcal{V}_1,G_1]& ( & v_{j_2} &,v_2,\ldots,v_k),& \wl^k_{\infty}[\mathcal{V}_1,G_1]&(v_1, & v_{j_2} & ,v_3,\ldots,v_k),& \ldots, &\wl^k_{\infty}[\mathcal{V}_1,G_1]&(v_1,\ldots,v_{k-1}, & v_{j_2} & )\big)& =\\ 
\big(\wl^k_{\infty}[\mathcal{V}_2,G_2]&  ( & x &,v'_2,\ldots,v'_k),& \wl^k_{\infty}[\mathcal{V}_2,G_2]&      (v'_1, & x & ,v'_3,\ldots,v'_k),& \ldots,& \wl^k_{\infty}[\mathcal{V}_2,G_2]&(v'_1,\ldots,v'_{k-1}, & x & )\big)
\end{array}$
\fi
\end{center}
holds. Since~$\wl^k_0$ in particular encodes the isomorphism type of the graph induced by its entries, we conclude that~$x = v'_{j_2}$, which proves the first part of the lemma.
\medskip \\
(Part \ref{second:item:of:lem:props:wl}.) We show the following statement by induction on~$t$. If there is a path from~$v_{j_1}$ to~$v_{j_2}$ in~$G_1 - \{v_1,\ldots,v_{k'}\}$ of length at most~$t$, but there is no path of length at most~$t$ from~$v'_{j_1}$ to~$v'_{j_2}$ in~$G_2 - \{v'_1,\ldots,v'_{k'}\}$, then~$\wl^k_\infty[\mathcal{V}_1,G_1](v_1,\ldots, v_k) \neq \wl^k_\infty[\mathcal{V}_2,G_2](v'_1,\ldots,v'_{k})$.
 For~$t= 0$, if there is said path of length~$0$ from~$v_{j_1}$ to~$v_{j_2}$ this implies that~$v_{j_1} = v_{j_2}$ and that~$v_{j_1}\notin \{v_1,\ldots,v_{k'}\}$.

 If there is no path of length~0 from~$v'_{j_1}$ to~$v'_{j_2}$ in~$G_2 - \{v'_1,\ldots,v'_{k'}\}$ then~$v'_{j_1}\neq v'_{j_2}$ or~$v'_{j_1} \in \{v'_1,\ldots,v'_{k'}\}$. Either way, we have that~$\wl^k_\infty[\mathcal{V}_1,G_1](v_1,\ldots, v_k) \neq \wl^k_\infty[\mathcal{V}_2,G_2](v'_1,\ldots,v'_{k})$ since \ifllncs \else the initial coloring\fi~$\wl^k_0$ encodes the isomorphism type of its entries.
 
 Suppose now the statement has been shown for length~$t'< t$ and suppose a shortest path from~$v_{j_1}$ to~$v_{j_2}$ in~$G_1 - \{v_1,\ldots,v_{k'}\}$ is of length~$t>0$. If~$v'_{j_1} \in \{v'_1,\ldots,v'_{k'}\}$ or~$v'_{j_1} \in \{v'_1,\ldots,v'_{k'}\}$ the statement follows as in the induction base since~$\wl^k_0$ encodes the isomorphism type of its entries.
 Also note that~$t>0$ implies~$v_{j_1}\neq v_{j_2}$.
Let~$x$ be a vertex in~$G_1 - \{v_1,\ldots,v_{k'}\}$ which is adjacent to~$v_{j_2}$ and of distance~$t-1$ from~$v_{j_1}$. For every vertex~$y$, either~$v'_{j_2}$ is not adjacent to~$y$ or there is no path of length at most~$t-1$ from~$v'_{j_1}$ to~$y$ in~$G_2 - \{v'_1,\ldots,v'_{k'}\}$. Thus, by induction, we conclude that it cannot be simultaneously the case that
\ifllncs
\[\begin{array}{l}\wl^k_\infty[\mathcal{V}_1,G_1](v_1,\ldots,v_{j_1 -1}, x,v_{j_1 +1} , v_k) =\\ \wl^k_\infty[\mathcal{V}_2,G_2](v'_1,\ldots,v'_{j_1 -1}, y,v'_{j_1 +1} , v'_k)\end{array}\] and also that \[\begin{array}{l}\wl^k_\infty[\mathcal{V}_1,G_1](v_1,\ldots,v_{j_2 -1}, x,v_{j_2 +1} , v_k) =\\\wl^k_\infty[\mathcal{V}_2,G_2](v'_1,\ldots,v'_{j_2 -1}, y,v'_{j_2 +1} , v'_k).\end{array}\] 
\else
$\wl^k_\infty[\mathcal{V}_1,G_1](v_1,\ldots,v_{j_1 -1}, x,v_{j_1 +1} , v_k) =\wl^k_\infty[\mathcal{V}_2,G_2](v'_1,\ldots,v'_{j_1 -1}, y,v'_{j_1 +1} , v'_k)$ and also that~$\wl^k_\infty[\mathcal{V}_1,G_1](v_1,\ldots,v_{j_2 -1}, x,v_{j_2 +1} , v_k) =\wl^k_\infty[\mathcal{V}_2,G_2](v'_1,\ldots,v'_{j_2 -1}, y,v'_{j_2 +1} , v'_k)$. 
\fi
This shows that~$\wl^k_\infty[\mathcal{V}_1,G_1](v_1,\ldots, v_k) \neq \wl^k_\infty[\mathcal{V}_2,G_2](v'_1,\ldots,v'_{k})$. 
 \end{proof}
}

\begin{lemma}\label{lem:cmp:time:of:res:wl}
For a graph~$G$ and a family~$\mathcal{V}$ sets of vertices of~$G$ of width~$k'$, the stable partition of the restricted~$(k'+c)$-dimensional Weisfeiler-Lehman color refinement can be computed in time~$\BigO\big( (k'+c)^2 \cdot   |{\mathcal{V}^+}|  n \cdot \log (|{\mathcal{V}^+}|) \big)$.
\end{lemma}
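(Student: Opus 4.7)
The plan is to mirror the standard Hopcroft-style implementation of the $k$-dimensional Weisfeiler--Lehman procedure, as used in the classical analysis of Immerman and Lander, and to adapt it so that the maintained partition lives on $\mathcal{V}^+$, of size $|\mathcal{V}^+|$, rather than on all $n^k$ tuples. The key observation is that any tuple outside $\mathcal{V}^+$ carries the constant color $\emptyset$ throughout refinement, so such tuples need not be stored or processed; whenever a signature component for a tuple $\bar v \in \mathcal{V}^+$ and replacement vertex $x$ refers to a tuple outside $\mathcal{V}^+$, this component is simply the fixed value $\emptyset$ and can be looked up in $\mathcal{O}(1)$ time.

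With this representation I would perform a single preprocessing pass, in time $\mathcal{O}((k'+c)\,|\mathcal{V}^+|)$, to build inverted indices that, for each vertex $v \in V(G)$ and each coordinate position, list the tuples of $\mathcal{V}^+$ containing $v$ at that position. Given the current color partition, the signature of any single pair $(\bar v, x)$, a $(k'+c)$-tuple of previous colors, can be produced by $k'+c$ constant-time lookups. So one full round of refinement naively costs $\mathcal{O}((k'+c)\,|\mathcal{V}^+|\,n)$, and simply bounding the number of rounds by $|\mathcal{V}^+|$ yields the too-weak estimate $\mathcal{O}((k'+c)\,|\mathcal{V}^+|^2 n)$.

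To reach the claimed bound I would invoke the ``process the smaller half'' trick of Hopcroft. Rather than recomputing every signature in each round, whenever a color class $C$ splits into $C_1, C_2$ with $|C_1| \le |C_2|$ the refinement is propagated by scanning only the smaller side $C_1$: via the inverted indices we enumerate exactly those tuples of $\mathcal{V}^+$ having some coordinate in $C_1$ and re-sort the signatures among these tuples. By the standard doubling argument, every element of $\mathcal{V}^+$ is touched in this way at most $\mathcal{O}(\log |\mathcal{V}^+|)$ times per coordinate, contributing an amortized factor of $(k'+c)\log |\mathcal{V}^+|$ in place of the naive $|\mathcal{V}^+|$. Each such touch costs $\mathcal{O}((k'+c)\,n)$ time to rebuild and hash the affected signatures, yielding the overall bound $\mathcal{O}\bigl((k'+c)^2 \cdot |\mathcal{V}^+|\, n \cdot \log |\mathcal{V}^+|\bigr)$.

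The main obstacle I anticipate is the careful bookkeeping of the partition refinement: one has to ensure that each split can indeed be charged only to the smaller side and that updating the hash-based identifier of a multiset $\mathcal{M}^k_i$ after such a split still takes time proportional to the $(k'+c)\,n$ description of a single signature. This is essentially a direct transcription of the well-documented Hopcroft/Immerman--Lander scheme, with the role of $n^k$ replaced by $|\mathcal{V}^+|$; the only subtle point to verify is that because $\mathcal{V}^+$ is not closed under coordinate replacement, the missing colors can be handled uniformly as $\emptyset$ without disturbing the amortization.
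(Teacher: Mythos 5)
Your proposal follows essentially the same route as the paper: the paper likewise adapts the Immerman--Lander/Hopcroft partition-refinement scheme to the restricted tuple set $\mathcal{V}^+$, treats all tuples outside $\mathcal{V}^+$ as carrying the constant color $\emptyset$, and obtains the $\log(|\mathcal{V}^+|)$ factor by re-processing only the smaller pieces of split classes (in the paper's formulation, adding all but the largest new class to the work list), with per-touch cost $\BigO((k'+c)^2 n)$ amortized exactly as you describe. The only difference is presentational, so your plan is sound and matches the paper's argument.
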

\proofatend
The technique in this proof, computing the stable partition in said running time, is an adaptation of the technique of Immerman and Lander~\appcite{compleretro} (see also~\cite{DBLP:conf/esa/BerkholzBG13}) that computes the stable refinement of the (unrestricted) Weisfeiler-Lehman algorithm in time~$\BigO(k^2 n^{k+1} \log(n))$. We define~$k = k'+c$. If the given graph is uncolored, the initial coloring of~$\wl_0^k(v_1,\ldots,v_k)$ can be chosen to be the string~$E(v_1,v_1),\ldots, E(v_1,v_k),E(v_2,v_1),\ldots, E(v_k,v_k)$, where~$E(v_i,v_j)= 1$ if~$(v_i,v_j)$ is an edge and 0 otherwise. If the graph is already equipped with a coloring, 
we define an initial coloring as the pair of the color just defined and the initially given color.
Given  the initial coloring we proceed as follows: We again let~$\mathcal{V}^+$ be the set of all~$k$-tuples~$(v_1,\ldots,v_{k})$ for which~$\{v_1,\ldots,v_{k'}\}$ is in~$\mathcal{V}$.
We use a list~$S$ that will contain subsets of~$\mathcal{V}^+$. Initially it contains the color classes according to the initial coloring.
The algorithm now repeatedly splits color classes using the first element of the list, which is removed afterwards. More specifically, suppose~$B$ is the first element of the list. Let~$B'$ be the elements of~$\mathcal{V}^+$ which can be obtained by replacing up to one vertex in an element of~$B$. 
For each element~$b'$ of~$B'$ we compute a new color using a modified Weisfeiler-Lehman recursion. That is, if~$\chi$ is the previous coloring,
 the new color of~$b'= (b_1,\ldots,b_{k})$ will be~$\chi'(b')= 
\big(\chi(b'), {\mathcal{M}}^k_i\big)$, where ${\mathcal{M}}^k_i$ is the multiset given by
\[\big\{\!\!\big\{(\chi(x,b_2,\ldots,b_k) , 
\ldots, \chi(b_1,\ldots,b_{k-1},x)) \mid x\in V(G)      \text { and there  }\]\[\text{ exists } i \in\{1,\ldots,k\}  \text{ such that } (b_1,\ldots,b_{i-1},x,b_{i+1},\dots,b_k) \in B \big\}\!\!\big\}.\]

For each color class that is split into several smaller color classes during the procedure, we add all but the largest of these smaller color classes to the end of the list~$S$.
The algorithm continues until the list is empty. The stable partition is the partition induced by the final color classes.
To prevent the names of colors from becoming excessively long strings, we always rename newly arising colors by assigning previously unused integers. Details are given in Algorithm~\ref{algo:restricted:wl}.

\emph{Correctness.} The computed partition is coarser than or equal to the stable partition of the restricted Weisfeiler-Lehman algorithm, since each step splits color classes according to color induced subsets of the multisets from in the definition of the restricted Weisfeiler-Lehman algorithm using previously computed colors. For these previously computed colors, the partition is coarser than or equal to the stable partition by induction.
For correctness it thus suffices to argue that when given the final coloring, which we denote by~$\chi$, the restricted Weisfeiler-Lehman algorithm does not refine the color classes.

Supposing otherwise, there must be 
tuples~$(b_1,\ldots,b_k)$ and~$(b'_1,\ldots,b'_k)$ in~$\mathcal{V}^+$, such that the restricted Weisfeiler-Lehman algorithm assigns these tuples different colors in the first iteration when provided with~$\chi$ as initial colors.
Thus, there must be a~$k$-tuple of color classes~$(B_1,\ldots,B_k)$ such that the sets~$\{x\in V(G)\mid \chi(x,b_2,\ldots,b_k) \in B_1,\ldots,\chi(b_1,\ldots,b_{k-1},x)\in B_k \}$ and~$\{x\in V(G)\mid \chi(x,b'_2,\ldots,b'_k) \in B_1,\ldots,\chi(b'_1,\ldots,b'_{k-1},x)\in B_k \}$ have different cardinality.
Let~$B_i$ be the color class among~$\{B_1,\ldots,B_k\}$ for which at the latest time during the algorithm there is some superset~$\hat{B}$ in the list~$S$. If this set is~$B_i$ then Algorithm~\ref{algo:restricted:wl} would have distinguished~$(b_1,\ldots,b_k)$ and~$(b'_1,\ldots,b'_k)$. This shows that~$B_i$ was the largest class among the classes into which the superclass~$\hat{B}$ was split. However, in this case, among the classes into which~$\hat{B}$ was split, there is a class that can replace~$B_k$ in the sequence~$(B_1,\ldots,B_k)$ yielding a new sequence for which we also obtain two sets of different cardinality.

\emph{Running time.} To bound the running time, it suffices to bound the amount of work performed in the for-loop of Algorithm~\ref{algo:restricted:wl}. Note that 
whenever a set is added to the list, its size is at most half of the size of the class that is split. Thus every element can be in at most~$\BigO(\log(|{\mathcal{V}}^+|))$ 
 many of the sets which are at some point in~$S$. It suffices now to show that it is possible to compute~$\chi'(b')$ for all elements in~$B'$ in time~$\BigO((k'+c)^2 \cdot   |B'|n)$. Note that every element in the multiset used to compute~$\chi'(b')$ contains at least one coordinate that is in~$B$. We can therefore compute~$\chi'(b')$ for all elements in~$B'$ by determining the counts of elements in said multiset as follows. We iterate over all elements~$b$ of~$B$ and all coordinate positions~$j$ in~$\{1,\ldots,k\}$ and add suitable counts to all elements~$b'$ in~$B'$ for which we can obtain~$b$ by replacing coordinate~$j$ of~$b'$ with coordinate~$j$ of~$b$. The~$k$-tuple of colors which is to be counted is obtained by inserting this value of coordinate~$j$ of tuple~$b$ one by one into the other coordinates of~$b'$. To avoid double counting, the corresponding counter is then incremented if~$j$ is the least position for which the entry is in~$B$. Iterating over the elements in each color class that is split, we can in each case find the largest color class. In time linear in the sizes of the split color classes.
\begin{algorithm}[t]
\caption{The Weisfeiler-Lehman algorithm restricted to~$\mathcal{V}$}
\label{algo:restricted:wl}
\begin{algorithmic}[1]
\REQUIRE Integers~$k'$ and~$c$, a graph~$G$, a family~$\mathcal{V}$ of sets of vertices of~$G$ of width~$k'$, and a possibly uniform coloring~$\chi$ of all~$k$-tuples in~$\mathcal{V}^{+}$ (where~$k=  (k'+c)$).
\ENSURE  The stable partition of the restricted~$(k'+c)$-dimensional Weisfeiler-Lehman color refinement. 
\ENSUREGAP
\STATE for all tuples~$(v_1,\ldots,v_{k})\in \mathcal{V}^+$   define  the color $\wl^k_0(v_1,\ldots,v_{k})$ as $(\chi(v_1,\ldots,v_{k}), (E(v_1,v_{k}),E(v_2,v_1),\ldots, E(v_k,v_{k}))) $; for all other tuples the initial coloring is~$\emptyset$
\STATE initialize an empty list~$S$
\STATE add each color class in~$\mathcal{V}^+$ to the list in lexicographic order
\WHILE {$S$ is not empty}
	\STATE let~$B$ be the first set of~$S$
	\STATE let~$B'$ be the set of elements~$b'\in \mathcal{V}^+$ for which there exists a set~$b\in B$ differing from~$b'$ by at most one vertex
	\STATE compute~$\chi'(b')$ for all~$b'\in B'$
	\STATE replace~$\chi$ with~$\chi'$ for elements in~$B'$
	\STATE for all color classes that are split due to this replacement of~$\chi$ add all but the largest of the new color classes to the end of the list~$S$
	
\STATE remove~$B$ from the front of the list
\ENDWHILE
\RETURN the partition induced by~$\chi$
\end{algorithmic}
\end{algorithm}
\endproofatend

\hereorinappendixstatement{
We remark that when the described efficient version of the restricted Weisfeiler-Lehman algorithm is used on two graphs for comparing colors of tuples between the two graphs, some care has to be taken concerning hashing. If hashing of colors is used, then the algorithm has to be performed on both graphs at the same time (or alternatively on the disjoint union of the two graphs) to avoid hashing different color to the same value.
}

Being a restriction implies that the known examples that cannot be solved by the Weisfeiler-Lehman algorithm~\cite{cfi:paper} can also not be solved by the restricted version. However, we will now prove that the restricted Weisfeiler-Lehman algorithm decides isomorphism of graphs whenever the set~$\mathcal{V}(G)$ captures a tree decomposition. To facilitate the proof and to make it more easily applicable in the future, we prove the theorem for tree decompositions, instead of proving it just for strong tree decompositions.

Recall that a \emph{tree decomposition} is a pair $(\{X_{i} \mid i \in I\}, T = (I,F))$ for which~$\bigcup_{ i \in I} X_{i} = V(G)$ and~$T=(I,F)$ is a tree such that every vertex is contained in some bag, for adjacent vertices there is a bag containing both of them and for every vertex~$v$ the set of bags containing~$v$ induces a connected subtree of~$T$.

Given a graph~$G$ we say that a family of sets~$\mathcal{V}(G)$ \emph{captures a tree decomposition~$\mathcal{T}$} of~$G$ if 
every bag is in~$\mathcal{V}(G)$. If~$G$ is equipped with an equivalence relation and possibly a tuple-coloring, we additionally require that every equivalence class is contained in a bag of~$\mathcal{T}$. 
We say that a tree decomposition is \emph{semi-smooth} if the intersection of adjacent bags has size at most one smaller that the size of the larger bag (for the decomposition to be smooth one also requires that all bags have the same size, see~\cite{DBLP:journals/siamcomp/Bodlaender96}). 

\begin{theorem}\label{thm:restricted:wl:solves:treewidth}
Suppose we are given an algorithm that computes for every graph~$G$ in a graph class~$\mathcal{C}$ an isomorphism invariant family of vertex sets~$\mathcal{V}(G)$ of width at most~$k'$ such that~$\mathcal{V}(G)$ captures a semi-smooth tree decomposition of~$G$.
Then we can decide isomorphism of graphs in~$\mathcal{C}$ with the~$(k'+3)$-dimensional Weisfeiler-Lehman algorithm restricted to~$\mathcal{V}(G)$.
\end{theorem}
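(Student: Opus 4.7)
The plan is to establish both directions of the WL-based isomorphism test. The easy direction, that $G_1 \cong G_2$ implies matching multisets of stable colors $\wl^{k'+3}_\infty[\mathcal{V}(G_i),G_i]$, is immediate: an isomorphism $\pi\colon G_1\to G_2$ maps $\mathcal{V}(G_1)$ to $\mathcal{V}(G_2)$ by the invariance hypothesis, hence also $\mathcal{V}^+$ correspondingly, and transports each refinement step by step.

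For the converse, I would fix some semi-smooth tree decomposition $\mathcal{T}_1$ of $G_1$ with all bags contained in $\mathcal{V}(G_1)$, which is guaranteed by the hypothesis, and root it at an arbitrary bag. Then I would prove by induction on the height of a bag $X$ in $\mathcal{T}_1$ the following statement: whenever $\bar v \in \mathcal{V}^+$ has $\{v_1,\ldots,v_{k'}\} = X$ and a tuple $\bar u$ in $G_2$ satisfies $\wl^{k'+3}_\infty[\mathcal{V}(G_2),G_2](\bar u)=\wl^{k'+3}_\infty[\mathcal{V}(G_1),G_1](\bar v)$, the partial map $v_i\mapsto u_i$ extends to an isomorphism from the subgraph of $G_1$ induced by the union $V_X$ of all bags in the subtree below $X$ onto the subgraph of $G_2$ induced by the analogous set, obtained by following the component structure of $G_2-\{u_1,\ldots,u_{k'}\}$ dictated by Lemma~\ref{lem:props:of:wl}. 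The base case of a leaf bag follows from part~\ref{first:item:of:lem:props:wl} of Lemma~\ref{lem:props:of:wl}, iterated across positions $1,\ldots,k'$, which forces $v_i\mapsto u_i$ to be an isomorphism of the induced subgraphs on the bags.

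For the inductive step, semi-smoothness guarantees that every child bag $X'$ of $X$ in $\mathcal{T}_1$ satisfies $|X'\cap X|\leq k'-1$, so $X'\setminus X$ contains at least one ``private'' vertex $w$, and the subtree below $X'$ lies in a single connected component of $G_1-X$. The three extra coordinates of the restricted WL are then used as follows: one coordinate holds such a private vertex $w\in X'\setminus X$, which by part~\ref{second:item:of:lem:props:wl} of Lemma~\ref{lem:props:of:wl} must be matched to a vertex in the corresponding component of $G_2-\{u_1,\ldots,u_{k'}\}$; the remaining two coordinates serve as pivots used, via part~\ref{first:item:of:lem:props:wl} of Lemma~\ref{lem:props:of:wl}, to transport the color equality through a sequence of single-coordinate replacements that swap out the vertices in $X\setminus X'$ and swap in those in $X'\setminus X$, ending with a tuple whose first $k'$ coordinates form exactly the child bag $X'$. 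By induction, the matching tuple in $G_2$ then yields an isomorphism from the subgraph of $G_1$ below $X'$ onto the corresponding subgraph of $G_2$, and gluing these per-child isomorphisms with the isomorphism on $X$ along the shared intersections $X'\cap X$ produces the desired isomorphism.

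The main obstacle I anticipate is the careful choreography of the pivot swaps. One must verify that $c=3$ extra coordinates really suffice: this is where semi-smoothness is essential, providing the private vertex that anchors the swap and allows coordinates to be replaced one at a time without ever leaving $\mathcal{V}^+$ or losing the color equality. Equally delicate is checking that the matching vertices produced in $G_2$ at each pivot step are consistent across the different children, so that the per-child isomorphisms glue together into a globally well-defined isomorphism on the whole of $V(G_1)$ rather than merely into a collection of pairwise compatible partial maps.
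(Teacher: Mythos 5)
The two points you defer as ``anticipated obstacles'' are not loose ends to be checked later; they are the substance of the proof, and your induction statement as formulated is too weak to deliver them. First, your per-bag statement only asserts that $v_i\mapsto u_i$ extends to an embedding of $G_1[V_X]$ into some ``analogous'' induced subgraph of $G_2$; nothing in the statement prevents the recursive isomorphism obtained at a child $X'$ from mapping vertices of $V_{X'}\setminus X$ into the part of $G_2$ already used for $X$, for the portion of $G_1$ above $X$, or for a sibling's subtree. The paper's claim avoids this by carrying an explicit excluded-component marker: the tuples have the form $(v_1,\ldots,v_{k'},z,z,z)$, the induction is on the size $t$ of the components covered (not on bag height in a fixed rooted decomposition), and when the hypothesis is invoked at the adjacent bag via the swapped tuple $(w,v_2,\ldots,v_{k'},v_1,v_1,v_1)$ the component containing $v_1$ is excluded, which is exactly what confines the recursive isomorphism to the correct side of the bag. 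Without such a marker in the induction statement, the statement you propose cannot be proved by the step you sketch. Second, the gluing across children is resolved in the paper by a counting argument you do not supply: one proves $\{\!\{\wl^k_\infty(v_1,\ldots,v_{k'},z,u,z)\mid u\in C\}\!\}=\{\!\{\wl^k_\infty(v'_1,\ldots,v'_{k'},z',u',z')\mid u'\in C'\}\!\}$ (pinning $w$ in the third spare coordinate and using Lemma~\ref{lem:props:of:wl}), which yields a well-defined bijective correspondence between components of interest and only then allows the per-component isomorphisms, all compatible with $v_i\mapsto v_i'$ on the bag, to be assembled into one map. Merely knowing that each component individually admits a matching component is not enough.

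There are also smaller but genuine slips. Semi-smoothness gives the lower bound $|X\cap X'|\ge\max(|X|,|X'|)-1$ (adjacent bags differ in at most one vertex), so your ``sequence of single-coordinate replacements that swap out the vertices in $X\setminus X'$'' is both unnecessary and, more importantly, unjustified: part~\ref{first:item:of:lem:props:wl} of Lemma~\ref{lem:props:of:wl} only lets you copy an entry already present in the tuple into another position; introducing the new vertex $w$, and finding its partner $w'$ in $G_2$, comes from the stability of the refinement (the multisets ${\mathcal{M}}^k_i$ in the definition of $\wl^k_{i+1}$), not from Lemma~\ref{lem:props:of:wl}, and part~\ref{second:item:of:lem:props:wl} is then what places $w'$ in the right component. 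Moreover, the subtree below a child need not lie in a single connected component of $G_1-X$ (bags need not induce connected subgraphs), so the component bookkeeping has to be organized around components of $G_1-X$, as in the paper, rather than around children of $X$. Finally, at the root your argument only yields an injective induced-subgraph embedding of $G_1$ into $G_2$; the paper closes this by taking $z=v_{k'}$ and $t=n+1$ so that the claim produces a bijection onto all of $G_2$, and it additionally verifies that the constructed isomorphism respects the equivalence classes and tuple-colors, which is needed for the later applications of the theorem.
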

\proofatend

Suppose~$G$ and~$G'$ are graphs in~$\mathcal{C}$. We run the Weisfeiler-Lehman algorithm and claim the graphs are isomorphic if and only if the multisets of colors obtained as the stable refinement of each graph are isomorphic.

Since functor~$\mathcal{V}(\cdot)$ is isomorphism invariant, the restricted Weisfeiler-Lehman algorithm is also isomorphism invariant. Therefore, if~$G$ and~$G'$ are isomorphic then, this procedures claims that the graphs are isomorphic.

We now show that if the restricted Weisfeiler-Lehman algorithm terminates with the multisets of colored tuples being equal in both graphs, then the graphs are isomorphic. 

Let~$\mathcal{T}$ be a semi-smooth tree decomposition of~$G$ that is captured by~$\mathcal{V}(G)$. Let~$k = k'+3$.  
We now prove the following statement by induction on~$t$: 

\begin{claim} If~$\wl^k_\infty(v_1,v_2,\ldots,v_{k'},z,z,z) = \wl^k_\infty(v'_1,v'_2,\ldots,v'_{k'},z',z',z')$ and if the set~$\{v_1,\ldots,v_{k'}\}$ is a bag of~$\mathcal{T}$,
then there is an isomorphism~$\phi$ from the subgraph of~$G$ induced by the vertices of~$\{v_1,\ldots,v_{k'}\}$ and all vertices in components of~$G\setminus  \{v_1,\ldots,v_{k'}\}$ not containing~$z$ of size at most~$t$ to the subgraph of~$G'$ induced by the vertices of~$\{v'_1,\ldots,v'_{k'}\}$ and all vertices in components of~$G'\setminus  \{v'_1,\ldots,v'_{k'}\}$ not containing~$z'$ of size at most~$t$. Moreover the isomorphism~$\phi$ can be chosen such that~$v_i$ maps to~$v'_i$ for~$i\in \{1,\ldots,k\}$ and such that it respects colors of equivalence classes.
\end{claim}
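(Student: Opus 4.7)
The plan is to prove the claim by induction on $t$, building the desired isomorphism component by component from the tree decomposition $\mathcal{T}$ that $\mathcal{V}(G)$ captures.

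The base case $t=0$ should be immediate: since $\wl^k_0$ by construction records the isomorphism type of the ordered tuple together with the coloring inherited from the equivalence classes, equality of $\wl^k_\infty$ forces equality of $\wl^k_0$, and so $v_i \mapsto v'_i$ is already the required colored isomorphism on the bag $\{v_1,\dots,v_{k'}\}$.

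For the inductive step I assume the claim at level $t-1$, take the resulting $\phi_{t-1}$, and aim to extend it across each component $C$ of $G - S$ with $|C|=t$ and $z \notin C$. First, for each $w \in C$, multiset stability of $\wl^k_\infty$ produces a matching $w' \in V(G')$ with $\wl^k_\infty(v_1,\dots,v_{k'},z,z,w) = \wl^k_\infty(v'_1,\dots,v'_{k'},z',z',w')$. Using Part~2 of the preceding lemma in contrapositive form at positions $k-1$ and $k$, the vertex $w'$ must lie in a component of $G' - S'$ disjoint from $z'$; the same lemma applied to two-probe tuples $(v_1,\dots,v_{k'},z,w_1,w_2)$ for $w_1,w_2 \in C$ forces all matches to collapse into a single component $C'$, and a symmetric counting argument gives $|C'|=t$.

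To promote this vertex-matching to a colored isomorphism $C \to C'$ extending the iso on the bag, I plan a secondary induction that walks the bag $S$ toward the subtree of $\mathcal{T}$ supporting $C$. Fix a bag $B$ adjacent to the tree node of $S$ in the direction of $C$; semi-smoothness then forces $|B \cap S| \leq k' - 1$, so there exist $v_j \in S \setminus B$ and $u \in B \setminus S$. Using Part~1 of the preceding lemma I would spread $v_j$ into all three spare coordinates, turning the matched pair $((S,z,z,z),(S',z',z',z'))$ into the matched pair $((S,v_j,v_j,v_j),(S',v'_j,v'_j,v'_j))$. A single WL substitution at the $j$-th coordinate, justified by multiset stability and by $B \in \mathcal{V}(G)$ keeping the resulting tuple in $\mathcal{V}^{+}$, then replaces $v_j$ by $u$ and yields $\wl^k_\infty(B,v_j,v_j,v_j) = \wl^k_\infty(B',v'_j,v'_j,v'_j)$ for appropriately matched $B'$ and $v'_j$. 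Since every component of $G - B$ meeting $C$ has size strictly less than $t$, the inductive hypothesis applied at the new bag delivers the local iso; composing with $\phi_{t-1}$ and the base-case iso on $B$ assembles the required extension.

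The hard part will be making this bag-transition step work when $B$ differs from $S$ by more than a single vertex swap. Semi-smoothness bounds $|B \cap S|$ by $k'-1$ but allows $|B \triangle S|$ to be as large as $2k'$, so no single substitution at the $j$-th coordinate can in general turn $S$ directly into $B$ through a tuple whose first $k'$ coordinates still form a bag in $\mathcal{V}(G)$ (any intermediate non-bag tuple collapses under $\wl^k_0$ to $\emptyset$). My intended resolution is to walk along a path from $b_S$ to a bag of $\mathcal{T}$ that actually meets $C$, performing one single-vertex substitution per tree edge and using the three spare coordinates to shepherd both the excluded side and the swap's bookkeeping through each step. Coordinating this walk with the $\mathcal{V}^{+}$ constraint, with Part~1 for intra-tuple substitutions, and with multiset stability for introducing new vertices, is the delicate core of the argument, and is exactly where the three extra dimensions of the refinement are spent.
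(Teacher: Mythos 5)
The step you flag as ``the delicate core'' is where the proposal breaks, and it breaks because you have inverted the definition of semi-smooth. In the paper, semi-smooth means the intersection of two adjacent bags has size \emph{at least} $\max(|B_1|,|B_2|)-1$ (``at most one smaller than the larger bag''), so adjacent bags differ by at most one vertex on each side, i.e.\ $|B\triangle S|\le 2$, not $2k'$. Your reading of it as an upper bound $|B\cap S|\le k'-1$ makes the transition from the bag $S$ to the bag supporting $C$ look like it needs a multi-swap walk through $\mathcal{T}$ with intermediate tuples falling out of $\mathcal{V}^{+}$; that whole difficulty is an artifact of the misreading, and since you leave the walk unexecuted, the extension of the bag map across a component $C$ --- the heart of the induction step --- is not actually proved. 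Under the correct definition the paper does exactly one coordinate substitution: the bag $\hat B$ adjacent to $S$ towards $C$ is $(S\setminus\{v_1\})\cup\{w\}$ for some $w\in C$, WL-stability supplies $w'$ with $\wl^k_\infty(v_1,\dots,v_{k'},z,w,z)=\wl^k_\infty(v'_1,\dots,v'_{k'},z',w',z')$, and Part~1 of the lemma lets one pass to $\wl^k_\infty(w,v_2,\dots,v_{k'},v_1,v_1,v_1)=\wl^k_\infty(w',v'_2,\dots,v'_{k'},v'_1,v'_1,v'_1)$, which is a legitimate instance of the \emph{same} induction (at $t-1$), not a secondary one.

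A second ingredient missing from your outline is what makes that re-invocation of the induction hypothesis actually cover $C$: one must choose the dropped vertex $v_1$ so that no vertex of $C$ other than $w$ is adjacent to $v_1$ (this is available because $\{v_1,\dots,v_{k'}\}$ is a bag of the tree decomposition and $v_1\notin\hat B$). With $v_1$ playing the role of $z$ at the new bag $\hat B$, this guarantees that every piece of $C\setminus\{w\}$ is a component of $G-\hat B$ of size at most $t-1$ not containing the new ``forbidden'' vertex, so the inductive isomorphism really maps $C$ onto $C'$. Without that choice, a piece of $C\setminus\{w\}$ could lie in the same component of $G-\hat B$ as $v_1$ and be excluded, and your ``inductive hypothesis applied at the new bag delivers the local iso'' claim fails. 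Your first-stage bookkeeping (matching each $w\in C$ to some $w'$, collapsing the matches into one component $C'$ via Part~2, and the multiset consistency across components) does track the paper's argument, but by itself it is only a vertex matching; the promotion to an isomorphism is exactly the part that remains unproved in your proposal.
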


Under the assumption, the multiset~$\wl^k_\infty(v_1,v_2,\ldots,v_{k'},z,z,z)$ is non-empty, which implies that the set~$\{v'_1,v'_2,\ldots,v'_{k'}\}$ is contained in an element of~$\mathcal{V}(G')$.
\ifllncs\else

\fi
For~$t = 0$, \
by definition~$\wl^k_0(v_1,v_2,\ldots,v_{k'},z,z,z) = \wl^k_0(v'_1,v'_2,\ldots,v'_{k'},z',z',z')$ if and only if the map sending~$v_i$ to~$v'_i$ and~$z$ to~$z'$ is an isomorphism from the graph~$G[\{v_1,\ldots,v_{k}, z\}]$ to the graph~$G'[\{v'_1,\ldots,v'_{k}, z'\}]$. This resolves the base case.

Suppose now~$t>0$ and \ifllncs also suppose \fi that~$\wl^k_\infty(v_1,v_2,\ldots,v_{k'},z,z,z) = \wl^k_\infty(v'_1,v'_2,\ldots,v'_{k'},z',z',z')$ where the set~$\{v_1,\ldots,v_{k'}\}$ is a bag of~$\mathcal{T}$.

Let~$B=  \{v_1,\ldots,v_{k'}\}$and~$B'=  \{v'_1,\ldots,v'_{k'}\}$. We call the components of~$G -B$ and~$G'-B'$ that do not contain~$z$ or~$z'$, respectively, and which are of size at most~$t$ the \emph{components of interest}.
Let~$C$ be a component of interest in~$G - B$. Since~$\mathcal{T}$ is semi-smooth, there is a bag~$\hat{B}$ in~$\mathcal{T}$ containing a vertex~$w$ of~$C$ and all but one of the vertices from~$B$. Moreover there is a vertex in~$\{v_1,\ldots,v_{k'}\}$ such that no vertex of~$C$ different from~$w$ is adjacent to this vertex. Without loss of generality we assume that~$v_1$ fulfills this property.

Since~$\wl^k_\infty(v_1,\ldots,v_{k'},z,z,z) = \wl^k_\infty(v'_1,\ldots,v'_{k'},z',z',z')$ there must be a vertex~$w'$ in~$G'$ such that~$\wl^k_\infty(v_1,v_2,\ldots,v_{k'},z,w,z) = \wl^k_\infty(v'_1,v'_2,\ldots,v'_{k'},z',w',z')$ holds. This implies that~$w' \notin \{v'_1,\ldots,v'_{k'}\}$. By Lemma~\ref{lem:props:of:wl}, this also implies that~$w'$ is not in the same component as~$z'$ in~$G'- \{v'_1,\ldots,v'_{k'}\}$. Let~$C'$ be the component of~$G' - \{v'_1,\ldots,v'_{k'}\}$ containing~$w'$. We show that the map that sends~$v_i$ to~$v'_i$ can be extended to an isomorphism that maps~$C$ to~$C'$.
Since by Lemma~\ref{lem:props:of:wl} the equation~$\wl^k_\infty(v_1,v_2,\ldots,v_{k'},z,w,z) = \wl^k_\infty(v'_1,v'_2,\ldots,v'_{k'},z',w',z')$ in particular implies~$\wl^k_\infty(w,v_2,\ldots,v_{k'},v_1,v_1,v_1) = \wl^k_\infty(w',v'_2,\ldots,v'_{k'},v'_1,v'_1,v'_1)$, the map sending~$w$ to~$w'$ and~$v_i$ to~$v'_i$ for~$i\geq 2$ extends to an isomorphism mapping all components of~$G-\{w,v_2\ldots,v_{k'}\}$ of size at most~$t-1$ not containing~$v_1$ to components of~$G-\{w',v'_2\ldots,v'_{k'}\}$ not containing~$v'_1$. Since~$C$ and~$C'$ are connected and contain~$w$ and~$w'$ respectively, this implies that this mapping maps~$C$ to~$C'$.

We summarize what we have achieved so far towards proving the claim: for every component~$C$ of~$G-B$ of interest (i.e., of size at most~$t$ and not containing~$z$) we have found a component~$C'$ of~$G'-B'$ of interest (i.e., of size at most~$t$ and not containing~$z'$) such that~$G[C \cup B]$ can map isomorphically to~$G'[C\cup B']$. We now need to show that we can find an isomorphism that simultaneously maps all components of interest in~$G-B$ bijectively to the components of interest in~$G'-B'$.

From what we have shown so far, it could in principle be possible that by choosing a different vertex of~$C$ as~$w$, the new corresponding vertex~$w'$ would have to be taken outside of~$C'$ and thus give us a different component~$C''$ of~$G'- B'$. To argue that this is not the case, we show that~$\{\!\{\wl^k_\infty(v_1,\ldots,v_{k'},z,u,z)\mid u\in C\}\!\} = \{\!\{\wl^k_\infty(v'_1,\ldots,v'_{k'},z',u',z')\mid u'\in C'\}\!\}$ as multisets. 
To show this, it suffices by Lemma~\ref{lem:props:of:wl} to show that~$\{\!\{\wl^k_\infty(v_1,v_2,\ldots,v_{k'},z,u,w)\mid u\in C\}\!\} = \{\!\{\wl^k_\infty(v'_1,v'_2,\ldots,v'_{k'},z',u',w')\mid u'\in C'\}\!\}$.

Since~$\wl^k_\infty(v_1,\ldots,v_{k'},z,w,z) = \wl^k_\infty(v'_1,\ldots,v'_{k'},z',w',z')$ it follows also from Lemma~\ref{lem:props:of:wl} that~$\wl^k_\infty(w,v_2,\ldots,v_{k'},z,z,w) = \wl^k_\infty(w',v'_2,\ldots,v'_{k'},z',z',w')$ and consequently it follows that as multisets~$\{\!\{\wl^k_\infty(v_1,v_2,\ldots,v_{k'},z,u,w)\mid u\in V\}\!\} = \{\!\{\wl^k_\infty(v'_1,v'_2,\ldots,v'_{k'},z',u',w')\mid u'\in V\}\!\}$.
It suffices now for us to show that if~($x\in C$ and~$x'\notin C'$) or if~($x\notin C$ and~$x'\in C'$) then automatically also the inequality~$\wl^k_\infty(v_1,v_2,\ldots,v_{k'},z,x,w)\neq \wl^k_\infty(v'_1,v'_2,\ldots,v'_{k'},z',x',w')$ holds. This however follows again by Lemma~\ref{lem:props:of:wl}.

We define the sought isomorphism\ifllncs as follows\fi. Since~$\wl^k_\infty(v_1,\ldots,v_{k'},z,z,z) = \wl^k_\infty(v'_1,\ldots,v'_{k'},z',z',z')$, we can find isomorphism from the components of~$G-B$ of interest to the components of~$G'-B'$ of interest such these isomorphism are compatible with the map from~$B$ to~$B'$ mapping~$v_i$ to~$v'_i$. This yields an isomorphism from the graph induced by all vertices in~$B$ and all vertices in components of interest bijectively to those of~$G'$. 

To finish the proof of the claim, it remains to argue that the isomorphism we constructed respects the coloring of the graphs. Let~$(v_1,\ldots,v_t)$ be an ordering of an equivalence class of the vertices of~$G$.
Since there is a bag of~$\mathcal{V}(G)$ that contains the equivalence class,\ifllncs \else the multiset\fi~$\wl^k_\infty(v_1,v_2,\ldots,v_{t},v_t,\ldots ,v_t)$ is not empty. Moreover under the isomorphism we have constructed, the color of the image,~$\wl^k_\infty(v'_1,v'_2,\ldots,v'_{t},v'_t,\ldots ,v'_t)$ must be non-empty and thus the induced subgraph~$G[v_1,\ldots,v_t]$ must map to the induced subgraph~$G'[v'_1,\ldots,v'_t]$ respecting the coloring of the equivalence class. By symmetry, colored orderings of equivalence classes of~$G'$ also map to orderings of equivalence classes of~$G$ of the same color. This proves the claim.

We now apply the claim to finish our proof. 
By setting~$z$ equal to~$v_{k'}$ and~$t$ equal to~$n+1$ we obtain from the claim the existence of an isomorphism that maps all vertices of~$G$ bijectively to the vertices of~$G'$.
\endproofatend

The previous theorem requires~$\mathcal{V}(G)$ to capture a semi-smooth tree decomposition. However, we can extend the theorem to tree decompositions and strong tree decompositions by using the alternative set~$\mathcal{V'} = \{B_1 \cup B_2 \mid B_1,B_2 \in \mathcal{V}(G)\}$. This can be seen by the following two observations.

If~$\mathcal{V}$ is isomorphism invariant and captures a tree decomposition of~$G$, then~$\mathcal{V'}$ is isomorphism invariant and captures a semi-smooth tree decomposition. This follows from the construction that produces a smooth tree decomposition from a tree decomposition given in~\cite{DBLP:journals/siamcomp/Bodlaender96}.

Suppose~$\mathcal{B}$ is the set of bags of a strong tree decomposition. Then there is a semi-smooth tree decomposition~$\mathcal{B'}$ such that for every bag~$B$ of~$\mathcal{B'}$ there are bags~$B_1$ and~$B_2$ in~$\mathcal{B}$ such that~$B\subseteq B_1 \cup B_2$. This can be seen with the standard way of constructing  a tree decomposition from a strong tree decomposition by inserting for each edge between two bags~$B_1$ and~$B_2$ a path of bags transforming~$B_1$ to~$B_2$ by replacing successively one vertex after the other. We conclude, if~$\mathcal{V}$ captures a strong tree decomposition then~$\mathcal{V'}$ captures a semi-smooth tree decomposition.
This also shows that by setting~$\mathcal{V}$ to be the set of all~$k$-tuples of vertices, every graph of tree width at most~$k$ has a smooth tree decomposition captured by~$\mathcal{V'}$ and shows that the a sufficiently high-dimensional Weisfeiler-Lehman algorithm solves graph isomorphism of graphs of bounded tree width, as mentioned in the introduction.

\ifhasappendix\else Concerning the previous two observations, in our context relevant is the following fact. If we start with a family~$\mathcal{V}$ of fpt size in some parameter~$k$ with the maximum set size bounded by a function of~$k$ and we apply either construction, we obtain a family~$\mathcal{V'}$ that is also of fpt size.\fi

\begin{corollary}\label{cor:restricted:wl:solves:treewidth:no:smooth}
For a parameter~$k'$, given an fpt-algorithm that computes for every graph~$G$ in a graph class~$\mathcal{C}$ an isomorphism invariant family of vertex sets~$\mathcal{V}(G)$ of width at most~$k'$ such that~$\mathcal{V}(G)$ captures a tree decomposition (or a strong tree decomposition), isomorphism of graphs in~$\mathcal{C}$ is fixed parameter tractable in~$k'$.
\end{corollary}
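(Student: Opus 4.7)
The plan is to reduce the corollary to Theorem \ref{thm:restricted:wl:solves:treewidth} via the conversion from (strong) tree decompositions to semi-smooth tree decompositions described directly after that theorem. Given a graph $G$ in $\mathcal{C}$, I would first invoke the assumed fpt-algorithm to obtain the isomorphism invariant family $\mathcal{V}(G)$ of width at most $k'$, and then form
\[\mathcal{V}'(G) := \{B_1 \cup B_2 \mid B_1, B_2 \in \mathcal{V}(G)\}.\]
This new family has width at most $2k'$ and cardinality at most $|\mathcal{V}(G)|^2$. Isomorphism invariance of $\mathcal{V}'(G)$ is immediate from that of $\mathcal{V}(G)$, and by the two observations following Theorem \ref{thm:restricted:wl:solves:treewidth}, $\mathcal{V}'(G)$ captures a semi-smooth tree decomposition of $G$, both when $\mathcal{V}(G)$ originally captures a tree decomposition and when it captures a strong tree decomposition.

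Theorem \ref{thm:restricted:wl:solves:treewidth}, applied with $2k'$ in place of $k'$, then says that the restricted $(2k'+3)$-dimensional Weisfeiler-Lehman algorithm based on $\mathcal{V}'$ decides isomorphism for graphs in $\mathcal{C}$. The concrete procedure for input graphs $G_1$ and $G_2$ is therefore to compute $\mathcal{V}'(G_1)$ and $\mathcal{V}'(G_2)$, run the restricted WL algorithm on the disjoint union $G_1 \dunion G_2$ (as pointed out in the text, in order to circumvent the hashing issue when comparing colors across the two graphs), and accept if and only if the resulting multisets of colors on tuples in $\mathcal{V}'^{+}$ coincide on both sides.

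For the running time I would argue as follows. Because $\mathcal{V}(G)$ is produced in fpt time, its cardinality is at most $f(k') \cdot n^{\BigO(1)}$ for some computable $f$; hence so is $|\mathcal{V}'(G)| \leq |\mathcal{V}(G)|^{2}$. Since $\mathcal{V}'(G)^{+}$ is obtained by picking an element of $\mathcal{V}'(G)$, an ordered (with-repetition) arrangement of its vertices into $2k'$ slots, and three additional free vertices, $|\mathcal{V}'(G)^{+}|$ is at most $|\mathcal{V}'(G)| \cdot (2k')^{2k'} \cdot n^{3}$ and therefore remains of fpt size. Lemma \ref{lem:cmp:time:of:res:wl} then bounds the cost of computing the stable partition by $h(k') \cdot n^{\BigO(1)}$. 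I do not foresee a substantive obstacle: every step is essentially a direct packaging of results already established in this section, and the only real care needed is the bookkeeping that each of $\mathcal{V}'(G)$ and $\mathcal{V}'(G)^{+}$ is of fpt size and that Theorem \ref{thm:restricted:wl:solves:treewidth} is invoked with the correctly substituted parameter $2k'$.
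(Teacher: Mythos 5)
Your proposal is correct and follows essentially the same route as the paper's own proof: form $\mathcal{V}'(G)=\{B_1\cup B_2\mid B_1,B_2\in\mathcal{V}(G)\}$, note it is isomorphism invariant, of fpt size, of width at most $2k'$, and captures a semi-smooth tree decomposition by the observations following Theorem~\ref{thm:restricted:wl:solves:treewidth}, then apply that theorem with the restricted $(2k'+3)$-dimensional Weisfeiler-Lehman algorithm and bound the running time via Lemma~\ref{lem:cmp:time:of:res:wl}. Your additional bookkeeping on $|\mathcal{V}'(G)^{+}|$ and the remark about running the refinement on the disjoint union are fine elaborations of what the paper leaves implicit.
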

\proofatend
Since the algorithm computing~$\mathcal{V}$ is an fpt-algorithm, the size of~$\mathcal{V}$ is bounded by~$f(k') n^d$ for some function~$f$ and constant~$d\in \mathbb{N}$.
The set~$\mathcal{V'}$ as defined above has thus size at most~$f(k')^2 n^{2d}$, width at most~$2k'$, and captures a smooth tree decomposition. 
Theorem~\ref{thm:restricted:wl:solves:treewidth} shows that it suffices to perform the~$(2k'+3)$-dimensional Weisfeiler-Lehman algorithm restricted to~$\mathcal{V'}$ to solve isomorphism, which can be done im fpt time by Lemma~\ref{lem:cmp:time:of:res:wl}.
\endproofatend

We remark that if we were interested in actual running times, in the tree width case it is possible to avoid that the increase in width from~$\mathcal{V}$ to~$\mathcal{V'}$ yielding better running times bounds.

\section{Tree distance decompositions with connected root bags}\label{sec:distance:decomp}
\differingappendixstatement{}{\section{Proofs of Section~\ref{sec:distance:decomp}}}

As a first application we show that graph isomorphism parameterized by root connected tree distance width is fixed parameter tractable. 
We say two vertices~$v_1$ and~$v_2$ in a graph are~\emph{$k$-connected},  if there are~$k$ internally vertex disjoint paths from~$v_1$ to~$v_2$. We denote this by~$v_1\kcon_k v_2$. The task of checking whether two vertices are~$k$-connected is also known as the Menger Problem and can be solved in polynomial time via a reduction to the maximum flow problem. 

\begin{lemma}\label{lem:must:be:in:same:bag}
Let~$G$ be a graph containing vertices~$v_1$ and~$v_2$.
If~$v_1\kcon_{2k} v_2$, then in every strong tree decomposition of width at most~$k$ the vertices~$v_1$ and~$v_2$ are in the same bag.
\end{lemma}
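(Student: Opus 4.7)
The plan is to argue by contradiction via Menger's theorem. Assume that some strong tree decomposition $(\{X_i \mid i \in I\}, T)$ of width at most $k$ places $v_1$ and $v_2$ in distinct bags $X_i$ and $X_j$. The goal is to exhibit a $v_1$-$v_2$ vertex separator of size strictly less than $2k$, contradicting the existence of $2k$ internally vertex-disjoint paths. Let $i = i_0, i_1, \ldots, i_l = j$ be the unique path in $T$, with $l \geq 1$. The argument splits into two cases according to whether $l \geq 2$ or $l = 1$.

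In the case $l \geq 2$, I claim that $X_{i_1}$ itself is a $v_1$-$v_2$ separator of size at most $k$. Deleting $i_1$ from $T$ produces a forest in which $i_0$ and $i_l$ lie in different components, since the unique tree path between them runs through $i_1$. By the defining property of a strong tree decomposition, any edge of $G$ whose endpoints lie in different bags must come from tree-adjacent bags, and no pair of tree-adjacent bags exists across this cut once $i_1$ is removed. Since bags are pairwise disjoint, neither $v_1$ nor $v_2$ lies in $X_{i_1}$, so $X_{i_1}$ is a proper $v_1$-$v_2$ separator of size at most $k < 2k$, and Menger's theorem delivers the contradiction. Note that in this case $v_1$ and $v_2$ are automatically non-adjacent in $G$, since they lie in bags that are not tree-adjacent.

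The case $l = 1$ is the delicate one: the bags $X_i$ and $X_j$ are tree-adjacent, so $v_1$ and $v_2$ may be joined by a direct edge in $G$, and neither $X_i$ nor $X_j$ alone is a vertex separator. Here I will use $S = (X_i \cup X_j) \setminus \{v_1, v_2\}$, of size at most $2k - 2$. Any $v_1$-$v_2$ path in $G$ other than the direct edge must contain some crossing edge with one endpoint in $X_i$ and the other in $X_j$ distinct from $\{v_1, v_2\}$, forcing it through $S$; thus $S$ separates $v_1$ from $v_2$ in the graph obtained by removing the potential edge $\{v_1, v_2\}$. Accounting for the at most one extra internally vertex-disjoint path supplied by the direct edge yields at most $2k - 1 < 2k$ internally vertex-disjoint paths in total, contradicting $v_1 \equiv_{2k} v_2$. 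The main obstacle is exactly this adjacency subtlety; once it is handled by passing to $G$ minus the direct edge and adding one path back, the rest is a routine application of Menger's theorem and the defining property of a strong tree decomposition.
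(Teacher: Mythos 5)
Your proof is correct and rests on the same key observation as the paper's: every $v_1$-$v_2$ path must cross between the unique pair of tree-adjacent bags on the tree path joining the bags of $v_1$ and $v_2$, and a count then caps the number of internally vertex-disjoint paths at $2k-1 < 2k$. The only difference is packaging: the paper does a single edge-counting argument on the crossing edges between $B_v$ and the adjacent bag $B'$ (handling a possible $v_1v_2$ edge implicitly, with no case split and no appeal to Menger), whereas you build explicit vertex separators ($X_{i_1}$ of size at most $k$, respectively $(X_i\cup X_j)\setminus\{v_1,v_2\}$ of size at most $2k-2$ plus the direct edge) and invoke Menger, of which only the easy direction is actually needed.
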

\proofatend
Suppose~$v_1$ and~$v_2$ are in different bags~$B_v$ and~$B_{v'}$ of a strong tree decomposition of~$G$ of width at most~$k$. Let~$B'$ be the bag adjacent to~$B_v$ in the tree along the path from~$B_{v}$ to~$B_{v'}$. (It is possible that~$B_{v}= B'$.) Suppose there are~$2 k$ internally vertex-disjoint paths from~$v_1$ to~$v_2$. Each of these paths must contain an edge with an endpoint in~$B_{v}$ and an endpoint in~$B'$. Let~$M$ be a set of~$2k$ such edges each belonging to a different path. For every two edges in~$M$, 
the end vertices must be distinct unless they are~$v_1$ or~$v_2$.
Since the bag~$B_v$ has size at most~$k$ there are at most~$k$ edges in~$M$ that have~$v_1$ as an endpoint. Moreover there are at most~$k-1$ edges in~$M$ that have an endpoint in~$B_v$ different from~$v_1$. This shows that~$M$ has size at most~$2k-1$, contradicting the existence of~$2k$ internally vertex-disjoint paths between~$v_1$ and~$v_2$.
\endproofatend

A similar result for tree decompositions, stipulating the existence of a bag that contains both~$v_1$ and~$v_2$ in tree decompositions of width at most~$k$ whenever~$v_1\kcon_{k+1}v_2$, can be found in~\cite{DBLP:journals/jco/Bodlaender03}. \ifhasappendix{The previous lemma restricts the possible bags in a decomposition and leads to an efficient algorithm for isomorphism parameterized by root-connected tree distance width.}\fi 

\hereorinappendixstatement{
Suppose we are searching for a root set~$S$ and have already found a subset~$S'$ of~$S$. In this case, the previous lemma provides means of finding other vertices that must also be contained in~$S$. The following lemmas are also concerned with extending~$S'$ to a suitable root set.
\ifhasappendix{They are required for the proof of Theorem~\ref{thm:rctdw:fpt}.}\fi

\begin{lemma}\label{lem:need:not:touch:fin:cmp}
Let~$G$ be a connected graph and let~$S$ be a root set such that~$\tdw_S(G)  = k$. Let~$S'$ be a subset of~$S$. Let~$C$ be a component of~$G-S'$.
\begin{enumerate}
\item If~$\tdw_{S'}(G[S'\cup C]) > k$ then~$(S -S') \cap C$ is non-empty.\label{must:touch}
\item If~$\tdw_{S'}(G[S'\cup C]) \leq k$ then~$\tdw_{S\setminus C}(G) \leq k$.
\label{do:not:have:to:touch}
\end{enumerate}
\end{lemma}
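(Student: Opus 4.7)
The plan is to prove both parts by restricting and re-assembling tree distance decompositions, repeatedly using the structural fact that because $C$ is a component of $G-S'$, every $G$-neighbor of a vertex of $C$ lies in $S'\cup C$, and symmetrically every $G$-neighbor of a vertex outside $S'\cup C$ lies outside $C$.

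For Part~\ref{must:touch} I would prove the contrapositive. Assume $(S-S')\cap C=\emptyset$, so $S\cap C=\emptyset$. Fix a tree distance decomposition of $G$ with root $S$, width at most $k$, and bags $\{X_i\}$, and set $Y_i := X_i\cap(S'\cup C)$. Since $S'\subseteq S$ sits entirely in the root bag and $S\cap C=\emptyset$, the restricted root becomes $Y_r=S'$, and every vertex appearing in a non-root $Y_i$ lies in $C$. By the neighbor-closure property above, the neighbor of such a vertex in its parent bag $X_j$ also survives restriction, so $Y_j$ is non-empty; hence the non-empty $Y_i$ form a subtree containing the root. The result is a tree distance decomposition of $G[S'\cup C]$ with root $S'$ and width at most $k$, contradicting the hypothesis.

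For Part~\ref{do:not:have:to:touch}, partition $V = S'\cup C\cup D$, where $D:=V\setminus(S'\cup C)$ is the union of the remaining components of $G-S'$ (possibly empty). Applying the restriction technique of Part~\ref{must:touch} with the roles of $C$ and $D$ swapped, the given decomposition of $G$ restricts to a tree distance decomposition of $G[S'\cup D]$ with root $S\setminus C$ and width at most $k$; here one uses that no vertex of $D$ has a neighbor in $C$. By hypothesis there is also a tree distance decomposition of $G[S'\cup C]$ with root $S'$ and width at most $k$. I would then glue the two by identifying the root bag $S'$ of the second decomposition with the subset $S'\subseteq S\setminus C$ of the root bag of the first and re-attaching the rest of the second decomposition as a subtree hanging off the enlarged root. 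The hard part is verifying the tree distance property across the seam, but this is immediate: every former child bag of $S'$ in the second decomposition already had a neighbor in $S'\subseteq S\setminus C$, and the absence of edges between $C$ and $D$ in $G$ guarantees that no further adjacency conditions arise. The resulting decomposition has width at most $k$ and root $S\setminus C$, proving $\tdw_{S\setminus C}(G)\leq k$.
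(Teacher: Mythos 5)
Your proposal is correct and follows essentially the same route as the paper: Part~\ref{must:touch} by the contrapositive, restricting the given decomposition with root~$S$ to~$G[S'\cup C]$, and Part~\ref{do:not:have:to:touch} by taking the bags~$B\setminus C$ of the original decomposition rooted at~$S\setminus C$ and hanging the non-root bags of the hypothesized decomposition of~$G[S'\cup C]$ off that root, using that children of~$S'$ have neighbors in~$S'\subseteq S\setminus C$ and that there are no edges between~$C$ and the other components. Your verification of the restriction step (via the parent-neighbor property and the absence of edges leaving~$C$ except into~$S'$) is just a more explicit version of what the paper leaves implicit.
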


\begin{proof} For the first part, if~$(S-S')\cap C$ were empty then the every tree distance decomposition on~$G$ with root set~$S$ induces a tree distance decomposition on~$G[S'\cup C']$ with root set~$S'$, which contradicts~$\tdw_{S'}(G[S'\cup C]) > k$. 

For the second part let~$\mathcal{S}$ be a tree distance  decomposition of~$G$ with root set~$S$ of width at most~$k$. Let~$\mathcal{S'}$ be a tree distance decomposition of~$G[S'\cup C]$ with root set~$S'$ of width at most~$k$.
We construct a new tree distance decomposition of~$G$: The set of bags is the union of the sets~$M_1 = \{B\setminus C \mid B\in \mathcal{S} \}$ and~$M_2 = \mathcal{S'} \setminus \{S'\}$. The root bag of the decomposition is~$S \setminus C$. It suffices now to show that this is a rooted tree distance decomposition of width at most~$k$. Note that, together, the sets~$M_1$ and~$M_2$ partition the vertices of~$G$.
To see that the graph induced by this partitioning is a tree, first note that all neighbors of a vertex contained in a set in~$M_2$ which are contained in a bag in~$M_1$ are vertices in~$S'$. Since~$S\setminus C\supset S'$ is a bag, it suffices thus to show that~$M_1$ and~$M_2 \cup \{S'\}$ form strong tree decompositions on~$G\setminus C$ and~$G[C\cup S']$ respectively. For the latter this follows since it constitutes the original decomposition of~$G[S'\cup C]$. For the former this follows since the graph is connected and the new bags are subsets of distinct former bags, and therefore the formed decomposition cannot have cycles. Since all bags of the new decomposition are subsets of previous bags the new decomposition has width at most~$k$.
\end{proof}

\begin{lemma}\label{lem:no:highly:connected:vertex:pair:implies:bd:degree}
Let~$G$ be a graph and~$S$ be a root set such that~$\tdw_S(G)  = k$. Let~$S'$ be a subset of~$S$. If~$C$ is a component of~$G-S'$ with $|N(S')\cap C| > \ell$ for some positive integer~$\ell\in \mathbb{N}$ then there exists~$v \in C$ which is connected by more than~$(\ell - k)/{k^2}$ internally vertex-disjoint paths to~$S'$.
\end{lemma}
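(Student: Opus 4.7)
The plan is to exploit the structure of the tree distance decomposition $\mathcal{T}$ of $G$ with root $S$ of width $k$ via a double pigeon-hole argument, and then invoke Menger's theorem.

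First I would use the defining property of a tree distance decomposition: any $v \in N(S') \cap C$ is adjacent to some vertex of $S' \subseteq S$, so $v$ lies at distance at most one from $S$. Hence $v$ sits either in the root bag $S$ (and at most $k$ such vertices are possible, since $|S| \leq k$) or in a level-one bag of $\mathcal{T}$. This already places more than $\ell - k$ vertices of $N(S') \cap C$ in level-one bags.

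Next I would apply pigeon-hole twice. Counting edges from the $> \ell - k$ level-one vertices of $N(S') \cap C$ to $S'$, each such vertex contributes at least one edge; since $|S'| \leq k$, some $s^* \in S'$ is adjacent to more than $(\ell - k)/k$ of them. Because each level-one bag has at most $k$ vertices, these neighbors of $s^*$ lie in more than $p := (\ell - k)/k^2$ pairwise distinct level-one bags $B_1,\dots,B_p$, and for each $B_j$ we can pick a representative $w_j \in B_j$ adjacent to $s^*$.

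The main obstacle will be the last step: converting this structural fact into the existence of a single vertex $v \in C$ with $\lambda_G(v, S') > (\ell - k)/k^2$. Here I would use that, because $\mathcal{T}$ is a strong tree decomposition, the bags $B_1,\dots,B_p$ lie in pairwise disjoint subtrees of the root, so removing $S$ separates the $B_j$ from one another. Consequently, the edges $s^* w_j$ constitute $p$ internally vertex-disjoint paths from $s^*$ to the set $W = \{w_1,\dots,w_p\} \subseteq C$, giving $\lambda_G(s^*, W) \geq p$. Invoking Menger's theorem in the reverse direction and combining the shared endpoint $s^*$ with the additional routes through $S \cap C$ that interconnect the subtrees in $G - S'$, I would then exhibit a single $v \in C$ (naturally chosen among the $w_j$'s, or deeper in one of the subtrees) that admits $p$ pairwise internally vertex-disjoint paths to $S'$, finishing the proof.
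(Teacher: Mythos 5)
Your opening observations are sound (every vertex of $N(S')\cap C$ lies in the root bag or a level-one bag, bags have size at most $k$, and distinct subtrees of the root are separated by $S$), but the proof does not close: the fan you construct is centered at $s^*\in S'$, i.e.\ you exhibit one vertex of $S'$ with many neighbours in $C$, whereas the lemma demands a single vertex $v\in C$ with many internally vertex-disjoint paths to $S'$. These are not interchangeable, and no ``reverse direction of Menger'' can convert one into the other: in a star $K_{1,m}$ with centre $s^*\in S'$ and leaves in $C$, the centre has $m$ disjoint paths into $C$ while every vertex of $C$ has exactly one path to $S'$. The phrase ``additional routes through $S\cap C$ that interconnect the subtrees'' gestures at the right ingredient but is exactly the part that needs proof, so the main step of the lemma is missing. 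Moreover, even if you patched it by a further pigeonhole over $S\setminus S'$ to concentrate paths at a single vertex, your preliminary restriction to the neighbours of one $s^*$ already costs a factor of $k$, and you would only reach a bound of order $(\ell-k)/k^3$ rather than the claimed $(\ell-k)/k^2$.

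The paper's argument avoids pigeonholing over $S'$ altogether and aims the fan at a vertex of $S\setminus S'$. Consider the components $C_1,\ldots,C_t$ of $G-S$ (not $G-S'$) that meet $N(S')\cap C$. All vertices of $N(S')\cap C$ inside one such $C_i$ lie in a single bag, hence there are at most $k$ of them per component, so (discounting at most $k$ vertices lying in $S$ itself) there are more than $(\ell-k)/k$ such components. Since $|N(S')\cap C|>k$, no single $C_i$ can be all of $C$, and because $C$ is connected in $G-S'$, each $C_i$ must attach to some vertex of $S\setminus S'$; this yields, through each $C_i$, a path from $S'$ to a vertex of $S\setminus S'$ whose interior lies in $C_i$, and these paths are internally disjoint across distinct components. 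Pigeonholing over the at most $k$ vertices of $S\setminus S'$ then produces one vertex $v\in S\setminus S'$ — which lies in $C$ — hit by more than $(\ell-k)/k^2$ internally vertex-disjoint paths from $S'$. Note in particular that the witness $v$ sits in the root bag, not among the level-one vertices $w_j$ where you propose to look.
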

\begin{proof}
If~$\ell \leq k$ there is nothing to show, so we assume otherwise. Consider a strong tree decomposition of width at most~$k$ with root set~$S$ and consider the components~$C_1,\ldots,C_t$ of~$G-S$ that contain a vertex of~$N(S') \cap C$. Let~$C_i$ be such a component. Since every bag contains at most~$k$ vertices, and since vertices in~$N(S) \cap C_i$ must be in the same bag, each component~$C_i$ contains at most~$k$ vertices of~$N(S')\cap C$. Since~$|N(S') \cap C| > k$, this implies that there must be some vertex in~$v \in S\setminus S'$ that has a neighbor in~$C_i$. In particular, we can find a path from some vertex in~$S'$ to~$v$ whose internal vertices all lie in~$C_i$.
Each vertex in~$N(S')\cap C$ must either be contained in~$S$ or in a component~$C_i$ for some~$i\in \{1,\ldots,t\}$. 
Since each~$C_i$ can only contain~$k$ vertices of~$N(S')\cap C$ and since~$|N(S')\cap C| >  \ell$ there must be more than~$(\ell-k) /k$ components. This leads to more than~$(\ell-k) /k$ internally vertex-disjoint paths from vertices in~$S'$ to vertices in~$S\setminus S'$. By the pigeonhole principle there is a vertex~$v\in S\setminus S'$ connected by more than~$((\ell-k) /k)/{k}= (\ell-k) /{k^2}$ internally vertex-disjoint paths to~$S'$.
Note that, since~$C$ is a connected component, the vertex~$v$ is in~$C$.
\end{proof}

The three preceding lemmas can be assembled to find suitable root sets, if they exist, allowing for efficient isomorphism tests.
 }
\begin{theorem}\label{thm:rctdw:fpt}
The isomorphism problem parameterized by root-connected tree distance width can be solved in fpt time.
\end{theorem}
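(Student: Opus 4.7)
The plan is to invoke Corollary~\ref{cor:restricted:wl:solves:treewidth:no:smooth}: it suffices to produce, in fpt time, an isomorphism invariant family $\mathcal{V}(G)$ of vertex sets of width at most a function of $k$ that captures a strong tree decomposition of $G$ whenever $\ctdw(G) \leq k$. I would obtain $\mathcal{V}(G)$ as the union of the bags of the minimum tree distance decompositions from Theorem~\ref{thm:minimal:dist:decomp}, computed over an invariant collection of candidate root sets of size at most $k$. It therefore suffices to enumerate such a collection in fpt time so that at least one candidate is a genuine witness to $\ctdw(G) \leq k$.

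For each vertex $v \in V(G)$, I would grow candidate root sets by branching, starting from $S := \{v\}$. At each step, call a component $C$ of $G - S$ \emph{bad} if $\tdw_S(G[S \cup C]) > k$; if no bad component remains, output the current $S$, which is then a valid root set by Lemma~\ref{lem:need:not:touch:fin:cmp}(\ref{do:not:have:to:touch}); otherwise extend $S$ by adding one vertex from $\bigcup_{C \text{ bad}} N(S) \cap C$. This is the right place to branch because any valid root set $S_*$ extending $S$ must meet every bad component by Lemma~\ref{lem:need:not:touch:fin:cmp}(\ref{must:touch}), and connectedness of $S_*$ forces some vertex of $S_* \cap C$ to lie in $N(S) \cap C$ for at least one bad component $C$. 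Any branch producing $|S| > k$ or an unconnected $S$ is discarded.

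The main obstacle is controlling the branching factor, as $|N(S) \cap C|$ can be linear in $n$. This is handled by combining Lemmas~\ref{lem:no:highly:connected:vertex:pair:implies:bd:degree} and~\ref{lem:must:be:in:same:bag}: whenever some bad $C$ satisfies $|N(S) \cap C| > 2k^4 + k$, Lemma~\ref{lem:no:highly:connected:vertex:pair:implies:bd:degree} produces $w \in C$ with more than $2k^2$ internally vertex-disjoint paths to $S$. The pigeonhole principle then yields some $s \in S$ with more than $2k$ disjoint paths to $w$, and Lemma~\ref{lem:must:be:in:same:bag} forces $w$ into the same bag as $s$ in every strong tree decomposition of width $\leq k$; hence $w$ belongs to any valid root set and can be added to $S$ deterministically without branching. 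When every bad component has $|N(S) \cap C| \leq 2k^4 + k$, the branching factor is bounded by $k(2k^4 + k)$, since at most $k$ bad components can coexist: each must absorb a distinct vertex of $S_* \setminus S$, and $|S_*| \leq k$.

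Since the recursion depth is at most $k$, the total number of candidates is bounded by $n \cdot (k(2k^4+k))^k$, which is fpt in $k$. For each candidate root set, Theorem~\ref{thm:minimal:dist:decomp} computes the associated tree distance decomposition in linear time; I would retain those of width at most $k$ with a connected root bag and collect all their bags into $\mathcal{V}(G)$. The resulting family is isomorphism invariant by construction, of width at most $k$, and of fpt size, and it captures a strong tree decomposition whenever $\ctdw(G) \leq k$. Corollary~\ref{cor:restricted:wl:solves:treewidth:no:smooth} then yields the fpt isomorphism algorithm.
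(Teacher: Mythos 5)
Your overall strategy is the same as the paper's (Algorithm~\ref{algo}): grow candidate root sets from every start vertex, force in vertices that are highly connected to the current set via Lemma~\ref{lem:must:be:in:same:bag}, branch over $N(S)\cap C$ for the bad components using Lemmas~\ref{lem:need:not:touch:fin:cmp} and~\ref{lem:no:highly:connected:vertex:pair:implies:bd:degree}, and feed all bags of the resulting width-$\le k$ decompositions into Corollary~\ref{cor:restricted:wl:solves:treewidth:no:smooth}. The genuine gap is your insistence that the candidate sets be connected: you discard any branch whose $S$ is disconnected and at the end retain only decompositions with a connected root bag. A vertex $w$ that is $2k$-connected to some $s\in S$ need \emph{not} be adjacent to $S$ (the internally disjoint paths may all have length at least two), so your deterministic forced additions can disconnect precisely the branch that tracks the connected witness $S_*$, and that branch is then thrown away; your completeness argument collapses. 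Concretely, on the $(k,p)$-path graphs (\figurename~\ref{fig:kp-path}), where consecutive black vertices are joined by $p$ internally disjoint paths of length two, the forced vertices are non-adjacent black vertices, so essentially every branch is discarded even though $\ctdw$ equals the parameter. Connectedness of $S_*$ is needed only in the analysis (to guarantee that every bad component $C$ contains a vertex of $N(S)\cap C\cap S_*$); the computed candidates must be allowed to be disconnected, which is exactly how the paper proceeds --- it explicitly remarks that the root sets produced by Algorithm~\ref{algo} are in general not connected. Drop both connectivity filters; soundness only requires the direct test $\tdw_S(G)\le k$ (your appeal to Lemma~\ref{lem:need:not:touch:fin:cmp}(\ref{do:not:have:to:touch}) is in any case not quite what that lemma states).

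A second, smaller gap concerns the running time. Your bound of $k(2k^4+k)$ on the branching factor is derived under the assumption that some valid root set $S_*$ extends the current $S$; that holds on the tracked branch but not on arbitrary branches. On a branch with no such witness, nothing bounds the number of bad components, and Lemma~\ref{lem:no:highly:connected:vertex:pair:implies:bd:degree} gives no highly connected vertex, so your deterministic rule cannot fire and, as written, you would branch over a set of size $\Theta(n)$ at up to $k$ levels --- not an fpt number of candidates. The bounds must be enforced algorithmically, with backtracking when they fail, as in Algorithm~\ref{algo}: reject the branch if the number of bad components exceeds $k-|S|$, or if some $|N(S)\cap C|$ exceeds the threshold while no vertex has the required connectivity to $S$; this is safe because the tracked branch always passes these tests. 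Finally, make the forced step canonical (add \emph{all} vertices satisfying the connectivity condition, not one arbitrarily chosen vertex), since otherwise the output family of candidate root sets need not be isomorphism invariant.
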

\proofatend 

As explained in the introduction given a set~$S$ we can compute in polynomial time a unique minimal distance decomposition with root set~$S$ (\cite[Theorem 2.1]{DBLP:journals/algorithmica/YamazakiBFT99}). By simply inspecting the size of all bags, we can also check in polynomial time whether this decomposition has width at most~$k$.

By Corollary~\ref{cor:restricted:wl:solves:treewidth:no:smooth} it thus suffices for us to show that we can enumerate in fpt time an isomorphism invariant family of root sets that contains at least one root set that yields a tree distance width of at most~$k$. Indeed for each such root set we can compute the minimal distance decomposition and collect all possible bags of decompositions of width at most~$k$ obtained this way. The family of sets obtained has size at most~$f(k)n^d $ for some function~$f$ and some integer~$d$ and captures a strong tree decomposition.

We now describe an algorithm for the task of enumerating the family of root sets, details are given in Algorithm~\ref{algo}. We remark however, that the suitable root set computed by Algorithm~\ref{algo} is not necessarily connected and the parameter root-connected tree distance width is used only in the analysis of the running time.
\begin{algorithm}[t]
\caption{An fpt algorithm for isomorphism parameterized by root-connected tree distance width~$\isoalgo(G,S')$}
\label{algo}
\begin{algorithmic}[1]
\REQUIRE A graph~$G$ a subset of the vertices~$S'$ and a parameter~$k$.
\ENSURE  A non-empty isomorphism invariant list of sets of~$S$ for which~$S\supset S'$ for which~$\tdw_S(G)\leq k$, or report \textbf{false}, if no such set exists. 
\ENSUREGAP
\IF {$S' = \{\}$}
\FORALL {$v \in V(G)$}
\STATE $S'\leftarrow {v}$
\STATE~$\isoalgo(G,S')$
\ENDFOR
\ELSE
\WHILE {there is a vertex~$v\notin S'$ from which there are~$2k$ internally vertex disjoint paths to~$S'$}
\STATE $S' \leftarrow S' \cup \{v\}$.
\ENDWHILE
\IF  {$\tdw_S(G)\leq k$}
	\STATE print~$S$
\ELSE
	\IF {$|S'| < k$}
		\STATE let~$C_1,\ldots,C_t$ be the components~$C$ of~$G\setminus S'$ for which~$\tdw_{S'}(G[S'\cup C]) > k$
		\IF {$t+|S'|\leq k$ and~$|N(S') \cap C_i| \leq 2k^3+k$  for all~$i\in \{1,\ldots,t\}$}
			\FORALL {$(v_1,\ldots,v_t) \in N(S') \cap C_1 \times \dots \times N(S') \cap C_t$}
			
				\STATE $S'\leftarrow S'\cup \{v_1,\dots,v_t\}$
				\STATE $\isoalgo(G,S')$
			\ENDFOR
		\ENDIF
	\ENDIF
\ENDIF
\ENDIF
\end{algorithmic}
\end{algorithm}
To enumerate root sets, we start with an arbitrary vertex which we add to a previously empty set~$S'$.
While there is a vertex~$v'$ not in~$S'$ from which there are~$2k$ internally vertex disjoint paths to~$S'$ we add~$v'$ to~$S'$.

If~$S'$ induces a tree distance decomposition of width at most~$k$ we output~$S'$. Otherwise, let~$C_1,\ldots,C_t$ be the components~$C$ of~$G\setminus S'$ for which~$\tdw_{S'}(G[S'\cup C]) > k$.
If~$t+|S'|> k$ or~$|N(S') \cap C_i| > 2k^3+k$  for some~$i\in \{1,\ldots,t\}$ we assess~$S'$ as infeasible and backtrack.
In case we do not asses~$S'$ as infeasible, we add from each component~$C_i$ one every vertex contained in~$N(S') \cap C_i$ to~$S'$ and recurse. By branching, we try all possibilities of adding vertices to~$S'$ in this way. That is, for each possible combination of picking one vertex from each~$N(S') \cap C_i$ we recursively continue the process. 

Whenever this process yields a set~$S'$ that induces a tree distance decomposition of width at most~$k$ we declare~$S'$ as a candidate and output it.

Since the vertices that are added to~$S'$ are always added from a set of vertices defined by an isomorphism invariant property, the sets of candidates computed by the algorithm is isomorphism invariant.

\emph{Correctness.} To prove the correctness of the algorithm, it remains to show that this algorithm produces a root set that yields a tree distance width of at most~$k$.
Let~$S$ be a connected set with~$\tdw_S(G) \leq k$. Let~$S''$ be a maximal subset of~$S$ among the sets~$S'$ generated by the algorithm. We show that~$\tdw_{S'}(G) \leq k$.
For contradiction, we assume otherwise.
Since the algorithm initially starts with an arbitrary vertex,~$S''$ is non-empty.
Consider the quotient graph in which the vertices of~$S'$ have been identified. By Lemma~\ref{lem:must:be:in:same:bag}, if there is a vertex~$v$ not in~$S''$ from which there are ${2k}$ internally vertex-disjoint paths to~$S''$ then~$v$ must be in~$S$. The algorithm will add~$v_2$ to~$S''$. Moreover all vertices added  to~$S''$  due their connectivity must be contained in~$S$. Thus the algorithm produces a larger subset contained in~$S$ contradicting maximality of~$S''$. If on the other hand no such vertex~$v_2$ exists, by Lemma~\ref{lem:need:not:touch:fin:cmp} there exists at least one component (Part~\ref{do:not:have:to:touch}), but at most~$k$ (Part~\ref{must:touch}) components~$C$ with~$\tdw_{S''}(G[S''\cup C]) > k$. By Lemma~\ref{lem:no:highly:connected:vertex:pair:implies:bd:degree} for each such component~$|N(S'') \cap C| \leq 2k^3+k$. Since~$S$ is connected, for each such component there is a vertex in~$N(S'') \cap C \cap S$. Thus there is a choice for the algorithm of one vertex for each component that yields a larger subset of~$S$ contradicting maximality of~$S''$.

\emph{Running time.} The running time of Algorithm~\ref{algo} is~$\BigO(k^{2k^3+k} n^{c})$ since each call of the algorithm can be performed in polynomial time, and every vertex, except the first, which is to be added to~$S'$ is chosen from a set of size at most~$2k^3+k$ bounding the number of recursive calls to the algorithm by~$\BigO(k^{2k^3+k} n)$.
\endproofatend 

\hereorinappendixstatement
{
We remark that instead of using Corollary~\ref{cor:restricted:wl:solves:treewidth:no:smooth} we could also use the algorithm from~\cite{DBLP:journals/algorithmica/YamazakiBFT99} which decides for given root sets whether there is an isomorphism mapping the root sets to each other.
}

We remark that our algorithm for the theorem does not necessarily compute a connected root set. In fact the number of connected root sets that yield distance decompositions of smallest width cannot be bounded by an fpt function, and they in particular cannot be enumerated in fpt time. 

\ifhasappendix{ \marginpar{\scriptsize{\vspace{-1cm}{\color{gray}{The proof of this claim is in the Appendix.}}}}}\fi
\differingappendixstatement{This can be seen as follows.}{%
The following paragraphs show that the number of connected root sets that yield distance decompositions of optimal width cannot be bounded by a fixed-parameter tractable function.%
}%
\hereorinappendixstatement{We define the \emph{$(k,p)$-path} as the graph obtained from $P_{k}$
by replacing each edge with $p$ internally vertex-disjoint paths of length $2$ (see \figurename~\ref{fig:kp-path}).
We call the original vertices of $P_{k}$ \emph{black} and the others \emph{white}.
\ifhasappendix\else{The $(k+1,k)$-path is called $k$-path by Ding and Oporowski~\cite{DBLP:journals/dm/DingO96}. Among others, they use these graphs to characterize graphs of bounded strong tree width.}\fi

\begin{theorem}
 For every $k \ge 3$, there exists an infinite family of graphs such that each graph $G$
 in the family satisfies $\ctdw(G) = 2k-1$ and has $\left((n-k)/(k-1)\right)^{k-1}$ connected sets~$S$ for which~$\tdw_{S}(G) = 2k-1$, where~$n$ is the number of vertices of the graph.
\end{theorem}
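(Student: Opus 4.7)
The plan is to take the family of $(k,p)$-paths with $p \ge 4k-2$, which yields an infinite family parameterized by $p$. For each such graph I will verify $\ctdw(G) = 2k-1$ and count the connected root sets achieving this width.

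For the upper bound $\ctdw(G) \le 2k-1$ I would exhibit the zigzag root set $S_0 = \{b_1, w_{1,1}, b_2, w_{2,1}, \ldots, w_{k-1,1}, b_k\}$ of size $2k-1$. Every white vertex outside $S_0$ has both of its black neighbors in $S_0$ and is therefore isolated in $G - S_0$. Consequently, in the minimal tree distance decomposition rooted at $S_0$ (Theorem~\ref{thm:minimal:dist:decomp}) every non-root bag is a singleton, so $\tdw_{S_0}(G) = |S_0| = 2k-1$.

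For the matching lower bound, and simultaneously for the count, I would analyze an arbitrary connected $S$ with $\tdw_S(G) \le 2k-1$. Set $I_S = \{i : b_i \in S\}$. Since any path in $G$ between two black vertices visits every black vertex in between, connectedness of $S$ forces $I_S$ to be a (possibly empty) contiguous interval $[a,b]$. If $I_S \neq [1,k]$, then there is a boundary index $c \in I_S$ with $b_{c+1} \notin S$ (or the symmetric situation with $b_{c-1}$); the white vertices in that gap lying outside $S$ are at depth $1$ from $S$ and, inside $G - S$, all connected through $b_{c+1}$, so they form a single bag of size $p - |S \cap \{w_{c,*}\}|$. Balancing the root-bag constraint $|S| \le 2k-1$ with the boundary-bag constraint (forcing $|S \cap \{w_{c,*}\}| \ge p - 2k + 1$) against the minimum zigzag cost $2(b-a)+1$ inside $[a,b]$ yields a short arithmetic contradiction once $p \ge 4k-2$, ruling out $I_S \neq [1,k]$. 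Therefore $I_S = [1,k]$, and connectedness then forces $|S| \ge k + (k-1) = 2k-1$, so $\tdw_S(G) \ge 2k-1$.

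The count follows at once: every $S$ with $\tdw_S(G) = 2k-1$ must satisfy $I_S = [1,k]$ and $|S| = 2k-1$, i.e., $S$ consists of all black vertices together with exactly one white vertex from each of the $k-1$ gaps. There are $p^{k-1}$ such choices, and substituting $p = (n-k)/(k-1)$ gives the claimed count $((n-k)/(k-1))^{k-1}$. The main obstacle is the arithmetic bookkeeping in the boundary case: one must rule out the scenario in which $S$ concentrates its $2k-1$ budget near a boundary gap (to shrink the corresponding depth-$1$ bag) while failing to cover all black vertices. Choosing $p \ge 4k-2$ makes these two requirements incompatible and closes the argument.
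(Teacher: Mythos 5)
Your construction and witness sets coincide with the paper's: both use the $(k,p)$-path with $p\ge 4k-2$ and exhibit the $p^{k-1}$ zigzag (shortest-path) root sets, so the upper bound and the count are identical. The difference lies in the lower bound $\ctdw(G)\ge 2k-1$. The paper invokes Lemma~\ref{lem:must:be:in:same:bag}: since consecutive black vertices are joined by $p\ge 4k-2$ internally disjoint paths of length two, any strong tree decomposition of width less than $2k-1$ must place all black vertices in a single bag, and this is then shown to be impossible whether or not the black vertices lie in the connected root set. You instead argue directly on the tree distance decomposition: if $b_c\in S$ but $b_{c+1}\notin S$, then all whites of gap $c$ outside $S$ are at distance $1$ and are joined through $b_{c+1}$, so they lie in one bag of size at least $p-|S\cap\{w_{c,*}\}|$; width at most $2k-1$ then forces $|S\cap\{w_{c,*}\}|\ge p-(2k-1)\ge 2k-1$, and together with $b_c\in S$ this gives $|S|\ge 2k$, contradicting the root-bag bound $|S|\le 2k-1$ (the "zigzag cost" you mention is not even needed). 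This arithmetic does close as claimed, the argument is more self-contained than the paper's (it bypasses Lemma~\ref{lem:must:be:in:same:bag}), and it buys something the paper does not assert: an exact characterization, hence an exact count, of the optimal connected root sets, whereas the paper only needs the lower bound of $p^{k-1}$ such sets.

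One case is missing from your write-up: $I_S=\emptyset$. Since the white vertices form an independent set, a connected $S$ without black vertices is a single white vertex, and then there is no boundary index $c\in I_S$ to which your argument applies; moreover the offending bag is not at depth $1$ (the depth-$1$ bag is just the two adjacent black vertices, of size $2$). Instead, note that for $k\ge 3$ some gap adjacent to the gap of $S$ exists whose far black vertex is at distance $3$; its $p$ white vertices are all at distance $2$ and are connected through that black vertex, so they form a single bag of size $p\ge 4k-2>2k-1$. Adding this short case makes your proof complete.
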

\begin{proof}
 Let $G$ be the $(k, p)$-path where $p \ge 4k-2$.
 Let $S$ be the vertices on a shortest path between the leftmost and rightmost black vertices of~$G$.
 It is easy to see that there are $p^{k-1}$ such sets $S$ and $\tdw_{S}(G) = 2k-1$.
 Since $p = (n-k)/(k-1)$, it suffices now to show that $\ctdw(G) \ge 2k-1$.

 Assume that $\tdw_{S'}(G) < 2k-1 \le 4k-2$ for some connected set $S' \subseteq V(G)$.
 Then, by Lemma~\ref{lem:must:be:in:same:bag} and the assumption $p \ge 4k-2$, in any strong tree decomposition of width less than~$2k-1$, all black vertices have to be in the same bag.
 If they are in $S'$, then $G[S']$ contains a path between
 the leftmost and rightmost black vertices.
 This implies $|S'| \ge 2k-1$, a contradiction.
 Thus no black vertex can belong to $S'$.
 Now observe that every connected subgraph of $G$ without black vertices has only one vertex.
 Therefore it is impossible to put all the black vertices into the same bag.
\ifhasappendix \begin{figure}[h] \else \begin{figure}[t]\fi
\centering
\begin{tikzpicture}[scale=\scalefactor,thick]
  \def\xgap{3};
  \def\ygap{0.5};

  \foreach \x in {0,...,5}
  {
    \node (b[\x]) [circle,fill,draw] at (\xgap * \x, 0) {};
  }

  \foreach \x in {0,...,4}
  {
    \foreach \y in {0,...,3}
    {
      \node (w[\x][\y]) [circle,draw] at (\xgap * \x + \xgap/2, \ygap * \y - \ygap * 1.5) {};
    }
  }

  \foreach \x in {0,...,4}
  {
    \foreach \y in {0,...,3}
    {
     \draw (b[\x]) -- (w[\x][\y]);
     \pgfmathtruncatemacro{\tmp}{\x + 1};
     \draw (b[\tmp]) -- (w[\x][\y]);
    }
  }
\end{tikzpicture}

 \caption{The $(6,4)$-path.}
 \label{fig:kp-path}
\end{figure}
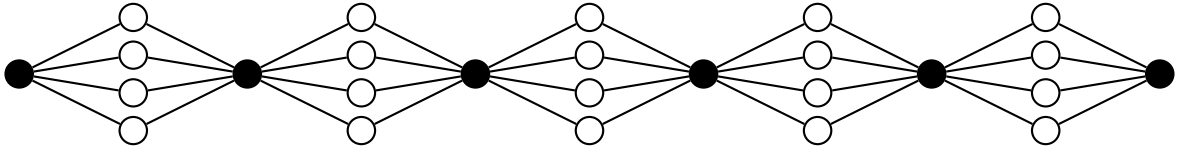%
\end{proof}

The fact that the number of initial sets that achieve a decomposition of optimal width is not bounded by a fixed-parameter tractable functions prevents our isomorphism algorithm from being directly used to compute all optimal initial sets in fixed-parameter tractable time. In fact our isomorphism algorithm does not determine the root-connected tree distance width of the input graphs.
}

\ifhasappendix
{It is also possible to show that the root-connected tree distance width of a graph cannot be bounded in terms of the rooted tree distance width (i.e., only one vertex in the root). \ifhasappendix{ \marginpar{\scriptsize{\vspace{-1cm}{\color{gray}{The proof is in the Appendix.}}}}}\fi
\differingappendixstatement{}
\fi
\hereorinappendixstatement{

The same family of graphs mentioned above can be used to observe that the root-connected tree distance width of a graph cannot be bounded in terms of the rooted tree distance width (i.e., when only one vertex in the root is allowed).}
\hereorinappendixstatement{
 \begin{theorem}
  For every $k \ge 2$, there is a graph~$G$ for which~$\ctdw(G) \le 5$ and $\rtdw(G) \ge k$.
 \end{theorem}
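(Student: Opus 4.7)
The plan is to reuse the $(k,p)$-path family from the preceding theorem, specialized to length parameter $3$. Concretely I would set $G$ to be the $(3,p)$-path with $p = \max(10,k)$. The bound $\ctdw(G) \le 5$ is then immediate from the preceding theorem applied with its length parameter equal to $3$: it already yields $\ctdw(G) = 2\cdot 3 - 1 = 5$, witnessed by the connected root set consisting of the 5-vertex shortest path $\{b_0, w_{0,1}, b_1, w_{1,1}, b_2\}$ between the two extremal black vertices.

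For the matching lower bound $\rtdw(G) \ge p \ge k$ I would do a case analysis on the choice of single-vertex root $r \in V(G)$ and show that every tree distance decomposition with root bag $\{r\}$ contains a bag of size at least $p$. The combinatorial core is a twin-collapse observation: for any consecutive pair of black vertices $b_i, b_{i+1}$, the $p$ white vertices $w_{i,*}$ all share the common two-element neighborhood $\{b_i, b_{i+1}\}$, so in any strong tree decomposition they cannot be spread across several sibling bags without forcing some edge $\{w_{i,j'}, b_i\}$ or $\{w_{i,j'}, b_{i+1}\}$ between non-tree-adjacent bags.

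Applied to the three possible types of root position, this gives three subcases. If $r \in \{b_0, b_2\}$ then the $p$ white vertices adjacent to $r$ lie at level $1$ of the decomposition, share the common level-$2$ neighbor $b_1$, and therefore collapse into one level-$1$ bag of size $\ge p$. If $r = b_1$, the same twin-collapse argument applies to either of the two batches of white vertices incident to $r$, using the opposite black endpoint as the level-$2$ witness. Finally, if $r = w_{i,j}$ is a white vertex, then $b_i$ and $b_{i+1}$ are first forced into a common level-$1$ bag (otherwise the remaining $p-1$ vertices $w_{i,j'}$ admit no valid placement in a strong tree decomposition whose root contains only $w_{i,j}$), and then the analogous twin-collapse argument at the far black vertex on the opposite side of $G$ gathers $p$ white vertices into a single level-$2$ bag.

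The main obstacle is the careful verification of the twin-collapse claim itself: one must check that in every attempted decomposition any relocation of a twin white vertex to a sibling bag or to a deeper level of the tree produces an edge spanning a pair of bags that are not adjacent in the decomposition tree, so that the only way to accommodate the common neighbors $b_i, b_{i+1}$ consistently with the strong tree decomposition condition is to keep all $p$ twin vertices together in a single bag. Once this is established, combining both sides yields $\ctdw(G) \le 5$ and $\rtdw(G) \ge p \ge k$, as required.
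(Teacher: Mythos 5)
Your proposal is correct and uses exactly the witnessing family the paper uses: the paper's entire proof is the one-liner that for $p \ge 2k$ the $(3,p)$-path works, leaving the lower bound implicit (the intended route, following the preceding theorem's proof, is to invoke Lemma~\ref{lem:must:be:in:same:bag} to force all three black vertices into a common bag and then observe that no single-vertex root puts them at a common distance level). Your argument fills in the details by a different mechanism: instead of the $\kcon_{2k}$ machinery you exploit the rooted structure directly, noting that in a tree distance decomposition the bag of a vertex sits at depth equal to its distance from the root, so a black vertex lying one level deeper than its $p$ white twins has a bag with a unique parent, into which all the twins collapse. This is more elementary and self-contained, and it yields the stronger conclusion $\rtdw(G)\ge p$ rather than just $\rtdw(G)\ge k$; the intermediate step in your white-root case forcing $b_i,b_{i+1}$ into one level-$1$ bag is actually superfluous, since the collapse at the far black vertex $b_2$ already produces a level-$2$ bag of size $p$. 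One caution: your ``twin-collapse'' claim is not true for arbitrary strong tree decompositions (if both black neighbors share a bag, the twins may be spread over many sibling bags -- this is exactly what the connected-root decomposition witnessing $\ctdw\le 5$ does), so it should be stated and used only in the rooted setting where the black witness is at a strictly deeper level than the twins, which indeed holds in all three of your root cases.
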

 \begin{proof}
  For any $p \ge 2k$, the $(3,p)$-path  is an example of such a graph.
 \end{proof}
}

Note that, in contrast to this, when we consider only path distance decompositions, the root-connected path distance width of a graph is bounded by a function of the rooted path distance width as shown in~\cite{Otachi2012ISAAC}.

\section{A reduction from strong tree width to maximum degree.}\label{sec:red:from:stw:to:deg}
\differingappendixstatement{}{\section{Proofs of Section~\ref{sec:red:from:stw:to:deg}}}

To define a reduction from isomorphism parameterized by strong tree width to isomorphism parameterized by maximum degree, we first reduce the problem to biconnected graphs relative to an equivalence relation and a suitable type of coloring compatible with the equivalence relation.

Let~$R$ be an equivalence relation on the vertices of a graph~$G$.
We define the \emph{quotient graph} of~$G$ with respect to~$R$ as the graph whose vertex set consists of the equivalence classes and in which two equivalence classes~$E_1$ and~$E_2$ are adjacent if there exists vertices~$v_1\in E_1$ and~$v_2\in E_2$ such that~$v_1$ and~$v_2$ are adjacent. We define \emph{the biconnected components of~$G$ relative to~$R$} as the sets of vertices that comprise biconnected components of the quotient graph, i.e., pre-images of biconnected components under the projection to the quotient graph. A \emph{tuple-coloring} is a map that assigns a color to every linear ordering of vertices in an equivalence class. When concerned with isomorphism of tuple-colored graphs equipped with an equivalence relation, we demand that isomorphisms preserve equivalence classes and the tuple-coloring, that is, an isomorphism must map an equivalence class to an equivalence class and colored ordered tuples to ordered tuples of the same color.

We define the \emph{biconnected component tree} (also called block-cut tree) of a graph~$G$ with respect to~$R$ as the following bipartite graph: the vertices of the one partition class are those equivalence classes that form cut-vertices in the quotient graph. The vertices of the other partition class are the biconnected components of~$G$ relative to~$R$. In the tree there is an edge between an equivalence class and a biconnected component if the corresponding cut-vertex is contained in the corresponding biconnected component in the quotient graph. 

The quotient graph can be constructed in time linear in the number of edges of a graph. Since the biconnected components of the quotient graph can then also be computed in polynomial time, the biconnected components and the component tree with respect to~$R$ can be computed in polynomial time.

\begin{lemma}\label{lem:reduction:to:biconnected}
The isomorphism problem of graphs with an equivalence relation on the vertices 
Turing-reduces to isomorphism of tuple-colored biconnected components of the input graphs relative to the equivalence relation. The running time is bounded by a polynomial in~$n$ and~$k!$, where~$n$ is the size of the input graphs and~$k$ is the size of the largest equivalence class.
\end{lemma}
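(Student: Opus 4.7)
The plan is to adapt the classical block-cut tree reduction for graph isomorphism (cf.~\cite{DBLP:conf/coco/DattaLNTW09}) to our setting of equivalence relations and tuple-colorings. Given an input pair~$(G_1, R_1)$ and~$(G_2, R_2)$, I first form the quotient graphs~$G_i/R_i$ and their associated block-cut trees. Since any isomorphism of equivalence-relation-equipped graphs descends to an isomorphism of the quotient graph and hence of the block-cut tree, I verify that the two block-cut trees are isomorphic as unlabeled trees, rejecting otherwise. I then root each tree canonically, either at its center or by iterating over the~$\BigO(n)$ possible root choices, which contributes only polynomial overhead.

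The main work is a bottom-up traversal of the rooted block-cut tree, computing at each non-root node an invariant parameterized by the ordering of the ``parent interface'' equivalence class. For a non-root cut-equivalence-class~$E$ with children blocks~$B_1,\ldots,B_c$, I set
\[
   \mathrm{Inv}_E(\sigma) \;=\; \{\!\!\{\,\mathrm{Inv}_{B_j}(\sigma) : j=1,\ldots,c\,\}\!\!\}
\]
for each ordering~$\sigma$ of~$E$. For a block~$B$ with parent cut-equivalence-class~$E_p$ and child cut-equivalence-classes~$E_1,\ldots,E_t$, I construct, for each of the at most~$k!$ orderings~$\sigma$ of~$E_p$, a tuple-colored biconnected graph~$\hat B_\sigma$ on the vertex set of~$B$: it inherits the restricted equivalence relation from~$R$; the ordering~$\sigma$ of~$E_p$ is pinned down by assigning it a distinguished tuple-color while all other orderings of~$E_p$ receive a different tuple-color; and for each~$E_i$ and each ordering~$\tau$ of~$E_i$, the tuple-color of~$\tau$ is set to the previously computed value~$\mathrm{Inv}_{E_i}(\tau)$. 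The value~$\mathrm{Inv}_B(\sigma)$ is then defined as the oracle-isomorphism type of~$\hat B_\sigma$.

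To turn the yes/no isomorphism oracle into a producer of ``types'', I maintain a list of representative tuple-colored blocks seen so far; each newly constructed~$\hat B_\sigma$ is compared against the representatives via oracle calls, and is either identified with an existing type or added as a new one. Since the total number of pairs~$(B,\sigma)$ across both input block-cut trees is at most~$2n\cdot k!$, this procedure issues~$\BigO(n^{2}(k!)^{2})$ oracle calls on tuple-colored biconnected graphs, each of which is a biconnected component of one of the input graphs (relative to~$R$) equipped with the tuple-coloring described above. At the root of the block-cut tree the computed invariant depends on no parent ordering, and the two graphs are declared isomorphic if and only if the two root invariants coincide.

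Correctness follows because an isomorphism between the two input graphs restricts to compatible block-isomorphisms that agree on the orderings assigned to each cut-equivalence-class, which is exactly the coherence condition the recursion enforces; conversely, agreement of the root invariants provides, by an inductive unpacking along the block-cut tree, matchings of the blocks that glue together consistently on their shared cut-equivalence-classes. The main technical point requiring care is the propagation of orderings across cut-equivalence-classes shared by several blocks: by exhausting the at most~$k!$ orderings at each such class we avoid committing prematurely to a choice, and since~$k!$ is the dependency allowed by the statement, the entire reduction runs in time polynomial in~$n$ and~$k!$.
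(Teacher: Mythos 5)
Your proposal is correct and follows essentially the same route as the paper's proof: a dynamic programming pass over the block-cut tree of the quotient graph, where for each block and each of the at most~$k!$ orderings of its interface cut-class one uses the oracle (on tuple-colored biconnected components, with child invariants encoded as colors of orderings) to compute an isomorphism type, and types are propagated upward; the paper realizes this as iterative leaf-peeling with recoloring rather than an explicit recursion, and handles rooting by fixing a distinguished block of one graph and iterating over blocks of the other, while you root canonically or by iterating over roots, but these are cosmetic differences. The only small point to tidy up is the case where the canonical root is a cut-equivalence-class rather than a block, where ``root invariants coincide'' should mean existence of orderings~$\sigma,\sigma'$ with equal invariants rather than equality as functions of the ordering.
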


\proofatend
We describe an algorithm that has access to an oracle that can perform isomorphism queries for tuple-colored biconnected components of the input graphs relative to the equivalence relation. 
Let~$G_1$ and~$G_2$ be two input graphs each equipped with an equivalence relation. Since we can perform isomorphism tests for pre-images of components of the quotient graph separately, without loss of generality we can assume that the quotient graphs are connected. Moreover, we can assume that each graph has a distinguished biconnected component and restrict ourselves to finding isomorphisms that match the two components. Indeed, by fixing distinguishing an arbitrary biconnected component of~$G_1$ and iterating over different possible choices for a distinguished component of~$G_2$, we will find an isomorphism, if there exists one. 
We use the distinguished biconnected components as roots for the biconnected component trees.

Our technique from here on essentially performs dynamic programming on these rooted trees.
More precisely, the algorithm proceeds as follows (details are given in Algorithm~\ref{algo:bicon:comp:red}). 
If both input graphs are already biconnected, we use the oracle to determine isomorphism. If one input graph is biconnected while the other one is not, then the graphs are non-isomorphic. Thus, we suppose both input graphs are not biconnected. Consider the leaves of the biconnected component trees. Note that these cannot be cut-vertex classes and thus they are biconnected components. For each leaf~$L$ we proceed as follows. Let~$V_L$ be the unique cut-vertex class of~$L$.
We assign to every ordering~$\sigma$ of~$V_L$ a color that depends only on the isomorphism type of the graph induced by~$L$ in which~$\sigma$ has a special color.
In other words, if two orderings of two cut vertex classes~$V_L$ and~$V_{L'}$ obtain the same color, then there is an isomorphism of the corresponding biconnected components~$L$ and~$L'$ which maps the cut vertex classes to each other while observing their ordering. More formally we compute a function~$\chi$ such that for leaves~$L$ and~$L'$ with orders~$\sigma$ and~$\sigma'$, respectively, we have~$\chi(L,\sigma) = \chi(L',\sigma')$ if and only if said isomorphism exists.
To determine this isomorphism type we use the oracle. Since we may have to check all pairs of leaves, this may require a number of oracle calls quadratic in the number of colored graphs whose isomorphism types are to be determined. 

For every cut-vertex class adjacent to a leaf, we assign to every ordering of this cut-vertex class a new color that depends on its previous color (in case the ordering was already colored) and the multiset of colors computed for this ordering for all leaves. More precisely, the color depends on the multi-set~$\{\!\{\chi(L,\sigma) \mid L \text{ a leaf and } \sigma \text{ is an ordering of } V_L \}\!\}$.

We then remove all vertices contained in leaves but not contained in cut-vertices from the input graphs.

To prevent the names of colors from becoming excessively long strings, we always rename newly arising colors by assigning previously unused integers to them.

After the modification, the graphs are isomorphic as tuple-colored graphs if and only if they were isomorphic before.
By repeating the process at most a linear number of times the process will terminate with one of the graphs being biconnected in the last iteration.

Since in each iteration the modified graphs are isomorphic if and only if the original graphs are isomorphic, the algorithm correctly determines isomorphism. It remains to bound the running time. Since all steps between oracle calls can be performed in time polynomial in~$n$ and~$k!$, it suffices to bound the number of oracle calls. The number of these is at most the number of pairs of biconnected components with distinguished cut-vertex classes equipped with a linear ordering. The number of biconnected components and cut-vertex classes is polynomial in~$n$ while the number of orderings of a cut-vertex class is at most~$k!$. 
\endproofatend
\hereorinappendixstatement{
\begin{algorithm}[t]
\caption{A reduction to biconnected components relative to an equivalence relation $\isored(G_1,G_2,R_1,R_2,B_1,B_2)$}
\label{algo:bicon:comp:red}
\begin{algorithmic}[1]
\REQUIRE Two tuple-colored graphs~$G_1$ and~$G_2$ with equivalence relations~$R_1$ and~$R_2$, such that the quotient graphs are connected and for each a distinguished biconnected component~$B_1$ and~$B_2$, respectively.
An Oracle~$\Orc$ that determines isomorphism of biconnected graphs.
\ENSURE \textbf{Yes} if there is an isomorphism from~$G_1$ to~$G_2$ that maps~$B_1$ to~$B_2$, \textbf{No} otherwise.
\ENSUREGAP

\WHILE {both graphs~$G_1$ and~$G_2$ are not biconnected relative to~$R_1$ and $R_2$, respectively} 

	\STATE compute the biconnected component trees~$T_1$ and~$T_2$ with roots~$B_1$ and~$B_2$
	\FORALL {leaves~$L$}
	 \STATE compute the unique cut-vertex class~$V_L$ adjacent to~$L$
	 \ENDFOR
	 \STATE compute~$M = \{(L,\sigma)\mid L \text{ is a leaf and } \sigma \text{ a linear order of } V_L\}$ 
	 \FORALL {pairs of~$\{((L_1,\sigma_1),(L_2,\sigma_2))\mid (L_i,\sigma_i)\in M\}$}
	 \STATE use~$\Orc$ to decide if there exists an isomorphism from~$L_1$ to~$L_2$ mapping~$V_{L_1}$ to~$V_{L_2}$ respecting the orders~$\sigma_i$
	  \ENDFOR 
	  \STATE compute an isomorphism invariant coloring~$\chi(L,\sigma)$  for every~$(L,\sigma)\in M$
	  \STATE 
	    assign to every ordering~$\sigma$ of a cut-vertex class~$V$ adjacent to a leaf a previously unused color that depends on the multi-set~$\{\!\{\chi(L,\sigma)\mid  (L,\sigma) \in M\}\!\}$
	 \STATE remove all leafs~$L$ from both graphs
\ENDWHILE

\IF {both graphs~$G_1$ and~$G_2$ are biconnected relative to~$R_1$ and $R_2$ respectively}
	\STATE\textbf{return}~$\Orc(G_1,G_2,B_1,B_2)$.
\ELSE
	\STATE \textbf{return}~\textbf{false}
\ENDIF
\end{algorithmic}
\end{algorithm}
}

From the theorem we obtain as corollary that to decide isomorphism of graphs in a hereditary graph class, i.e., a class closed under taking induced subgraphs, it suffices to be able to decide isomorphism of biconnected vertex-colored graphs.

\begin{corollary}\label{cor:red:to:biconnected}
The graph isomorphism problem of vertex-colored graphs in a hereditary graph class~$\mathcal{C}$ polynomial-time Turing-reduces to the isomorphism problem of biconnected vertex-colored graphs in~$\mathcal{C}$.
\end{corollary}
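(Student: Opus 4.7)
The plan is to derive the corollary by instantiating Lemma~\ref{lem:reduction:to:biconnected} with the discrete equivalence relation, and then observing that in this special case all the extra data (tuple-colorings, equivalence classes, blowup factor~$k!$) degenerate to precisely what is needed for a polynomial-time reduction between vertex-colored problems.

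First I would take the input vertex-colored graphs $G_1, G_2 \in \mathcal{C}$ and equip each with the \emph{discrete} equivalence relation in which every equivalence class is a singleton. With this choice the quotient graph is isomorphic (as a vertex-colored graph) to the graph itself, so ``biconnected components relative to $R$'' coincide with the ordinary biconnected components (blocks) of $G_i$. Moreover, since every equivalence class has size one, a linear order of an equivalence class carries no information beyond the class itself, and hence a tuple-coloring is nothing more than an ordinary vertex coloring. Thus any oracle query made by the algorithm of Lemma~\ref{lem:reduction:to:biconnected} is a query about isomorphism of biconnected \emph{vertex-colored} graphs.

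Next I would verify that these oracle queries stay inside $\mathcal{C}$. Each block of a graph $G_i$ is an induced subgraph of $G_i$: if $B$ is a block and $u, v \in V(B)$ are adjacent in $G_i$, then the edge $uv$ lies on a cycle with any other edge of $B$, so $uv \in E(B)$. Since $\mathcal{C}$ is hereditary and $G_i \in \mathcal{C}$, every block of $G_i$ belongs to $\mathcal{C}$. The recoloring performed by the reduction of Lemma~\ref{lem:reduction:to:biconnected} only changes vertex colors (on the remaining cut vertices) and removes vertices; neither operation affects membership in $\mathcal{C}$, because $\mathcal{C}$ is specified at the level of (uncolored) graph structure and is closed under taking induced subgraphs.

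Finally, I would bound the running time. The bound from Lemma~\ref{lem:reduction:to:biconnected} is polynomial in~$n$ and~$k!$, where $k$ is the maximum equivalence class size; here $k = 1$, so $k! = 1$ and the overall reduction is polynomial-time. Composing with the reduction to connected quotient graphs (handled by treating each connected component of the input separately), we obtain a polynomial-time Turing reduction from vertex-colored isomorphism in $\mathcal{C}$ to biconnected vertex-colored isomorphism in $\mathcal{C}$. I do not foresee a real obstacle; the only point worth a sentence of care is the observation that blocks are induced subgraphs, which is what lets the hereditariness of $\mathcal{C}$ close the argument.
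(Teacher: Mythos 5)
Your proposal is correct and follows essentially the same route as the paper: instantiate Lemma~\ref{lem:reduction:to:biconnected} with the equality (singleton-class) equivalence relation, under which tuple-colorings collapse to vertex colorings, blocks relative to the relation are the ordinary blocks, and the $k!$ factor is trivial. Your explicit check that blocks are induced subgraphs (so the oracle queries stay in the hereditary class~$\mathcal{C}$) is a detail the paper leaves implicit, but it is not a different argument.
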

\proofatend
The class of colored graphs is the same as the class of~\emph{tuple-colored} graphs where the equivalence relation is equality.
Thus, the corollary follows from Lemma~\ref{lem:reduction:to:biconnected} since the algorithm described in its proof does not alter the equivalence relation of the input graphs.
\endproofatend

For general graph isomorphism, for every integer~$k$, it is possible to reduce the isomorphism problem to isomorphism of~$k$-connected graphs by simply adding universal vertices adjacent to all other vertices. However, for hereditary graph classes, under application of this technique or similar gadget constructions, the graphs may not necessarily remain within the class. In fact, if there is reduction  to 3-connected graphs analogous to Corollary~\ref{cor:red:to:biconnected}, then graph isomorphism would be polynomial-time solvable in general. This can be seen by considering the isomorphism-complete class of bipartite graphs in which in one bipartition class every vertex has degree at most 2. This class does not contain any 3-connected graphs that have components with more than 2 vertices. 

We will now employ the relation~$\kcon_{2k}$ defined in Section~\ref{sec:distance:decomp}. However, this relation is not necessarily an equivalence relation.
Let~$\kcon_{2k}^{+}$ be the transitive closure of the relation~$\kcon_{2k}$.
It turns out that graphs that are biconnected relative to~$\kcon_{2k}^{+}$ have bounded degree.
\begin{lemma}\label{lem:bd:stw:and:bicon:imply:bd:degree}
For~$k\geq 2$, if a graph~$G$ with~$\stw(G)\leq k$ is biconnected relative to~$\kcon_{2k}^{+}$ then~$G$ has a maximum degree of at most~$ 2k^2 (k-1)+k-1$.
\end{lemma}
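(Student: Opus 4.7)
The plan is to fix a vertex $v$ and a strong tree decomposition of $G$ of width at most $k$, let $B$ be the bag containing $v$, and let $B_1,\ldots,B_t$ be the bags tree-adjacent to $B$. Applying Lemma~\ref{lem:must:be:in:same:bag} transitively, the equivalence class $E_v$ of $v$ under $\kcon_{2k}^{+}$ is entirely contained in $B$, so in particular $|E_v|\le k$. Since $|N(v)\cap B|\le k-1$ and $|N(v)\cap B_i|\le|B_i|\le k$, writing $s$ for the number of tree-adjacent bags that contain a neighbor of $v$ I get $\deg(v)\le (k-1)+sk$, so it suffices to show $s\le 2k(k-1)$.

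To bound $s$ I plan to use biconnectedness of the quotient $H$ of $G$ under $\kcon_{2k}^{+}$ to obtain a ``backup'' for every branch containing a neighbor of $v$. For a branch $V_i$ (the subtree attached through $B_i$), consider the connected components of $G[V_i]$ and focus on a component $C$ that contains some vertex of $N(v)\cap B_i$. Vertices of $V_i$ are only $G$-adjacent to $V_i$ or to $B$, and distinct components of $G[V_i]$ can communicate only through $B$; hence if no vertex of $B\setminus E_v$ had a neighbor in $C$, the classes lying inside $C$ would be disconnected from the rest of $H$ after removing $E_v$, contradicting biconnectedness of $H$. Therefore some $w^{*}\in B\setminus E_v$ has a neighbor $z^{*}\in C$, and together with any $y^{*}\in N(v)\cap C$ this yields a $v$-$w^{*}$ path $v,y^{*},\ldots,z^{*},w^{*}$ whose internal vertices all lie inside the connected set $C$.

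For each $w\in B\setminus E_v$ I then let $a_{v,w}$ count the pairs $(V_i,C)$ for which $C$ is such a $v$-component of $G[V_i]$ admitting $w$ as a backup. Each of the $s$ branches contributes to $a_{v,w}$ for at least one $w$, so $\sum_{w\in B\setminus E_v}a_{v,w}\ge s$. On the other hand, across distinct pairs $(V_i,C)$ the paths constructed above have pairwise disjoint sets of internal vertices (different components live in vertex-disjoint parts of the graph), so $a_{v,w}$ is a lower bound on the number of internally vertex-disjoint $v$-$w$ paths in $G$. Since $w\notin E_v$ forces $v\not\kcon_{2k}w$, this number is at most $2k-1$, giving $a_{v,w}\le 2k-1$. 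Combining with $|B\setminus E_v|\le k-1$ yields $s\le (k-1)(2k-1)\le 2k(k-1)$, and hence $\deg(v)\le (k-1)+2k^2(k-1)=2k^2(k-1)+k-1$ as required.

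The main technical difficulty I anticipate is the component-level backup step: one must carefully phrase it in terms of connected components of $G[V_i]$ rather than equivalence classes, since an equivalence class of $\kcon_{2k}^{+}$ may in principle have vertices spread across several components of $G[V_i]$ (its internal $\kcon_{2k}$-paths can leave $V_i$ through $B$). Once the argument is set up at the level of components of $G[V_i]$, however, the obstruction disappears and the counting through $\kcon_{2k}$ goes through cleanly.
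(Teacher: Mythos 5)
Your proposal is, in substance, the paper's own argument: fix a width-$k$ strong tree decomposition, observe via Lemma~\ref{lem:must:be:in:same:bag} (and the fact that the bags partition $V$) that each $\kcon_{2k}^{+}$-class lies inside a single bag, route for each branch containing a neighbor of $v$ an internally disjoint path from $v$ back to a vertex of $B\setminus E_v$, and pigeonhole over the at most $k-1$ such targets against the threshold $2k$ to contradict non-equivalence; even the arithmetic $(k-1)+k\cdot 2k(k-1)$ matches. The only real difference is how biconnectedness of the quotient is invoked: the paper picks vertices $u_1,u_2$ in two different branches separated by the bag of $v$ and argues that from each neighbor of $v$ some $E_v$-avoiding path to $u_1$ or $u_2$ must cross $B\setminus E_v$, whereas you argue that otherwise $E_v$ would be a cut vertex of the quotient; these are interchangeable phrasings of the same step.

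The one place where your write-up is not complete is exactly the point you flag and then wave away: for the cut-vertex argument you need that every $\kcon_{2k}^{+}$-class meeting a component $C$ of $G[V_i]$ lies entirely inside $C$; if a class straddled two components of $G[V_i]$, it could link $C$ to the rest of the quotient without using $E_v$ or $B\setminus E_v$, and your separation claim would fail. Saying ``the obstruction disappears'' is not a proof, but the claim is true and the missing argument is short: each class lies in a single bag, that bag lies in $V_i$, and if consecutive members $z\kcon_{2k}z'$ of a witnessing chain lay in different components of $G[V_i]$, every $z$--$z'$ path would have to pass through $B$, so there would be at most $|B|\le k<2k$ internally vertex-disjoint such paths, a contradiction. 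You should also note the degenerate case in which no backup exists because nothing of $H$ lies outside $S_C\cup\{E_v\}$ (equivalently $V(G)=C\cup E_v$, so $B=E_v$); biconnectedness gives no contradiction there, but then all neighbors of $v$ outside $B$ lie in the single adjacent bag $B_i$, so $\deg(v)\le (k-1)+k$ and the bound holds trivially. With these two points filled in, your counting goes through as stated.
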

\proofatend
The method of this proof is similar to the proof of Lemma~\ref{lem:no:highly:connected:vertex:pair:implies:bd:degree}. Suppose~$v$ is a vertex in~$G$ of degree~$\deg(v)$ larger than~$ 2k^2 (k-1)+k-1$. Let~$\mathcal{S}$ be a strong tree-decomposition of~$G$ of width at most~$k$.
By Lemma~\ref{lem:must:be:in:same:bag}, vertices equivalent under~$\kcon_{2k}$ must be in the same bag. Consequently vertices that are equivalent under~$\kcon_{2k}^{+}$ must also be in the same bag.
 Since the degree of~$v$ is at least~$2k$, the bag of~$v$ is adjacent to at least two bags which each contain a vertex~$u_1$ and~$u_2$ say.
Moreover, the bag that contains~$v$ separates the vertices~$u_1$ and~$u_2$. Since the quotient graph is biconnected relative to~$\kcon_{2k}^{+}$, from every vertex adjacent to but not equivalent to~$v$ under~$\kcon_{2k}^{+}$ there is a path to~$u_1$ and a path to~$u_2$ that does not include any vertex equivalent to~$v$ under~$\kcon_{2k}^{+}$. One of the paths must include a vertex from the bag of~$v$.

Thus, for every vertex~$w$ adjacent to but not in the same bag as~$v$, there is a path to a vertex contained in the bag of~$v$ that is not equivalent to~$v$. By taking shortest paths that fulfill these conditions, we achieve that for neighbors of~$v$ in different bags, these paths are internally vertex-disjoint.
At most~$k-1$ vertices are contained in the same bag as~$v$. Thus, since a bag contains at most~$k$ vertices, there are at least~$(\deg(v)-(k-1))/k$ internally vertex disjoint paths from~$v$ to vertices which are in the same bag as~$v$ but not equivalent to~$v$.
Since~$(|\deg(v)|-(k-1))/k >  (k-1) 2k$, by the pigeonhole principle, this shows the existence of a vertex in the same bag as~$v$ not equivalent to~$v$~$\kcon_{2k}^{+}$ from which there are~$2k$ internally vertex-disjoint paths to~$v$, yielding a contradiction. 
\endproofatend

\begin{lemma}\label{lem:replace:coloring:with:gadgets}
The isomorphism problem of tuple-colored graphs of degree at most~$d$ with an equivalence relation on the vertices with no equivalence class having more than~$k$ elements reduces to isomorphism of uncolored graphs of degree at most~$\BigO(d+k!)$. The running time is polynomial in~$n$ and~$k!$, where~$n$ is the size of the input graphs.

\end{lemma}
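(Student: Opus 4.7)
The plan is to directly encode the tuple-coloring and the equivalence relation into the graph structure using bounded-degree rigid gadgets, obtaining a many-one reduction to uncolored graph isomorphism whose running time is polynomial in~$n$ and~$k!$. The gadgets must be small enough not to blow the degree beyond~$\BigO(d+k!)$, pairwise distinguishable so that colors are preserved, and internally rigid so that the order in which they are attached to an equivalence class is preserved.

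First, for each color~$c$ appearing in the input, I would construct a \emph{color gadget}~$H_c$: a rigid graph of constant maximum degree with a designated attachment vertex, chosen pairwise non-isomorphic across different colors. For concreteness, let~$H_c$ be a path on~$c$ vertices with a triangle glued to the endpoint opposite the attachment vertex; distinct~$c$ then yield pairwise non-isomorphic rigid rooted graphs of constant maximum degree. Colors would first be canonically renamed to integers $1,\dots,N$ where~$N$ is bounded polynomially in the input size, so the gadgets remain of polynomial size.

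Second, for each equivalence class~$E=\{v_1,\dots,v_m\}$ (with $m\leq k$) and each ordering $\sigma=(v_{i_1},\dots,v_{i_m})$ of~$E$ with color $c=\chi(\sigma)$, I would attach an \emph{ordering gadget}: fresh vertices $q_1,\dots,q_m$ forming a path, with~$q_j$ joined to~$v_{i_j}$, and with the attachment vertex of~$H_c$ identified with~$q_m$. The asymmetric end of the gadget (the one carrying~$H_c$) determines the ordering~$\sigma$ of~$E$, and together with~$H_c$ the whole construction is rigid. To prevent any gadget vertex from being confused with an original vertex, I would subdivide each original edge of~$G$ by inserting two new vertices; original vertices then have neighborhoods consisting only of degree-two subdivision vertices, a structural feature easily avoided in the gadget design.

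After discarding the tuple-coloring and the equivalence relation, one obtains an uncolored graph~$G^*$. Correctness in one direction is immediate: any isomorphism of the original colored graphs extends canonically to~$G^*$. For the converse, pairwise non-isomorphism and rigidity of the gadgets force any isomorphism of~$G^*$ to map color gadgets to color gadgets of the same color, and each ordering gadget to an ordering gadget of matching color; the attachment pattern then guarantees that equivalence classes map to equivalence classes and orderings to orderings of the same color. For the degree bound, each original vertex~$v$ lies in a unique equivalence class of size~$m\leq k$, contributing~$m!\leq k!$ ordering gadgets, each contributing one new edge at~$v$; hence the degree of~$v$ grows from at most~$d$ to at most~$d+k!$, and all gadget vertices have constant degree, giving overall maximum degree~$\BigO(d+k!)$. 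The main obstacle will be the rigidity argument: one must verify that no accidental automorphism of~$G^*$ can permute gadget vertices or re-attach gadgets in a way that fails to correspond to an automorphism of the original tuple-colored graph. This is handled by the combination of edge subdivision (which sharply separates the "original" part from the "gadget" part), the asymmetry built into~$H_c$, and the pairwise non-isomorphism of~$H_c$ across distinct~$c$.
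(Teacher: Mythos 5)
Your gadget (a path attached along each ordering of an equivalence class, with the color encoded at one end) is essentially the paper's comb gadget, but the paper proceeds in two stages: it first reduces to \emph{vertex-colored} graphs of bounded degree (the end vertex of the comb simply keeps the color of the ordering), and only then removes colors by the standard device of attaching to every vertex a tree of equal height encoding its color. You try to do both steps at once with path-plus-triangle gadgets, and the step you defer as ``the main obstacle'' --- that no isomorphism of~$G^*$ can confuse original vertices, subdivision vertices and gadget vertices --- genuinely fails for the gadgets you describe. All gadget-path vertices, all subdivision vertices, and low-degree original vertices have degree~$2$ (note that~$q_1$ and the interior of~$H_c$ are degree-$2$ vertices, so the claim that original vertices are recognizable by having only degree-two neighbors does not single them out), and the color is encoded purely as a distance to a triangle; nothing pins down \emph{where} along these degree-$2$ paths the original vertices sit, so the distances can shift.

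Concretely, let~$G_1$ and~$G_2$ each be a single edge~$\{u,v\}$ with singleton equivalence classes, where in~$G_1$ the two (unique) orderings receive colors whose gadget paths have lengths~$3$ and~$3$, while in~$G_2$ they have lengths~$2$ and~$4$ (any split with the same total works). After subdividing the edge twice and attaching your gadgets, both~$G_1^*$ and~$G_2^*$ are a path on ten vertices with a triangle glued to each endpoint; they are isomorphic, although the tuple-colored inputs are not. So the reduction as stated is incorrect, and the missing rigidity argument is not a routine verification but the crux. The repair is what the paper does: keep the color of the ordering as a vertex color on the comb (correctness is then immediate, since colors cannot be confused), and eliminate vertex colors afterwards by attaching to \emph{every} vertex a tree of the same height encoding its color, which forces any isomorphism to respect the original/gadget partition. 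Your degree and running-time accounting (at most~$k!$ new edges per original vertex, constant degree inside gadgets, polynomially many gadgets of size polynomial in~$n$ and~$k!$) is fine and matches the paper's.
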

\proofatend
We describe a reduction that satisfies the running time requirements to vertex-colored graphs. A standard reduction of attaching trees of equal height encoding vertex colors then reduces the problem further to uncolored graphs (see~\cite{SchweitzerThesis}).

Let~$C$ be an equivalence class. For every ordering of~$C$, we attach a gadget to the vertices of~$C$ encoding the color of this ordering (see Figure~\ref{fig:der:kamm}). We add a path of newly added  vertices~$v_0,v_1,\ldots,v_{|C|}$. Vertex~$v_0$ is colored with the color of the ordering of~$C$. For~$i\in \{1,\ldots,n\}$ an edge from~$v_i$ to the~$i$-th vertex of~$C$ under the ordering is added. 
If this procedure is applied to two graphs, the resulting graphs are isomorphic if and only if the original graphs are.
The maximum degree within the gadgets is~$3$, while the number of new edges added to a vertex originally in the graph is at most~$k!$, where~$k$ is the size of the largest equivalence class.
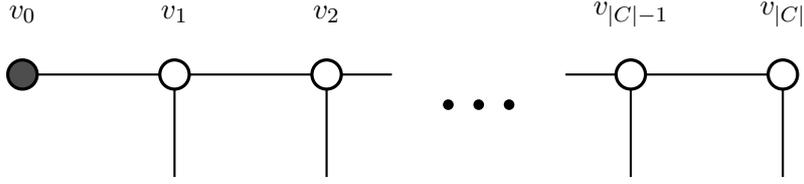
\begin{figure}[t]
\centering
\begin{tikzpicture}[scale=\scalefactor,thick]
\node (v0) at (-2,0) {};
\node (v1) at (0,0) {};
\node (v2) at (2,0) {};
\node (v3l) at (3,0) {};
\node (v3r) at (5,0) {};
\node (v3a) at (3.6,-0.4) {};
\node (v3b) at (  4,-0.4) {};
\node (v3c) at (4.4,-0.4) {};
\node (v4) at (6,0) {};
\node (v5) at (8,0) {};

\node (u1) at (0,-1.5) {};
\node (u2) at (2,-1.5) {};
\node (u3) at (4,-1.5) {};
\node (u4) at (6,-1.5) {};
\node (u5) at (8,-1.5) {};

\draw (v0) -- (v1) -- (v2) -- (v3l);
\draw (v3r)-- (v4)-- (v5);
\draw (v1) -- (u1);
\draw (v2) -- (u2);
\draw (v4) -- (u4);
\draw (v5) -- (u5);

\node[grayvertex] at (v0)  {};
\node[vertex] at (v1)  {};
\node[vertex] at (v2)  {};
\node[circle,fill=black, minimum size=4pt,inner sep= 0pt] at (v3a)  {};
\node[circle,fill=black, minimum size=4pt,inner sep= 0pt] at (v3b)  {};
\node[circle,fill=black, minimum size=4pt,inner sep= 0pt] at (v3c)  {};
\node[vertex] at (v4)  {};
\node[vertex] at (v5)  {};

\draw (-2,0.8) node {$v_0$};
\draw (0,0.8) node {$v_1$};
\draw (2,0.8) node {$v_2$};
\draw (6,0.8) node {$v_{|C|-1}$};
\draw (8,0.8) node {$v_{|C|}$};
\end{tikzpicture}

\caption{The gadget encoding the color of a linear ordering of a set~$C$. The vertex~$v_0$ is colored in the color that was associated with the linear ordering of~$C$.}\label{fig:der:kamm}
\end{figure}%
\endproofatend

For the lemma, it is essential that equivalence classes do not intersect. By coloring partially overlapping sets, it is possible to encode hypergraphs, and thus to encode graphs, even with sets of size at most~2.\ifhasappendix\else{Consequently a reduction for this case would yield a polynomial time algorithm for graph isomorphism.}\fi%
\hereorinappendixstatement{%
The construction in the previous proof yields a constructive method for representing a given permutation group, which must be possible by Frucht's Theorem~\appcite{frucht}. In the meantime, extensive research on representablility of groups as automorphism groups of graphs has been conducted (see \appcite{MR797250}).%
}%
\ifhasappendix\else

\fi%

Together, the lemmas of this section can be used to assemble a reduction from strong tree width to maximum degree.

\begin{theorem}\label{thm:strong:tw:to:max:deg}
There is an fpt Turing-reduction from isomorphism parameterized by strong tree width to isomorphism parameterized by maximum degree.
\end{theorem}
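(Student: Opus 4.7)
The plan is to chain the three lemmas of this section together. Starting from input graphs $G_1, G_2$ with $\stw(G_i) \leq k$, I would first equip each graph with the equivalence relation $\kcon_{2k}^{+}$, the transitive closure of the relation ``there exist $2k$ internally vertex-disjoint paths between the two vertices''. This relation is clearly isomorphism invariant and computable in polynomial time via the Menger problem. By Lemma~\ref{lem:must:be:in:same:bag}, all vertices in a $\kcon_{2k}$-equivalent pair, and hence (by transitivity and the fact that two bags of a strong tree decomposition coincide or are disjoint) all vertices in a $\kcon_{2k}^{+}$-equivalence class, lie in a single bag of any strong tree decomposition of width at most $k$. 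Consequently every equivalence class has size at most $k$.

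Next I apply Lemma~\ref{lem:reduction:to:biconnected} to reduce isomorphism of $(G_1, \kcon_{2k}^{+})$ and $(G_2, \kcon_{2k}^{+})$ to oracle calls on isomorphism of tuple-colored biconnected components of the inputs relative to the equivalence relation. Since the equivalence classes have size at most $k$, the running time is polynomial in $n$ and $k!$, hence fpt in $k$. Strong tree width does not increase when passing to an induced subgraph (a strong tree decomposition restricts by intersecting each bag with the vertex set of the subgraph), so every component appearing as an oracle query still has strong tree width at most $k$. Moreover, each such component is by construction biconnected relative to the restriction of $\kcon_{2k}^{+}$ to its vertex set. Lemma~\ref{lem:bd:stw:and:bicon:imply:bd:degree} then guarantees that each of these components has maximum degree at most $2k^2(k-1)+k-1 = \BigO(k^3)$.

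To reach an oracle for isomorphism parameterized by maximum degree of uncolored graphs, I finally apply Lemma~\ref{lem:replace:coloring:with:gadgets}. The tuple-colored biconnected components have degree $\BigO(k^3)$ and all equivalence classes of size at most $k$, so the lemma produces uncolored graphs whose maximum degree is $\BigO(k^3 + k!)$, a function of $k$ alone; the running time of this step is polynomial in $n$ and $k!$. Composing the three reductions yields an fpt algorithm that solves isomorphism of graphs of strong tree width at most $k$ using oracle calls to isomorphism of uncolored graphs of maximum degree bounded by a computable function of $k$, which is precisely an fpt Turing reduction of the desired form.

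The main obstacle is checking that the three reductions compose cleanly. Specifically, one must verify that the equivalence relation $\kcon_{2k}^{+}$ (and with it the size bound $k$ on equivalence classes) is inherited by each biconnected component produced in the first reduction, so that Lemma~\ref{lem:bd:stw:and:bicon:imply:bd:degree} applies with the correct parameter and the $k!$-blow-up in the last step stays a pure function of $k$. A second point needing attention is that the biconnected-component reduction preserves the structural constraint needed by the degree lemma, namely that the queried component really is biconnected relative to the inherited equivalence relation; this is built into the definition of biconnected component relative to $R$ used in Lemma~\ref{lem:reduction:to:biconnected} and does not require additional argument.
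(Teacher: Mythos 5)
Your proposal is correct and follows essentially the same route as the paper's own proof: compute $\kcon_{2k}^{+}$, reduce to tuple-colored biconnected components via Lemma~\ref{lem:reduction:to:biconnected}, bound their degree with Lemma~\ref{lem:bd:stw:and:bicon:imply:bd:degree}, and remove the colors with Lemma~\ref{lem:replace:coloring:with:gadgets}. The only cosmetic difference is that the paper handles inputs violating the width promise by rejecting whenever an equivalence class exceeds size~$k$, whereas you derive the size bound directly from Lemma~\ref{lem:must:be:in:same:bag} under the assumption $\stw(G)\leq k$.
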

\begin{proof}
We first compute~$\kcon_{2k}$ and the transitive closure~$\kcon_{2k}^{+}$ which can be done in polynomial time. 
If the largest equivalence class has size greater than~$k$ we reject the input as infeasible containing a graph of strong tree width larger than~$k$.
We then reduce via Lemma~\ref{lem:reduction:to:biconnected} to biconnected graphs relative to~$\kcon_{2k}^{+}$. By Lemma~\ref{lem:bd:stw:and:bicon:imply:bd:degree} the biconnected components have bounded degree. By Lemma~\ref{lem:replace:coloring:with:gadgets} we can then reduce the isomorphism problem of the tuple-colored biconnected components to isomorphism of graphs of bounded degree.
\end{proof}

\section{Applications to fpt isomorphism results}

In this section we combine the two results from the previous sections to obtain further fpt isomorphism algorithms.

\begin{theorem}\label{thm:con:stg:tw:fpt}
Graph isomorphism parameterized by connected strong tree width can be solved in fpt time.
\end{theorem}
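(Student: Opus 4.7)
The plan is to combine the biconnection reduction developed for Theorem~\ref{thm:strong:tw:to:max:deg} with the restricted Weisfeiler-Lehman framework of Section~\ref{sec:weisfeiler:lehman:things}. Let $G$ be an input graph with $\cstw(G)\le k$. Since $\cstw(G)\le k$ implies $\stw(G)\le k$, Lemma~\ref{lem:must:be:in:same:bag} guarantees that every $\kcon_{2k}^+$-class has size at most $k$. The first step is to compute $\kcon_{2k}^+$ and apply Lemma~\ref{lem:reduction:to:biconnected}, in fpt time, to reduce the isomorphism problem to isomorphism of tuple-colored biconnected components $G'$ relative to $\kcon_{2k}^+$. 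By Lemma~\ref{lem:bd:stw:and:bicon:imply:bd:degree} each such $G'$ then has maximum degree $O(k^3)$.

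Next, for each such $G'$, I plan to exhibit an isomorphism invariant family $\mathcal{V}(G')$ of fpt size and width bounded by $k^2$ that captures a strong tree decomposition of $G'$ in which every equivalence class is contained in some bag, so that Corollary~\ref{cor:restricted:wl:solves:treewidth:no:smooth} can be applied. The idea is that in any connected strong tree decomposition of $G$ of width at most $k$, every bag is a connected subgraph of size at most $k$ that, by Lemma~\ref{lem:must:be:in:same:bag}, consists of at most $k$ entire $\kcon_{2k}^+$-classes. Such a bag therefore corresponds to a connected subgraph of size at most $k$ in the quotient graph $Q'$ obtained from $G'$ by identifying each class to a single vertex. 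Since classes have at most $k$ elements and $G'$ has degree $O(k^3)$, the quotient $Q'$ has maximum degree $O(k^4)$, and the standard enumeration bound on connected subgraphs in a graph of bounded degree shows that the number of connected vertex subsets of $Q'$ of size at most $k$ is fpt in $k$. Taking $\mathcal{V}(G')$ to be the family of preimages in $V(G')$ of all such connected subgraphs of $Q'$ gives a family of the required size, width, and isomorphism invariance.

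The main obstacle will be the verification that when the bags of a connected strong tree decomposition of $G$ are restricted to $V(G')$, they still project to connected subsets of $Q'$ (and not merely to connected subsets of the full quotient of $G$). This will follow from the standard fact that two distinct biconnected components of a graph share at most one vertex, which must be a cut vertex: any path witnessing the connectivity of a bag inside the full quotient that momentarily leaves $Q'$ into another biconnected component has to re-enter at the same cut vertex, so such detours can be shortcut within $Q'$. Once this is established, applying Corollary~\ref{cor:restricted:wl:solves:treewidth:no:smooth} to each $G'$ together with its equivalence relation, tuple coloring, and $\mathcal{V}(G')$ yields an fpt isomorphism algorithm for $G'$, which combined with the biconnection reduction completes the proof.
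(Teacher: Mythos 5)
Your proposal follows essentially the same route as the paper's proof: reduce to the tuple-colored biconnected components relative to $\kcon_{2k}^{+}$ (the paper packages your use of Lemmas~\ref{lem:reduction:to:biconnected} and~\ref{lem:bd:stw:and:bicon:imply:bd:degree} as Theorem~\ref{thm:strong:tw:to:max:deg}), then take the fpt-size family of small vertex sets that project to connected subsets of the quotient graph, bounded via the degree bound, and apply Corollary~\ref{cor:restricted:wl:solves:treewidth:no:smooth}. Your explicit check that bags restricted to a biconnected component still project to connected subsets of the block's quotient (via the shared-cut-vertex argument) is a detail the paper leaves implicit, but it does not constitute a different approach.
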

\begin{proof}
By Theorem~\ref{thm:strong:tw:to:max:deg} the problem reduces to isomorphism of the tuple-colored biconnected components. 
Let~$G$ be a graph of connected strong tree width at most~$k$ of bounded degree. To apply Corollary~\ref{cor:restricted:wl:solves:treewidth:no:smooth} we first describe a set of possible bags capturing a strong tree decomposition of~$G$ computable in fpt time and having an fpt size bound. 
For this consider the family~$\mathcal{V}(G)$ of sets of size at most~$k$ that project to a connected subgraph 
in the quotient graph relative to~$\kcon_{2k}$. Since 
the degree of~$G$ is bounded, the number of such sets is bounded by an fpt number.   
The theorem now follows from Corollary~\ref{cor:restricted:wl:solves:treewidth:no:smooth}.
\end{proof}

We have shown fixed-parameter tractability for isomorphism with respect to the parameters connected strong tree width and root-connected tree distance width. 
In turns out that these parameters are unrelated, i.e., that neither of these parameters can be bounded by a function of the other. 
\ifhasappendix{ \marginpar{\scriptsize{\vspace{-1cm}{\color{gray}{The proof of this claim is in the Appendix.}}}}}\fi

\hereorinappendixstatement{
\ifhasappendix{The following two theorems show that the parameters connected strong tree width and root-connected tree distance width are unrelated.}\fi
\begin{theorem}
 For every $k \ge 2$, there is a graph~$G$ for which $\ctdw(G) \le 2$ and $\cstw(G) \ge k$.
\end{theorem}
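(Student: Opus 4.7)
The plan is to exhibit the cycle $G = C_{2k}$ on vertices $v_0, v_1, \ldots, v_{2k-1}$ and to verify the two required inequalities directly.

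For $\ctdw(G) \le 2$, I would take $S = \{v_0\}$ as a trivially connected root of size one. The BFS layers from $v_0$ are $V_0 = \{v_0\}$, $V_i = \{v_i, v_{2k-i}\}$ for $1 \le i \le k-1$, and $V_k = \{v_k\}$. The singleton $\{v_k\}$ at level $k$ has both level-$(k-1)$ vertices as neighbors, which by the uniqueness-of-parent condition forces $v_{k-1}$ and $v_{k+1}$ into a common bag at level $k-1$. This merge cascades upward: each merged antipodal pair $\{v_i, v_{2k-i}\}$ has its two neighbors $v_{i-1}$ and $v_{2k-i+1}$ in different candidate level-$(i-1)$ singletons, forcing them to merge as well. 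The cascade terminates at level $1$ because $v_1$ and $v_{2k-1}$ share the single parent bag $\{v_0\}$. The resulting minimum tree distance decomposition has bags $\{v_0\}, \{v_1, v_{2k-1}\}, \{v_2, v_{2k-2}\}, \ldots, \{v_{k-1}, v_{k+1}\}, \{v_k\}$, all of size at most $2$, establishing $\ctdw(G) \le 2$.

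For $\cstw(G) \ge k$, I would consider any connected strong tree decomposition with bags $X_1, \ldots, X_m$ and underlying tree $T$. Since each $X_i$ induces a connected subgraph of $C_{2k}$, it must be an arc of the cycle. The $m$ arcs are arranged in cyclic order, and there are exactly $m$ cycle edges joining endpoints of distinct arcs. Each such boundary edge requires its two arcs to be adjacent in $T$, and together these $m$ forced adjacencies form a cycle of length $m$ in $T$. As $T$ is a tree, this forces $m \le 2$, and so the largest bag has size at least $\lceil 2k/m \rceil \ge k$.

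The argument is short and structural, with essentially no real obstacle; the only point demanding care is the cascade argument for the minimum tree distance decomposition, where one has to verify that the merges at consecutive levels really chain together as claimed, as well as checking that $v_{i-1}$ and $v_{2k-i+1}$ are indeed distinct for all the relevant $i$, which holds because $2k - 2i + 2 \ge 2$ throughout the cascade.
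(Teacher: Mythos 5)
Your proof is correct and uses exactly the same witness as the paper, which simply asserts that the cycle $C_{2k}$ has both properties; your write-up just supplies the routine verification (the layered root-connected tree distance decomposition of width $2$ rooted at a single vertex, and the arcs-must-form-a-cycle-in-$T$ argument forcing at most two bags, hence a bag of size at least $k$).
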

\begin{proof}
 The cycle $C_{2k}$ on~$2k$ vertices has these properties.
\end{proof}

 The \emph{$(k,p)$-comb} is the graph obtained from the path $P_{k}$
 by attaching a copy of~$K_{2,p}$ to each vertex of $P_{k}$ as depicted in \figurename~\ref{fig:kp-comb}.
 We call the vertices of degree~$2$ \emph{white} and the others \emph{black}.
 It is easy to see that for every $k$ and $p$, the $(k,p)$-comb has connected strong tree width at most $3$
 (see \figurename~\ref{fig:kp-comb-cstd}).
\begin{theorem}
 For every $k \ge 3$, there is a graph~$G$ for which~$\cstw(G) \le 3$ and~$\ctdw(G) \ge k$.
\end{theorem}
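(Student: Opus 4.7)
Set $p = 2k$ and let $G$ be the $(k,p)$-comb. I will verify the promised upper bound on $\cstw(G)$ and then carry out the lower bound on $\ctdw(G)$ by a structural analysis of what a connected root set must look like.

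For the upper bound $\cstw(G)\le 3$, use the decomposition suggested by \figurename~\ref{fig:kp-comb-cstd}. The partition consists of central bags $B_i = \{v_i, u_i, w_{i,1}\}$ for $i = 1,\ldots,k$ (each inducing the connected path $v_i\,{-}\,w_{i,1}\,{-}\,u_i$) together with the singleton bags $\{w_{i,j}\}$ for $j\ge 2$. Arrange the $B_i$'s on a tree-path, using the edges $v_iv_{i+1}$ between adjacent central bags, and attach each singleton $\{w_{i,j}\}$ as a leaf of $B_i$. Every edge of $G$ is then either contained in some $B_i$ or runs between adjacent bags, and all bags have size at most $3$.

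For the lower bound, suppose toward a contradiction that there is a connected set $S$ with $\tdw_S(G) = W < k$. Since $S$ is the root bag, $|S|\le W < k$. The core technical step is the following \emph{key claim} about the unique minimum tree distance decomposition rooted at $S$: for every gadget $H_i := \{v_i,u_i,w_{i,1},\ldots,w_{i,p}\}$ with $u_i\notin S$, some bag of the decomposition contains $u_i$ together with every $w_{i,j}\notin S$ adjacent to $u_i$, and in particular has size at least $p - |S\cap H_i| + 1$. The argument is that $u_i$ sits at depth $d(S,u_i)$, its bag has exactly one parent bag in the tree (containing some neighbouring $w_{i,j_0}$ at depth $d(S,u_i)-1$), and every remaining $w_{i,j'}\notin S$ also lies at depth $d(S,u_i)-1$; since sibling bags at the same depth are not adjacent in the tree of the decomposition, each edge $u_iw_{i,j'}$ forces $w_{i,j'}$ into that same parent bag. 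When $|S\cap H_i|\le p-k+2$, the resulting bag has at least $k$ vertices, contradicting $W<k$.

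To finish, use that $H_i$ is attached to the rest of $G$ only through $v_i$. Therefore the index set $I := \{i: S\cap H_i\neq\emptyset\}$ is an interval $[a,b]\subseteq[1,k]$, and since every path in $G$ between $H_i$ and $H_{i'}$ passes through $v_i,v_{i+1},\ldots,v_{i'}$, connectedness of $S$ forces $v_i\in S$ for all $i\in[a,b]$. If $[a,b]\ne[1,k]$, some gadget has $S\cap H_i=\emptyset$ and the key claim (with $0\le p-k+2$) yields a bag of size $\ge p\ge k$, contradiction. So $\{v_1,\ldots,v_k\}\subseteq S$. If any $u_i\notin S$, then $|S\cap H_i|\le |S|-(k-1)\le k-1\le p-k+2=k+2$, and the key claim again forces a bag of size $\ge k$, contradiction. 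Hence $u_i\in S$ for every $i$, and since $S$ is connected and the only neighbours of $u_i$ are the $w_{i,j}$'s, some $w_{i,j_i}\in S$ for each $i$, giving $|S|\ge 3k>k$, the final contradiction. The main obstacle is the careful edge-tracking inside the key claim: one must show that the ``sibling bags are non-adjacent'' argument still applies after accounting for the $w_{i,j}$'s that happen to lie in $S$, and choose $p$ large enough (here $p=2k$) to keep the forced bag of size at least $k$ for every admissible $|S\cap H_i|$.
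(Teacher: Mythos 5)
Your example is the same $(k,p)$-comb (you take $p=2k$, the paper takes $p\ge 2k$) and your upper-bound decomposition is exactly the one in the paper's figure, but your lower bound takes a genuinely different route. The paper picks a gadget disjoint from $S$ (which exists already because the $k$ gadgets are vertex-disjoint and $|S|\le\tdw_S(G)<k$; connectivity of $S$ is not needed), observes that its two black vertices lie at distances $d$ and $d+2$ from $S$ and hence in different bags, and contradicts Lemma~\ref{lem:must:be:in:same:bag}: the $p\ge 2k$ internally disjoint paths force the two blacks into a common bag of any strong tree decomposition of width at most $k$. You instead argue directly that for an untouched gadget all $p$ whites lie at depth $d(S,v_i)+1=d(S,u_i)-1$, and since each is adjacent to $u_i$ and the only bag at that depth adjacent to $u_i$'s bag is its parent, they all pile into that one parent bag, giving width at least $p\ge k$. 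This is self-contained (no appeal to the connectivity lemma), needs only $p\ge k$, and rests on the standard facts that bag depth equals distance from the root set and that same-depth bags are non-adjacent — worth one sentence of justification; the paper's version is shorter because it reuses machinery established for the other results. One small inaccuracy: $u_i$ is at a different depth than the whites, so it is not in that big bag; the bound is $p-|S\cap H_i|$, not $p-|S\cap H_i|+1$ (harmless here).

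Two blemishes you should fix. First, your key claim is false as stated for gadgets that $S$ does intersect: if some $w_{i,j}\in S$ then $d(S,u_i)=1$, the remaining whites sit at depth $1$ or $2$ rather than $d(S,u_i)-1$, and they need not share a bag (they can be spread over children of $u_i$'s bag). The only place you genuinely need the claim — a gadget with $S\cap H_i=\emptyset$ — is exactly where it does hold, so the proof survives, but state it only for untouched gadgets. Second, the proof is already over once you conclude $\{v_1,\dots,v_k\}\subseteq S$, since this contradicts $|S|\le W<k$ outright (indeed the whole interval argument can be replaced by: the $k$ gadgets are disjoint and $|S|<k$, so some gadget is untouched). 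The trailing steps about forcing $u_i\in S$ and $|S|\ge 3k$ are redundant, invoke the key claim precisely in the regime where it can fail, and use the unjustified inequality $|S|-(k-1)\le k-1$; delete them.
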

\begin{proof}
 Let $G$ be the $(k, p)$-comb where $p \ge 2k$.
 Suppose that $\tdw_{S}(G) < k$ for some connected set $S \subseteq V(G)$.
 Since $|S| < k$, there is a copy of $K_{2,p}$ in~$G$ that does not intersect $S$.
 Therefore, the two black vertices in that copy cannot be in the same bag.
 This contradicts Lemma~\ref{lem:must:be:in:same:bag}.
\end{proof}
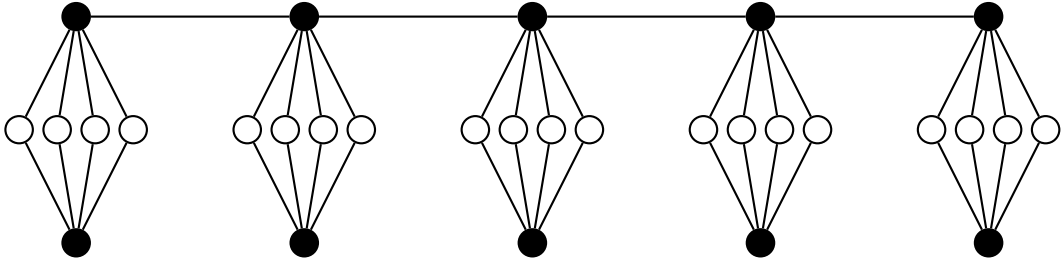
\begin{figure}[t]
\centering
\begin{tikzpicture}[scale=\scalefactor,thick]
  \def\xgap{3};
  \def\ygap{0.5};

  \foreach \x in {0,...,4}
  {
    \node (b[0][\x]) [circle,fill,draw] at (\xgap * \x, 0) {};
    \node (b[1][\x]) [circle,fill,draw] at (\xgap * \x, \xgap) {};
  }

  \foreach \x in {0,...,4}
  {
    \foreach \y in {0,...,3}
    {
      \node (w[\x][\y]) [circle,draw] at (\xgap * \x + \ygap * \y - 1.5 * \ygap, \xgap/2) {};
    }
  }

  \foreach \x in {0,...,4}
  {
    \foreach \y in {0,...,3}
    {
      \draw (b[0][\x]) -- (w[\x][\y]);
      \draw (b[1][\x]) -- (w[\x][\y]);
    }
  }
  \foreach \x in {0,...,3}
  {
    \pgfmathtruncatemacro{\tmp}{\x + 1};
    \draw (b[1][\x]) -- (b[1][\tmp]);
  }
\end{tikzpicture}

 \caption{The $(5,4)$-comb.}
 \label{fig:kp-comb}
\end{figure}%

\begin{figure}[t]
\centering
\begin{tikzpicture}[scale=\scalefactor,thick]
  \def\xgap{3};
  \def\ygap{.75};

  \foreach \x in {0,...,4}
  {
    \node (b[0][\x]) [circle,fill,draw] at (\xgap * \x - .4, \xgap) {};
    \node (b[1][\x]) [circle,fill,draw] at (\xgap * \x + .4, \xgap + .6) {};
  }

  \foreach \x in {0,...,4}
  {
    \node (w[\x][0]) [circle,draw] at (\xgap * \x, \xgap + .3) {};
    \node[shape=rectangle, rounded corners, draw, minimum height=30pt, minimum width=40pt]
      (ibag[\x]) at (w[\x][0]) {};
  }
  \foreach \x in {0,...,4}
  {
    \foreach \y in {1,...,3}
    {
      \node (w[\x][\y]) [circle,draw] at (\xgap * \x + \ygap * \y - 2 * \ygap, \xgap/2) {};
      \node[shape=rectangle, rounded corners, draw, minimum size=16pt]
        (lbag[\x][\y]) at (w[\x][\y]) {};
    }
  }

 \foreach \x in {0,...,3}
  {
    \pgfmathtruncatemacro{\tmp}{\x + 1};
    \draw[line width=2pt] (ibag[\x]) -- (ibag[\tmp]);
  }
  \foreach \x in {0,...,4}
  {
    \foreach \y in {1,...,3}
    {
      \draw[line width=2pt] (ibag[\x]) -- (lbag[\x][\y]);
    }
  }

  \foreach \x in {0,...,4}
  {
    \foreach \y in {0,...,3}
    {
      \draw[color=blue,dashed] (b[0][\x]) -- (w[\x][\y]);
      \draw[color=blue,dashed] (b[1][\x]) -- (w[\x][\y]);
    }
  }
  \foreach \x in {0,...,3}
  {
    \pgfmathtruncatemacro{\tmp}{\x + 1};
    \draw[color=blue,dashed] (b[1][\x]) -- (b[1][\tmp]);
  }
\end{tikzpicture}

 \caption{A connected strong tree decomposition of the $(5,4)$-comb of width 3.}
 \label{fig:kp-comb-cstd}
\end{figure}
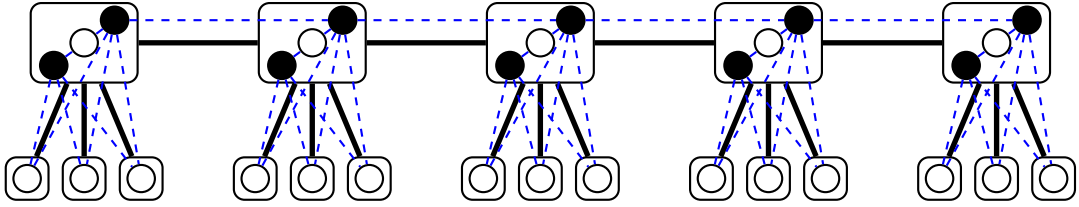%
}

\hereorinappendixstatement{\ifhasappendix{The following lemma is required for the proof of Theorem~\ref{thm:fpt:geod:cyc:tw}.}\fi
\begin{lemma}\label{lem:bound:on:geod:impl:bd:dist}
Let~$G$ be a connected graph with strong tree width at most~$k$ and for which the longest geodesic cycle has length at most~$c$. There exists a strong tree decomposition of width at most~$k$ with the following property: If~$S$ is a proper subset of a bag~$B$ then there is a vertex~$v$ in~$B\setminus S$ that is of distance at most~$c$ from~$S$.
\end{lemma}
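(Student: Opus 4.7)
The plan is to take a strong tree decomposition of $G$ of width at most $k$ minimizing a suitable potential function and argue that this minimizer must satisfy the stated property. Concretely, I would use
\[\Phi(\mathcal{T}) \;=\; \sum_{B \in \mathcal{T}} \Psi(B),\]
where $\Psi(B)$ denotes the minimum weight of a spanning tree in the complete graph on $B$ with edge weights $d_G(\cdot,\cdot)$, and let $\mathcal{T}^*$ denote a strong tree decomposition of width at most $k$ minimizing $\Phi$.

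Suppose for contradiction that $\mathcal{T}^*$ contains a bag $B$ with a proper non-empty subset $S \subsetneq B$ satisfying $d_G(S, B\setminus S) > c$. By the MST cut property any MST of $B$ contains some $(S,B\setminus S)$-cut-crossing edge, and every cut-crossing edge in the complete graph on $B$ has weight at least $d_G(S,B\setminus S) > c$. Hence $\Psi(B) > \Psi(S) + \Psi(B\setminus S)$, and replacing $B$ by the two bags $B_1 := S$ and $B_2 := B\setminus S$ joined by a new tree edge strictly decreases $\Phi$, provided the result is a valid strong tree decomposition of width at most $k$. Since width is clearly preserved, the task reduces to reattaching every former tree-neighbor $N$ of $B$ to one of $B_1, B_2$ without violating any edge of $G$.

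Reattachment is immediate if all $G$-edges from $N$ to $B$ go to only one side of the cut. The problematic case is a neighbor $N$ containing vertices $u_1, u_2$ adjacent in $G$ to $v_1 \in S$ and $v_2 \in B\setminus S$, respectively (and necessarily $u_1 \neq u_2$, since $d_G(v_1, v_2) > c$ rules out $u_1 = u_2$). This is where I would invoke the bound on geodesic cycles: the shortest $v_1 v_2$-path in $G$ together with the detour $v_1 - u_1 - \cdots - u_2 - v_2$ (using the edges $\{v_1, u_1\}$, $\{u_2, v_2\}$ and a shortest $u_1 u_2$-path) yields a closed walk of length strictly greater than $c$. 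The plan is then to either extract from this walk a geodesic cycle of length $> c$, contradicting the hypothesis on $G$, or, failing that, to use the encountered shortcut to transfer $u_1$ or $u_2$ locally to the appropriate side of the split, thereby removing the obstruction.

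The main obstacle is this extraction-or-transfer step: since shortcuts tend to produce \emph{shorter} cycles, one cannot naively iterate to obtain a long geodesic cycle. My expected remedy is to refine the potential lexicographically, e.g. to $(|B \setminus S|, \Phi(\mathcal{T}))$, and argue by induction that each shortcut discovered along the way either strictly decreases the refined potential (by enlarging $S$ to absorb ``close'' vertices while preserving $d_G(S, B\setminus S) > c$) or eventually produces a geodesic cycle of length exceeding $c$. Carrying out this case analysis, and verifying that the resulting vertex transfers preserve the strong tree decomposition invariants at each step, constitutes the technical heart of the proof.
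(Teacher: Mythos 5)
Your plan has a genuine, and in fact self-acknowledged, gap at exactly the point where the geodesic-cycle hypothesis has to do its work. Splitting a $\Phi$-minimal decomposition at a bag $B$ with $d_G(S,B\setminus S)>c$ is only a contradiction if the split again yields a strong tree decomposition, and the obstruction you isolate (a tree-neighbor $N$ containing $u_1$ adjacent to $v_1\in S$ and $u_2$ adjacent to $v_2\in B\setminus S$) is not resolved by your argument. The closed walk $v_1\hbox{--}u_1\hbox{--}\cdots\hbox{--}u_2\hbox{--}v_2\hbox{--}\cdots\hbox{--}v_1$ of length greater than $c$ gives nothing by itself: long closed walks exist in graphs all of whose geodesic cycles are short (it may not even contain a cycle through both $v_1$ and $v_2$ if the two shortest paths intersect, and even a shortest cycle through two prescribed far-apart vertices need not be geodesic). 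Extracting a geodesic cycle of length exceeding $c$, or alternatively justifying the ``transfer'' of $u_1$ or $u_2$ across the split, is precisely the technical content of the lemma, and you explicitly defer it (``constitutes the technical heart of the proof''). The transfer fallback is also problematic as stated: moving $u_1$ out of $N$ can uncover edges from $u_1$ to bags adjacent to $N$ other than $B$, and your lexicographic potential $(|B\setminus S|,\Phi)$ comes with no argument that these local moves terminate in a valid width-$\le k$ decomposition rather than cycling or breaking coverage elsewhere.

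For comparison, the paper never modifies or splits the decomposition. It takes a width-$\le k$ strong tree decomposition whose bag partition is finest, so that the vertex sets corresponding to the subtrees of $T-B$ induce connected subgraphs; for a subset $S\subsetneq B$ with no $G$-neighbor of $S$ in $B\setminus S$, a neighboring subtree then supplies a path joining a neighbor of some $s\in S$ to a neighbor of some $v\in B\setminus S$, hence a cycle meeting both $S$ and $B\setminus S$. A shortest such cycle is shown to be geodesic via a shortcut argument, so it has length at most $c$, which immediately bounds $d_G(S,B\setminus S)$ by $c$. If you want to pursue your exchange-argument route, the missing piece you must supply is a proof that in the configuration above one can always find a genuinely geodesic cycle of length greater than $c$ (or a concrete, invariant-preserving repair of the decomposition), and it is not clear this is easier than the direct argument.
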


\begin{proof}
Let~$G$ be a graph with strong tree width at most~$k$. Consider a  strong tree decomposition of width at most~$k$ for which the bag partition is finest. This implies that the components of the forest obtained by removing a bag induce connected graphs. We show that such a decomposition has the desired property.
Thus let~$S$ be a proper subset of a bag~$B$. If~$S$ has a neighbor in~$B$ this claim is obvious, so we suppose otherwise. If~$B$ were a leaf, then~$B$ would be connected, so we also suppose~$B$ is not a leaf. In this case, let~$B'$ be a neighbor of~$B$ whose subtree connects a vertex~$s$ from~$S$ with some vertex~$v$ in~$B$ outside of~$S$. In the subtree of~$B$ in the graph in which~$B'$ has been deleted, there is a path from~$s$ to~$v$. This implies there is a cycle containing a vertex in~$S$ and a vertex of~$B$ not in~$S$. Let~$C$ be a shortest such cycle. We claim that this cycle is geodesic. If~$C$ were not geodesic, there would exist a shortcut connecting two vertices~$c_1$ and~$c_2$ of~$C$ via a path~$P$ shorter than the distance from~$c_1$ to~$c_2$ on the cycle. Since~$C$ was chosen of minimal length, this implies that the path~$P$ contains a vertex~$x$ of~$B$. Together with two segments of~$C$ the path~$P$ forms two cycles shorter than~$C$. Since~$P$ contains a vertex from~$B$, one of the cycles contains both a vertex from~$S$ and~$B\setminus S$.  This contradicts minimality of~$C$. 
\end{proof}
}
As further examples of applications of our technique we can obtain fixed-parameter tractability results of other parameters as follows.
A geodesic cycle in a graph~$G$ is a cycle~$C$ such that the distance between every two vertices in~$C$ is the same as the distance in~$G$. The chordality of a graph is the length of the longest induced cycles.

\begin{theorem}\label{thm:fpt:geod:cyc:tw}
Graph isomorphism parameterized by the maximum of the length of a geodesic cycle and strong tree width can be solved in fpt time.
\end{theorem}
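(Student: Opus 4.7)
The plan is to mirror the proof of Theorem~\ref{thm:con:stg:tw:fpt}. Let $k$ bound both the strong tree width and the length of the longest geodesic cycle. First, I would invoke Theorem~\ref{thm:strong:tw:to:max:deg} to reduce the problem to isomorphism of tuple-coloured biconnected components that, by Lemma~\ref{lem:bd:stw:and:bicon:imply:bd:degree}, have maximum degree bounded by a function $d(k)$ of $k$. It then suffices to verify the hypothesis of Corollary~\ref{cor:restricted:wl:solves:treewidth:no:smooth} on these reduced instances, namely to exhibit, in fpt time, an isomorphism invariant family $\mathcal{V}(G)$ of vertex sets of size at most $k$ that captures a strong tree decomposition of width at most $k$.

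For the bag family I would appeal to Lemma~\ref{lem:bound:on:geod:impl:bd:dist}, which produces a strong tree decomposition of width at most $k$ such that for every bag $B$ and every proper subset $S$ of $B$ there is a vertex of $B \setminus S$ within graph distance at most $c$ from $S$, where $c \le k$ is the bound on the geodesic cycle length. In particular, each bag of this decomposition admits an enumeration $v_1, v_2, \ldots, v_t$ with $t \le k$ in which, for every $i \ge 2$, the distance from $v_i$ to $\{v_1, \ldots, v_{i-1}\}$ is at most $c$. I would define $\mathcal{V}(G)$ to be the collection of all vertex sets of size at most $k$ that admit such a growth enumeration; this definition depends only on the abstract graph structure, so $\mathcal{V}(G)$ is isomorphism invariant, and by the above it captures the desired strong tree decomposition.

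For the size bound, the hypothesis that $G$ has maximum degree at most $d(k)$ implies that the ball of radius $c$ around any set of size at most $k$ contains at most $k \cdot d(k)^{c}$ vertices; hence the number of growth sequences of length at most $k$ is at most $n \cdot (k \cdot d(k)^{c})^{k-1}$, which is polynomial in $n$ and fpt in $k$. The family can be enumerated in the same asymptotic time by breadth-first search from each starting vertex. Corollary~\ref{cor:restricted:wl:solves:treewidth:no:smooth} then yields the claimed fpt isomorphism algorithm.

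The step I expect to be most delicate is ensuring that the reduction of Theorem~\ref{thm:strong:tw:to:max:deg} plays nicely with the geodesic cycle parameter, since it collapses $\kcon_{2k}^{+}$-equivalence classes and introduces colour-encoding gadgets. One option is to perform the construction of $\mathcal{V}(G)$ on the original graph before the reduction and then push the resulting vertex sets through to the biconnected components as part of the tuple colouring; another is to argue that the geodesic cycle length of each biconnected component after the reduction remains bounded by a function of $k$. Either way, the remainder is a routine combination of the tools developed in Sections~\ref{sec:weisfeiler:lehman:things} and~\ref{sec:red:from:stw:to:deg}.
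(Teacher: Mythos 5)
Your proposal follows essentially the same route as the paper: reduce via Theorem~\ref{thm:strong:tw:to:max:deg} and Lemma~\ref{lem:bd:stw:and:bicon:imply:bd:degree} to bounded-degree tuple-colored biconnected components, use Lemma~\ref{lem:bound:on:geod:impl:bd:dist} to obtain an isomorphism-invariant, fpt-size family of candidate bags defined by a bounded-distance condition (the paper phrases it as ``every proper subset of $B$ has a vertex of $B$ equivalent to or within distance $c$ of it'' rather than your growth enumerations, which is only a cosmetic difference), and conclude with Corollary~\ref{cor:restricted:wl:solves:treewidth:no:smooth} exactly as in Theorem~\ref{thm:con:stg:tw:fpt}. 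The ``delicate step'' you flag is handled in the paper the same way you suggest in your second option — the family is built directly on each biconnected component relative to $\kcon_{2k}$ with $c$ the geodesic-cycle bound of $G$, and no gadget construction is involved — so your argument matches the paper's level of detail there as well.
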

\proofatend
For a biconnected component~$C$ of~$G$ relative to~$\kcon_{2k}$ we let $\mathcal{V}_C$ be the family of vertex sets~$B$ that have the property that for every subset~$S\subset B$ there are vertices~$s\in S$ and~$v\in B\setminus S$ such that~$s$ and~$v$ are equivalent or~$s$ and~$v$ are of distance at most~$c$ where~$c$ is the length of the longest geodesic cycle of~$G$. By the previous lemma this family contains all bags of a strong tree decomposition of~$C$ and by Lemma~\ref{lem:bd:stw:and:bicon:imply:bd:degree} this set is of fpt size. Moreover it can be computed in fpt time.
We conclude the theorem with the same arguments already used in the proof of Theorem~\ref{thm:con:stg:tw:fpt}.
\endproofatend

\begin{corollary}
Graph isomorphism parameterized by the maximum of the chordality and degree can be solved in fpt time.
\end{corollary}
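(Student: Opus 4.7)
The plan is to derive this corollary from Theorem~\ref{thm:fpt:geod:cyc:tw} by bounding its parameter---the maximum of the longest geodesic cycle length and the strong tree width---in terms of the chordality and the maximum degree.

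First, I would observe that every geodesic cycle is induced: a chord $\{u,v\}$ on such a cycle $C$ would give two vertices at distance~$1$ in~$G$ but at distance at least~$2$ along~$C$, contradicting the geodesic property. Hence the length of a longest geodesic cycle is at most the chordality, so that part of the parameter is immediately bounded.

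The real work is to show that bounded chordality combined with bounded maximum degree implies bounded strong tree width. I plan to chain two known results via tree width. First, by the Robertson--Seymour excluded grid theorem, a graph of large tree width contains a large grid minor, and in a graph of bounded maximum degree such a grid minor can be refined to a topological grid minor, i.e., a grid subdivision. The outer boundary of such a subdivision is an induced cycle whose length grows with the side of the grid, so bounded chordality rules out large grid subdivisions and forces bounded tree width. Second, by the classical theorem of Ding and Oporowski, the strong tree width (tree-partition width) is bounded by a function of the tree width and the maximum degree. Chaining these two steps yields that $\stw(G)$ is bounded by a function of the chordality and $\Delta(G)$.

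Putting everything together, the parameter of Theorem~\ref{thm:fpt:geod:cyc:tw} is bounded by a function of $\max(\text{chordality}, \Delta(G))$, and the fpt isomorphism algorithm follows. The main obstacle will be the first step above, which relies on the excluded grid theorem and produces an astronomical---though still computable---dependence on the chordality and the maximum degree; this is good enough for fixed-parameter tractability but gives impractical constants.
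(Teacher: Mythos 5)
Your overall plan coincides with the paper's: reduce to Theorem~\ref{thm:fpt:geod:cyc:tw} by observing that geodesic cycles are induced, so their length is bounded by the chordality, and that bounded chordality together with bounded degree bounds the tree width, whence bounded strong tree width follows from the tree-partition-width result of Ding and Oporowski. The first and third steps are exactly what the paper does (the third is the citation to Ding--Oporowski/Wood). The difference is the middle step, where the paper simply cites known results (Bodlaender and Thilikos; Kosowski, Li, Nisse and Suchan) stating that chordality $c$ and maximum degree $d$ bound the tree width, while you attempt a self-contained proof via the excluded grid theorem.

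That self-contained argument has a genuine gap: after extracting a subdivided grid (or wall) from a bounded-degree graph of large tree width, you assert that its outer boundary is an induced cycle. It is induced in the subdivision, but not necessarily in $G$: the host graph may have many chords between vertices of that boundary cycle, and chordality only forbids long cycles that are chordless \emph{in $G$}. Long cycles riddled with chords are perfectly compatible with small chordality and small degree --- for instance, the strong product of a long path with $K_2$ has maximum degree $5$, every induced cycle a triangle, and yet a Hamiltonian cycle --- so one cannot shortcut a long non-induced cycle into a long induced one in general. Thus bounded chordality does not, by this argument alone, exclude large grid subdivisions; establishing that implication directly amounts to reproving the Bodlaender--Thilikos-type bound, whose known proofs proceed quite differently (and without the excluded grid theorem, so also without its astronomical constants). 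Replacing your grid-minor step by the citation the paper uses repairs the proof and makes it essentially identical to the paper's.
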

\proofatend
The corollary follows from the previous theorem by noting that bounded chordality implies a bound on the length of geodesic cycles.
Moreover if the chordality and the degree is bounded, then the graph has bounded tree width~\appcite{DBLP:journals/dam/BodlaenderT97}, \appcite{DBLP:conf/icalp/KosowskiLNS12} 
and consequently it has bounded strong  tree width  \cite{MR1358539},~\appcite{DBLP:journals/ejc/Wood09}.
\endproofatend

Further applications of the theorems can be obtained by considering the set of potential maximal cliques. A \emph{potential maximal clique} of a graph~$G$ is a set of vertices that is a bag in some minimal tree-decomposition of~$G$.
The set of potential maximal cliques can be computed in polynomial time in the size of the set itself~\cite{DBLP:journals/tcs/BouchitteT02}. 
Moreover, this set is isomorphism invariant and the subset of potential maximal cliques of size~$k$ captures a tree decomposition of~$G$ of minimal width. We can thus apply Corollary~\ref{cor:restricted:wl:solves:treewidth:no:smooth} to the potential maximal cliques.

\begin{theorem} If for a parameterized graph class there is an fpt bound on the number of potential maximal bags then isomorphism is fixed parameter tractable in the maximum of the parameter and the tree width.
\end{theorem}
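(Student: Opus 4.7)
The plan is essentially to assemble the building blocks already provided: the polynomial-time algorithm for enumerating potential maximal cliques due to Bouchitté and Todinca together with Corollary~\ref{cor:restricted:wl:solves:treewidth:no:smooth}. Let $k$ be the parameter (the maximum of the original class parameter and the tree width of the input graph). By hypothesis, the number of potential maximal cliques of the input graph~$G$ is bounded by some function $f(k)$, and by the cited result of Bouchitté and Todinca this set can be computed in time polynomial in the size of~$G$ and the size of the output, hence in fpt time.

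First I would define $\mathcal{V}(G)$ to be the set of potential maximal cliques of~$G$ of size at most $k+1$ (filtering by size can only restrict the family and preserves isomorphism invariance). Since the family of all potential maximal cliques is isomorphism invariant (any isomorphism of~$G$ to~$G'$ maps minimal tree decompositions to minimal tree decompositions, and hence bags of minimal tree decompositions to bags of minimal tree decompositions), the filtered family $\mathcal{V}(G)$ is also isomorphism invariant. The width of $\mathcal{V}(G)$ is at most $k+1$ by construction, and its size is bounded by $f(k)$, so it can be computed in fpt time.

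Next I would verify that $\mathcal{V}(G)$ captures a tree decomposition of~$G$. Since the tree width of~$G$ is at most~$k$, there exists a minimal tree decomposition of~$G$ of width at most~$k$. Every bag of such a decomposition is by definition a potential maximal clique, and has size at most $k+1$, so every bag lies in $\mathcal{V}(G)$. Thus $\mathcal{V}(G)$ captures a tree decomposition of~$G$ in the sense of Corollary~\ref{cor:restricted:wl:solves:treewidth:no:smooth}.

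With $\mathcal{V}(G)$ in hand, Corollary~\ref{cor:restricted:wl:solves:treewidth:no:smooth} applies directly and yields an fpt isomorphism test in the parameter $k+1$, hence in~$k$. The only mild subtlety is the dependence on tree width: we need~$k$ to bound the tree width so that we know minimal tree decompositions of width at most~$k$ exist and their bags are captured by our size-filtered family, which is why the parameter in the theorem statement is the maximum of the class parameter and the tree width. No single step poses a serious obstacle; the argument is essentially a clean composition of the Bouchitté--Todinca enumeration and Corollary~\ref{cor:restricted:wl:solves:treewidth:no:smooth}.
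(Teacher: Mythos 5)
Your proposal is correct and follows essentially the same route as the paper: compute the Bouchitt\'e--Todinca set of potential maximal cliques, observe it is isomorphism invariant and of fpt size by assumption, and feed it into Corollary~\ref{cor:restricted:wl:solves:treewidth:no:smooth}. Your explicit size-filtering to bound the width is a detail the paper handles in the discussion preceding the theorem rather than in the proof itself, but it is the same argument.
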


\proofatend
By Corollary~\ref{cor:restricted:wl:solves:treewidth:no:smooth}, it suffices to compute for all graphs an invariant fpt size family of sets capturing a tree decomposition. 

The set of potential maximal cliques captures a tree decomposition, is isomorphism invariant, can be computed in time polynomial in its size~\cite{DBLP:journals/tcs/BouchitteT02} and is bounded by an fpt function by assumption. It thus satisfies the requirements.
\endproofatend

Equivalently, in the theorem it suffices to have a bound on the number of minimal $s$-$t$-separators, since by a theorem of  Bouchitt{\'e} and Todinca~\cite{DBLP:journals/tcs/BouchitteT02} the number of potential maximal cliques is polynomially bounded the number of minimal $s$-$t$-separators.

Using results from~\cite{DBLP:conf/soda/FominTV14} we can apply the theorem to various graph classes. Indeed, it is known that the number of minimal $s$-$t$-separators is polynomially bounded for weakly chordal, polygonal circle, circular-arc and~$d$-trapezoid graphs (see~\cite[Section 5]{DBLP:conf/soda/FominTV14}). So for all these classes, isomorphism is fixed parameter tractable when parameterized by tree width. Moreover, for weakly chordal graphs and circular-arc graphs the tree width of~$H$-minor free graphs is bounded by a function of the number of vertices in~$H$~(see also \cite[Section 5]{DBLP:conf/soda/FominTV14}), so isomorphism of~$H$-minor free weakly chordal graphs  and~$H$-minor free circular-arc graphs is fixed parameter tractable when parameterized by the size of~$H$.
\section{Conclusion}

In this paper 
we show that, in order to perform isomorphism tests, it suffices to compute an invariant set of potential bags that is comprehensive enough to express a tree decomposition. Indeed, by applying the restricted Weisfeiler-Lehman, we do not need to worry about how to perform the isomorphism test, nor how to compute a decomposition. 

For various other results this means that their isomorphism testing part can be replaced by the general theorem and only the part analyzing the graph class remains. For example in~\cite{DBLP:conf/isaac/Nagoya01} and~\cite{doi:10.1137/0603025} it is shown that for chordal graphs of bounded clique number a tree model can be computed in cubic time. This tree model is unique up to the ordering of children and gives rise to an invariant family of sets of vertices capturing a tree-decomposition.

In~\cite{DBLP:conf/iwpec/BoulandDK12} it is shown that for graphs of tree-depth at most~$k$ the number of vertices that can be chosen as the root in a tree-depth decomposition is bounded by a function of~$k$, recursively applying this gives rise to invariant family of sets of vertices capturing a tree-decomposition.

Furthermore we demonstrated how the theorems can be used in conjunction with the set of potential maximal cliques. The advantage here is that it is known in general that this set can be efficiently computed.

However, our technique can also be applied to tree decompositions, where the long question whether graph isomorphism is fixed parameter tractable when parameterized by tree width remains.

On the other hand, our parameterized reduction to bounded degree is only valid for strong tree width, and whether such a reduction exists for tree width remains open. Finally, as mentioned in the introduction, no non-tractability results are known in this context, and parameterized reduction could be a method to establish a hardness criteria.

\bibliographystyle{abbrv}
\bibliography{main}

\ifhasappendix
\newpage
\begin{appendix}
\noindent{\huge{\textbf{Appendix}}}

\medskip 

The appendix contains proofs of various lemmas and theorems from the main text. It also contains lemmas required for these proofs and the proofs of these lemmas. At the end there is also a separate list of references that only appear in the appendix.

\printproofs
\end{appendix}
\bibliographystyleapp{abbrv}
\bibliographyapp{main}
\fi

\end{document}